\newcommand{\be}{\begin{eqnarray}}
\newcommand{\ee}{\end{eqnarray}}
\newcommand{\bee}{\begin{eqnarray*}}
\newcommand{\eee}{\end{eqnarray*}}
\newtheorem{theorem}{Theorem}[section]
\newtheorem{lemma}[theorem]{Lemma}
\newtheorem{proposition}[theorem]{Proposition}
\newtheorem{assumption}[theorem]{Assumption}
\begin{document}
\title{Constrained portfolio-consumption strategies with uncertain
parameters and borrowing costs
\thanks{The authors thank the Editor-in-Chief, the Associate Editor, and the two referees for
their valuable comments and suggestions. Yang's work is supported by
NNSF of China (Grant No. 11771158, 11801091). Zhou's work is
supported by Singapore MOE (Ministry of Education's) AcRF grant
R-146-000-219-112 and R-146-000-255-114.}}
\author{
Zhou Yang\thanks{%
School of Mathematical Sciences, South China Normal
 University, Guangzhou 510631, China; email: \texttt{%
yangzhou@scnu.edu.cn}} \and
Gechun Liang\thanks{%
Department of Statistics, University of Warwick, Coventry, CV4 7AL,
U.K.; email: \texttt{g.liang@warwick.ac.uk}} \and Chao
Zhou\thanks{Department of Mathematics, Institute of Operations
Research and Analytics, and Suzhou Research Institute, National
University of Singapore , Singapore; email:
\texttt{matzc@nus.edu.sg}}}
\date{}
\maketitle

\begin{abstract}
This paper studies the properties of the optimal
portfolio-consumption strategies in a {finite horizon} robust
utility maximization framework with different borrowing and lending
rates. In particular, we allow for constraints on both investment
and consumption strategies, and model uncertainty on both drift and
volatility. With the help of explicit solutions, we quantify the
impacts of uncertain market parameters, portfolio-consumption
constraints and borrowing costs on the optimal
strategies and their time monotone properties.\\

\emph{Keywords}: Robust utility maximization, explicit solutions,
portfolio-consumption constraints, different borrowing and lending
rates, model uncertainty.\\

\emph{AMS subject classifications (2000)}: 35R60, 47J20, 93E20
\end{abstract}
%%%%%%%%%%%%%%%%%%%%%%%%%%%%%%%%%%%%%%%%%%%%%%%%%%%%%%%%%%%%%%%%%%%%%%%%%%%%%%%%%%%%%%%%%%%%%%%%%%%%%%%%%

\setcounter{equation}{0} \setcounter{section}{0}
\section{Introduction}

One of the fundamental problems in mathematical finance is the
construction of investment and consumption strategies $(\pi,c)$ that
maximize the expected utility of a risk-averse investor:
\begin{equation}\label{problem1}
\max_{(\pi,c)}E\left[\int_0^TU^c(c_s)ds+U(X_T^{\pi,c;\mu,\sigma})\right],
\end{equation}
where $U^c(\cdot)$ and $U(\cdot)$ are the utilities of intertemporal
consumption $c$ and terminal wealth $X_T^{\pi,c;\mu,\sigma}$,
respectively. The market is described by a set of parameters
$(\mu,\sigma)$--the drift and volatility of the risky assets, and
the investor's utilities are often assumed to admit some homothetic
properties (for example, power, logarithm and exponential types).
Due to the market incompleteness arising from the randomness of the
market parameters and the portfolio constraints, the resulting
optimal portfolio is described as the sum of a myopic strategy of
Merton's type and a hedging strategy. The latter is used to
partially hedge the market risk stemming from the market
incompleteness. Both the hedging strategy and the optimal
consumption can be described via the solution of a backward
stochastic differential equation (see \cite{Hu} and \cite{him}).
However, the solution is in general not explicit, and consequently,
there is limited information about the properties of the optimal
strategies.

The purpose of this article is to study the properties of the
optimal investment and consumption strategies when the investor
optimally allocates her wealth among risky and riskless assets and
her consumption. Our model takes consideration of several features
including \emph{model uncertainty, constraints on both investment
and consumption strategies, and borrowing costs}. Under both power
and logarithm utility functions, we characterize the optimal
portfolio-consumption strategies and the worst-case market
parameters using the solutions of nonlinear ODEs, and furthermore,
derive their explicit solutions in one-dimensional setting. The
explicit forms further allow us to study the impacts of uncertain
market parameters, portfolio-consumption constraints and different
borrowing and lending rates on the optimal strategies and their time
monotone properties.

In the vast majority of the literature, it is often assumed that the
investor has a perfect knowledge of the market parameters, and is
able to select her portfolio-consumption strategies without any
constraints. However, constraints such as prohibition of short
selling risky assets and the subsistence consumption are ubiquitous
in reality. On the other hand, the paradigm of expected utility
clearly has some deficiencies: it is not satisfactory in dealing
with model uncertainty as predicted by the famous Ellsberg paradox.
For the above reasons, it is desirable to take constraints on the
portfolio-consumption strategies and uncertainty about the market
parameters into account when studying the optimal strategies. We
argue that the portfolio-consumption strategies must stay in a
closed and convex set, and there are lots of probability models to
describe the market, but none of them are really precise enough.
This leads us to consider the so called \emph{robust utility
maximization} for which the investor worries about the worst-case
scenario\footnote{Note that the worst-case scenario approach implies
that the investor behaves too conservative, which is not always the
case in reality. Recently, an interesting paper \cite{Herrmann}
casts the investor having moderate risks and uncertainty aversions.
We refer to \cite{Herrmann} for a further discussion of this
approach.}, and as opposed to (\ref{problem1}), we solve the
following maxmin problem \bee\label{problem2}
\max_{(\pi,c)\in\mathcal{B}}\min_{(\mu,\sigma)\in\mathcal{A}}E\left[\int_0^TU^c(c_s)ds+U(X_T^{\pi,c;\mu,\sigma})\right],
\eee for an investor with power or logarithm type utilities on both
intertemporal consumption and terminal wealth. See
(\ref{objectfunctional1}) and (\ref{problem}) for further details.

As a first contribution, we show that the functions used to
construct the value processes for power and logarithm utilities (see
$F_P$ and $F_L$ in (\ref{definition of F})) admit saddle points
(Lemma \ref{lemma for saddle point}). The saddle points in turn
characterize locally the optimal portfolio-consumption strategies
and the worst-case parameters. Since the constraint set for the
portfolios-consumption strategies may not be compact, it is not even
clear \emph{ex ante} whether a saddle point exists or not. We tackle
the problem by an approximation procedure using a sequence of saddle
points in compact sets to construct a saddle point in the
non-compact constraint set. We further characterize the optimal
strategies using the solutions of nonlinear ODEs in Theorem
\ref{Theorem1 for power} (for power utility) and Theorem
\ref{Theorem1 for log} (for logarithm utility). We show that even
with random market parameters and portfolio{-consumption}
constraints, the optimal strategies are however deterministic in a
robust utility framework. It is due to the fact that when the
investor worries about the worst-case scenario, the optimal
strategies are given via a deterministic saddle point and the
solution of an associated nonlinear ODE. Eventually, this leads the
investor to implement myopic strategies of Merton's type to optimize
her portfolios as in a complete market. Thus, there is no need for
her to enforce the hedging strategy as opposed to the incomplete
market situation. A similar phenomenon also occurs in
\cite{Neufeld}, where the authors considered a market driven by
L\'evy processes with uncertain parameters but without consumption
and borrowing costs.

Furthermore, in one-dimensional setting we obtain the optimal
portfolio-consumption strategies and the worst-case parameters both
in closed forms. Closed-form solutions seldom exist except for the
standard Merton's
model with constant market parameters %and no
without portfolio-consumption constraints. We find that the explicit
solutions still exist for both power and logarithm utility functions
in the general framework incorporating model uncertainty,
constraints on both investment and consumption strategies, and
borrowing costs.

As the first example, when the uncertain market parameters stay in
an interval set, we obtain a classification of the optimal portfolio
strategies in terms of borrowing and lending rates as well as the
uncertain market parameters. We show that (1) when the investor is
optimistic about the market, meaning that her worst estimation of
the stock's return is still better than the borrowing rate, she will
implement a \emph{borrow-to-buy strategy} to borrow as much as
possible {to approach the optimal strategy without constraint}. (2)
When her worst estimation of the stock's return is between the
borrowing and lending rates, neither borrowing nor lending are
attractive, and the investor will simply put all her money in the
stock, i.e. performing \emph{a full-position strategy}. (3) When the
lending rate is between the best and worst estimations of the
stock's return, the investor will simply put all her money in the
bank account, i.e. performing \emph{a no-trading strategy}. (4) When
the investor is pessimistic about the market, meaning that her best
estimation of the stock's return is still lower than the lending
rate, she will implement a \emph{shortsale strategy} to short sell
the stock as much as possible. See Theorem \ref{theorem_portfolio}
for further details.

As the second example, when the uncertain drift and volatility are
correlated, we further show that the saddle point may become an
interior point of the uncertain parameter set. The worst-case
parameters are then given through the explicit interior saddle
point, as opposed to the bang-bang type of saddle points in the
existing literature. As a result, the optimal portfolio strategy is
also given through the interior saddle point, albeit still in
Merton's type. See Theorem \ref{theorem for ambiguity correlation}
for further details.

The explicit solutions further allows us, for the first time, to
give a systematic study of the consumption plans in various
situations. We argue that the consumption should stay above a
minimum level for subsistence purpose, and be dominated by a
reasonable upper bound for the sake of future consumption and
investment. We show that the investor's optimal consumption will
degenerate to a deterministic process {when} she worries about the
worst-case market scenario (see Theorem \ref{theorem_consumption}
for the power case and Theorem \ref{theorem_consumption for log} for
the logarithm case). By virtue of the closed form solutions, we are
able to obtain the time monotone properties of the optimal
consumption plan (see Proposition \ref{proposition} and Theorem
\ref{theorem_consumption for log}), and quantify the impacts of
different parameters (e.g. borrowing rate, uncertain market
parameters and portfolio-consumption constraints) on the optimal
consumption plan (see Propositions \ref{proposition2} and
\ref{proposition3}).

One of the striking results is that, for the power utility case, the
optimal consumption is not necessarily increasing or decreasing when
the investor lifts her upper bound for consumption. This is because
the investor needs to balance her current consumption and future
consumption and investment when she optimizes her consumption plans.
Increasing the upper bound of consumption means the investor would
consume in a larger constraint set in the future, and increase the
weight of her future utility, thus the investor might decrease her
current consumption level. On the other hand, lifting the upper
bound for consumption also means a larger constraint set from which
the investor makes her current consumption decisions, and in turn
her current consumption level might increase. This two contradicting
factors will offset their impacts by each other, and result in a
non-monotone relationship of optimal consumption with respect to the
upper bound of consumption plans.

Turning to the literature, optimal portfolio-consumption problems in
continuous time were first studied by Merton in 1970s (see
\cite{Merton} for a summary). In a sequence of papers
\cite{Karatzas}, \cite{Karatzas2} and \cite{Lehoczky}, the authors
 developed and generalized Merton's model. In particular,
\cite{Lehoczky} is one of the first arguing that the consumption
must always be above a certain subsistence level, and sometimes
neither borrowing nor shortsale are allowed for trading stocks, so
they imposed constraints on both consumption and investment.
Following this work, the optimal consumption with constraints was
further studied in \cite{Cadenillas}, \cite{Sethi}, and more
recently in \cite{Yi2}, \cite{Yi1} in a complete market setting with
constant market parameters. On the other hand, \cite{Cvitanic},
\cite{Vila}, \cite{Yan} and \cite{Zarihopoulou} among others studied
constrained investment problems for models of varying generality.

Equal borrowing and lending rates is often assumed in the
literature, and as a consequence, the wealth equation is always
linear. However, it is argued in \cite{Berman} that such an
assumption stands in contrast with reality. Subsequently,
\cite{Fleming} introduced the borrowing cost for the utility
maximization problem, and more recently in \cite{Bo}, the authors
took borrowing costs into account in an optimal credit investment
problem.

The early development of model uncertainty went back to \cite{Talay}
where the authors considered a worst-case risk management problem.
Robust utility maximization in mathematical finance started with
\cite{Bordigoni}, \cite{Hernandez} and \cite{Schied}, which mainly
dealt with drift uncertainty. The problem of volatility uncertainty
is much harder, and has been treated via various mathematical tools.
To name a few, duality method was used in \cite{Denis} where the
uncertainty is specified by a family of semimartingales laws.
$G$-expectation was employed in \cite{Forque} in a stochastic
volatility model to treat uncertain correlations. In contrast,
\cite{Matoussi} studied the robust utility maximization problem
under volatility uncertainty via second-order backward stochastic
differential equations, and \cite{Tevzadzae} considered uncertain
drift and volatility using mixed strategies and derived an explicit
solution in a non-traded asset setting. More recently, the results
have been further generalized in \cite{Neufeld} to include drift,
volatility and jump uncertainty, which are parameterized by a set of
L\'evy triples. However, consumption is not considered in the above
works. Two exceptions are \cite{Lin} and more recently
\cite{Biagini}, where the authors worked in a similar framework to
our model, but portfolio-consumption constraints are not treated in
those papers.

In summary, it seems the existing literature mainly focuses on the
investment-consumption models with only parts of the above features:
either with portfolio constraints and market uncertainty or with
consumption constraints and borrowing costs. Although many elegant
mathematical results are achieved in these papers, explicit
solutions and the properties of the optimal strategies rarely exist
except for some special cases. In particular, consumption
constraints make it difficult to obtain explicit solutions, and
almost all
of the explicit solutions with consumption constraints %and an explicit solution
are in the framework of infinite horizon (see \cite{Cadenillas},
\cite{Lehoczky} and \cite{Sethi}).

In contrast, our paper systematically studies constrained
portfolio-consumption strategies under model uncertainty and
borrowing costs in a finite horizon, and quantifies their impacts on
the optimal strategies. We obtain explicit solutions and properties
of the optimal strategies. Although explicit solutions are derived
under one risky asset setting, our method can be applied to study
the multiple risky assets setting as in \cite{Neufeld}, and similar
results will still hold, albeit with more complicated situations.

The paper is organized as follows. Section 2 presents a robust
utility maximization model subject to borrowing costs and
portfolio-consumption constraints in a multiple risky assets
setting. Section 3 solves the associated maxmin problem via a
martingale argument, and characterizes the optimal
portfolio-consumption strategies and the worst-case market
parameters via the solutions of nonlinear ODEs. Section 4 further
obtains their closed form solutions in a single risky asset setting
with different uncertain parameter sets. Section 5 studies the
impacts of the various model parameters on the optimal strategies
and the worst-case parameters. The proof of explicit solutions is
given in the Appendix.

%%%%%%%%%%%%%%%%%%%%%%%%%%%%%%%%%%%%%%%%%%%%%%%%%%%%%%%%%%%%%%%%%%%%%%%%%%%%

\section{The utility maximization model}

\subsection{Uncertain parameters and borrowing costs}

Let $d$ and $d^\prime$ be two positive integers. Let $W$ be a
standard ${d^\prime}$-dimensional Brownian motion defined on a
complete probability space $(\Omega,{\cal F},\mathbb P)$, and
$\mathbb{F}:=\{{\cal F}_t\}_{t\geq 0}$ be the augmented filtration
generated by $W$. The market consists of $d$ risky assets and a
riskless bank account. The price processes of the risky assets
$S_{i}$, $1\leq i\leq d$, solve \be\label{riskyasset}
  dS_{i,s} = \mu_{i,s} S_{i,s} ds + \Sigma_{j=1}^{d^\prime} \sigma^{ij}_s S_{i,s}  dW_{j,s}
\ee for $s\geq 0$, where $\mu:=(\mu_1,\cdots,\mu_d)^{\rm T}$ and
$\Sigma:=(\sigma^{ij})_{d\times {d^\prime}}$ represent the drift and
volatility of the risky assets, respectively.

Consider a small investor in this market. She trades both the risky
assets and riskless bank account, yet she has limited information
about the risky assets' parameters $(\mu,\Sigma)$. The uncertainty
about drift and volatility of the risky assets is parameterized by a
nonempty set with the form \bee\label{admissiblesetb1}
 {\cal B}=\left\{(\mu_s,\Sigma_s)_{s\geq 0}:(\mu,\Sigma)\;\mbox{are}\,\mathbb{F}\text{-progressively measurable},\,
 \text{and}\ (\mu_s,\Sigma_s\Sigma^{\rm T}_s)\in\mathbb{B},\ \mathbb{P}\otimes ds\text{-a.e.} \right\},
\eee where {$\mathbb{B}$ is a convex and compact subset of
$\mathbb{R}^d\times {\cal S}^d_+$, with ${\cal S}^d_+$ being the set
of $d\times d$ positive semi-definite real symmetric matrixes. We
also assume that $\mathbb{B}$ contains at least one element
$(\mu,\Sigma)$ such that $\Sigma\Sigma^{\rm T}$ is positive
definite.} The area of the set $\mathbb{B}$ indicates the amount of
uncertainty. The larger the area, the larger becomes the set of
alternative models. The investor will then become more uncertain
about the model parameters.

In terms of the bank account $B$, the standard assumption of equal
borrowing and lending rates is in contrast with empirical evidence
(see \cite{Berman}). In reality, there always exists a spread
between borrowing and lending rates. Let $R$ and $r$ be the constant
borrowing and lending rates, respectively. When $B$ is positive, the
investor lends with rate $r$. When $B$ is negative, the investor
borrows with rate $R$. {It is nature to assume that $R\geq r$.}
Consequently, the bank account $B$ follows \be \label{riskfreeasset}
 dB_s= (r B^+_s-RB^-_s)\,ds,
\ee where $x^+=\max\{0,x\},\,x^-=\max\{0,-x\}$. Note that $r
B^+_s-RB^-_s=rB_s-(R-r)B^{-}_s$, and therefore the spread $(R-r)$
represents the borrowing cost of the investor. The larger the
spread, the more borrowing cost the investor has to bear. In the
next section, we shall see the introduction of borrowing cost leads
to a nonlinear wealth equation, which is concave in the portfolio
strategies.

%%%%%%%%%%%%%%%%%%%%%%%%%%%%%%%%%%%%%%%%%%%%%%%%%%%%%%%%%%%%%%5
\subsection{Portfolio and consumption constraints}

Let $T>0$ represent the trading horizon, and suppose that the
investor has an initial wealth $x>0$. Let $\pi$ be the
\emph{proportion of her wealth} invested in {the risky assets}, $c$
be her \emph{consumption rate proportional to her wealth}, and
$X^{x;\pi,c,\mu,\Sigma}$ be the wealth process with initial value
$x$, portfolio-consumption strategies $(\pi,c)$ and parameters
$(\mu,\Sigma)$. Using (\ref{riskyasset}) and (\ref{riskfreeasset}),
it follows from the self-financing condition that \bee \nonumber
 X^{x;\pi,c,\mu,\Sigma}_s&=&x+\int_0^s\Big[\,\mu_u^{\rm T}\pi_u
 +r(1-1_d^{\rm T}\pi_u)-(R-r)(1-1_d^{\rm T}\pi_u)^--c_u\,\Big]\,{X^{x;\pi,c,\mu,\Sigma}_u}\,du
 \\[2mm]\label{wealth}
 &&+\,\int_0^s{X^{x;\pi,c,\mu,\Sigma}_u}\pi^{\rm T}_u\Sigma_u \,dW_u,\quad
 s\in[0,T].
\eee

Note that with the borrowing cost, the drift of the wealth equation
is no longer linear but concave in the portfolio strategy $\pi$ {in
the case of $R>r$}.

The investor will select her portfolio-consumption strategies from
the the following admissible set with constraints on both portfolio
and consumption: \bee
 {\cal A}&=&\left\{\,(\pi_s,c_s)_{s\geq 0}:(\pi,c)\;\mbox{are\,$\mathbb{F}$-progressively measurable,}\,
 (\pi_s,c_s)\in \mathbb{A},\, \mathbb{P}\otimes
 ds\text{-a.e.},\right.\\
 &&\ \int_0^T\left(|\pi_s|^2+c_s\right)ds<+\infty,\ \text{and}\ X^{x;\pi,c,\mu,\Sigma}\ \text{satisfies the
 condition (H)}\},
\eee where $\mathbb{A}$ is a convex and closed subset of
$\mathbb{R}^{d+1}$ satisfying that $c\geq 0$. The integrability
condition on $(\pi,c)$ is to guarantee that the wealth process is
well defined, while the condition (H) imposed on the wealth process
$X^{x;\pi,c,\mu,\Sigma}$ depends on the utility maximization problem
that we want to solve, and will be specified in (\ref{ClassD}) in
the next section.

One typical example of the constraint set is
$\mathbb{A}=\bigotimes_{i=1}^d[\,\underline{\pi}_i,\overline{\pi}_i\,]\times[\,\underline{c},\overline{c}\,]$,
where
{$\underline{\pi}_i,\overline{\pi}_i,\underline{c},\overline{c}$ are
constants satisfying} $-\infty\leq \underline{\pi}_i\leq0,\;1\leq
\overline{\pi}_i\leq+\infty,\;0\leq
\underline{c}\leq\overline{c}\leq+\infty$ for $i=1,\cdots,d$. Then,
the portfolio constraint cube
$\bigotimes_{i=1}^d\left[\,\underline{\pi}_i,
\overline{\pi}_i\,\right]$ has the following financial
interpretations: $\left(\sum_{i=1}^d\overline{\pi}_i-1\right)$
represents the maximum proportion of wealth that the investor is
allowed to borrow to invest in the risky assets;
$\left(-\sum_{i=1}^d\underline{\pi}_i\right)$ represents the largest
shortsale position that the investor is allowed to take;
$\underline{\pi}_i=0$ means prohibition of shortsale the $i$th risky
asset; $\overline{\pi}_i=1$ means prohibition of borrowing to invest
in the $i$th risky asset; and
$-\underline{\pi}_i=\overline{\pi}_i=+\infty$ means no portfolio
constrains on the $i$th risky asset. Moreover, the consumption
constraint $[\underline{c},\overline{c}]$ means that the investor
should keep a minimal consumption level $\underline{c}$ for
subsistence purpose, and at the same time, her consumption is also
controlled by an upper bound $\overline{c}$ for the sake of future
consumption and investment.

%%%%%%%%%%%%%%%%%%%%%%%%%%%%%%%%%%%%%%%%%%%%%%%%%%%%%%%%%%%%%%%%%%%
\subsection{The robust utility maximization problem}

The investor has utilities of both intertemporal consumption and
terminal wealth. Given a portfolio-consumption strategy
$(\pi,c)\in\mathcal{A}$, her expected utility is defined as
\be\label{objectfunctional1}
 {\cal J}_i(x;\pi,c,\mu,\Sigma):=\mathbb{E}\left[\,\int_0^T \lambda e^{-\rho s}U_i^c({c_sX^{x;\pi,c,\mu,\Sigma}_s})ds
 +e^{-\rho T}U_i(\,X^{x;\pi,c,\mu,\Sigma}_T)\right],\;\;i=P,L,
\ee where $P,L$ represents, respectively, the power and logarithm
utility {functions, i.e.} $U_P^c(x)=U_P(x)=\frac{1}{p}x^{p}$ with
$p\in(-\infty,0)\cup(0,1)$, and $U_L^c(x)=U_L(x)=\ln x$. Herein,
$\lambda\geq 0$ represents the weight of the intertemporal
consumption relative to the final bequest at maturity $T$, and
$\rho\geq 0$ represents the discount factor.

Since the investor is uncertain about the model parameters
$(\mu,\Sigma)$, she will seek for an optimal portfolio-consumption
strategy that is least affected by model uncertainty. In
anticipation of the worst-case scenario, she solves the following
maxmin problem: Find $(\pi^*,c^*)\in {\cal A}$ and
$(\mu^*,\Sigma^*)\in{\cal B}$ such that
\begin{equation}\label{problem}
  J_i(x):
  =\sup\limits_{(\pi,c)\in{\cal A}}\inf\limits_{(\mu,\Sigma)\in{\cal B}}{\cal J}_i(x;\pi,c,\mu,\Sigma)
  ={{\cal J}_i(x;\pi^*,c^*,\mu^*,\Sigma^*)},\;\;i=P,L,
\end{equation}
where $J_i(\cdot)$ is the value function of the maxmin problem (\ref{problem}), i.e. the maximum worst-case expected utility.

To robustify the optimal portfolio-consumption strategy, the inner
part of the above optimization problem is played by a so called
mother nature who acts maliciously to minimize the expected utility
by choosing the worst-case scenario, whereas the investor aims to
select the best strategy that is least affected by the mother
nature's choice. For this reason, the maxmin problem (\ref{problem})
is also dubbed as the \emph{robust utility maximization problem} in
the literature (see \cite{Neufeld} for example).

To solve the value function of the robust utility maximization
problem (\ref{problem}) and its corresponding worst-case parameters
and optimal portfolio-consumption strategies, we look for a saddle
point strategy $\left\{(\pi^*,c^*),(\mu^*,\Sigma^*)\right\}$ of the
expected utility ${\cal J}_i(x;\pi,c,\mu,\Sigma)$ such that
\be\label{saddle_point} {\cal J}_i(x;\pi,c,\mu^*,\Sigma^*)\leq {\cal
J}_i(x;\pi^*,c^*,\mu^*,\Sigma^*)\leq {\cal
J}_i(x;\pi^*,c^*,\mu,\Sigma) \ee for any admissible
$(\pi,c)\in\mathcal{A}$ and $(\mu,\Sigma)\in\mathcal{B}$. Then, it
follows that
$$\sup\limits_{(\pi,c)\in{\cal
A}}\inf\limits_{(\mu,\Sigma)\in{\cal B}}{\cal
J}_i(x;\pi,c,\mu,\Sigma)={\cal
J}_i(x;\pi^*,c^*,\mu^*,\Sigma^*)=\inf\limits_{(\mu,\Sigma)\in{\cal
B}}\sup\limits_{(\pi,c)\in{\cal A}}{\cal J}_i(x;\pi,c,\mu,\Sigma),$$
and consequently, ${J}_i(x)={\cal J}_i(x;\pi^*,c^*,\mu^*,\Sigma^*)$
is the value function of the maxmin problem $(\ref{problem})$, with
$(\mu^*,\Sigma^*)$ and $(\pi^*,c^*)$ as the worst-case parameters
and the optimal portfolio-consumption strategies, respectively.

To close this section, we further specify the condition (H) in the
admissible set $\mathcal{A}$ associated with the maxmin problem
(\ref{problem}):
\begin{align}\label{ClassD}
\text{ Condition (H)}:=&\left\{E\left[\int_0^T
U^c_i(c_sX_s^{x;\pi,c,\mu,\Sigma})ds\right]<+\infty;\ \text{and the
family}\ U_i\left(X^{x;\pi,c,\mu,\Sigma}_{\tau}\right),\
\right.\notag\\
&\left. \text{for}\ \tau\in[0,T]\ \text{as an}\
\mathbb{F}\text{-stopping time, is uniformly integrable}\right\}.
\end{align}
The integrability condition imposed on
$U_i\left(X^{x;\pi,c,\mu,\Sigma}\right) $ is to include unbounded
portfolio and consumption strategies. This condition is also called
Class (D) condition and appears in \cite{Hu}, where the authors
solve a similar portfolio-consumption problem, but without model
uncertainty, borrowing costs and consumption constraints.

%%%%%%%%%%%%%%%%%%%%%%%%%%%%%%%%%%%%%%%%%%%%%%%%%%%%%%%%%%%%%%%%%%%%%%
\section{Nonlinear ODE characterization of the value functions}

In this section, we apply a martingale argument, firstly introduced
in \cite{Hu} and \cite{him}, to construct a saddle point strategy
$\left\{(\mu^*,\Sigma^*),(\pi^*,c^*)\right\}$ for the expected
utility ${\cal J}_i(x;\pi,c,\mu,\Sigma)$. This will in turn solve the
original maxmin problem (\ref{problem}).

To this end, we aim to construct an $\mathbb{F}$-adapted process
$J_{i,t}^{x;\pi,c,\mu,\Sigma}$, $t\in[0,T]$, satisfying the following
three conditions: For any $(\pi,c)\in\mathcal{A}$ and
$(\mu,\Sigma)\in\mathcal{B}$,

(C1) at the maturity $T$,

$$J_{i,T}^{x;\pi,c,\mu,\Sigma}=\int_0^T \lambda e^{-\rho s}U^c_i\Big(\,c_s X^{x;\pi,c,\mu,\Sigma}_s\,\Big)ds
 +e^{-\rho T}U_i\Big(\,X^{x;\pi,c,\mu,\Sigma}_T\,\Big);$$

(C2) at the initial time $0$, $J_{i,0}^{x;\pi,c,\mu,\Sigma}=J_{i,0}^x$,
which is a constant and is independent of $(\pi,c)$ and
$(\mu,\Sigma)$;

(C3) there exist $(\pi^*,c^*)\in\mathcal{A}$ and
$(\mu^*,\Sigma^*)\in\mathcal{B}$ such that the process
$J_i^{x;\pi^*,c^*,\mu^*,\Sigma^*}$ is a martingale,
$J_i^{x;\pi,c,\mu^*,\Sigma^*}$ is a supermartingale, and
$J_i^{x;\pi^*,c^*,\mu,\Sigma}$ is a submartingale.

{Following the above conditions (C1-C3)}, we then have \bee {\cal
J}_i(x;\pi,c,\mu^*,\Sigma^*)&=& E[J_{i,T}^{x;\pi,c,\mu^*,\Sigma^*}]\leq
J_{i,0}^{x;\pi,c,\mu^*,\Sigma^*}=J_{i,0}^x;\\
{\cal J}_i(x;\pi^*,c^*,\mu^*,\Sigma^*)&=&
E[J_{i,T}^{x;\pi^*,c^*,\mu^*,\Sigma^*}]=J_{i,0}^{x;\pi^*,c^*,\mu^*,\Sigma^*}=J_{i,0}^x;\\
{\cal J}_i(x;\pi^*,c^*,\mu,\Sigma)&=&E[J_{i,T}^{x;\pi^*,c^*,\mu,\Sigma}]\geq
J_{i,0}^{x;\pi^*,c^*,\mu,\Sigma}=J_{i,0}^x. \eee Thus, {the inequalities in
(\ref{saddle_point}) hold, i.e.,
$\left\{(\pi^*,c^*),(\mu^*,\Sigma^*)\right\}$ is a saddle point
strategy of the expected utility ${\cal J}_i(x;\pi,c,\mu,\Sigma)$, and
the value function of the maxmin problem (\ref{problem}) is given by
$J_i(x)=J_{i,0}^x$.}

Next, we construct the process ${\cal J}_i^{x;\pi,c,\mu,\Sigma}$. We
start with the following lemma, which reduces the original maxmin
problem (\ref{problem}), which is an infinite dimensional
optimization problem, to a finite dimensional one. {To facilitate
our discussions below, we introduce two functions
$F_i(\cdot;\cdot,\cdot;\cdot,\cdot),i=P,L$, which characterize the
optimal portfolio-consumption and the worst-case parameters
locally,} \be\nonumber
 &&F_i(x_q;x_\pi,x_c;x_\mu,x_\Sigma)
 \\[2mm]\label{definition of F}
 &:=&\left\{
 \begin{array}{ll}
 {p-1\over 2}\,x_\pi^{\rm T}x_\Sigma x_\pi
 +\Big[\,x_\mu^{\rm T}x_\pi+r(1-1_d^{\rm T}x_\pi)^+-R(1-1_d^{\rm T}x_\pi)^-\,\Big]+{{\lambda\over
 p}e^{-x_q}x_c^{p}-x_c},&i=P;
 \vspace{2mm}\\
 -{1\over 2}\,x_\pi^{\rm T}x_\Sigma x_\pi
 +\Big[\,x_\mu^{\rm T}x_\pi+r(1-1_d^{\rm T}x_\pi)^+-R(1-1_d^{\rm T}x_\pi)^-\,\Big]+\lambda e^{-x_q}\ln x_c- x_c,&i=L;
 \end{array}
 \right.
\ee for $x_q\in \mathbb{R}$, $(x_\pi,x_c)\in \mathbb{A}$ and
$(x_\mu,x_\Sigma)\in \mathbb{B}$. Recall that $\mathbb{A}$ is convex
and closed, and $\mathbb{B}$ is convex and compact.

\begin{lemma}\label{lemma for saddle point}
For $i=P,L$, the function $F_i(x_q;\cdot,\cdot;\cdot,\cdot)$ admits
the following properties.

(i) The function $F_i(x_q;\cdot,\cdot;\cdot,\cdot)$ admits at least
one saddle point
$(\widetilde{x}^*_\pi(x_q),\widetilde{x}^*_c(x_q);\widetilde{x}^*_\mu(x_q),\widetilde{x}^*_\Sigma(x_q))$,
i.e. for any $x_q\in \mathbb{R}$, $(x_\pi, x_c)\in \mathbb{A}$ and
$(x_\mu,x_\Sigma)\in \mathbb{B}$, \be\nonumber
 F_i(x_q;\widetilde{x}^*_\pi(x_q),\widetilde{x}^*_c(x_q);x_\mu,x_\Sigma)
 &\geq& F_i(x_q;\widetilde{x}^*_\pi(x_q),\widetilde{x}^*_c(x_q);\widetilde{x}^*_\mu(x_q),\widetilde{x}^*_\Sigma(x_q))
 \\[2mm]\label{saddle point of F}
 &\geq&
 F_i(x_q;x_\pi,x_c;\widetilde{x}^*_\mu(x_q),\widetilde{x}^*_\Sigma(x_q)).
\ee

(ii) For $x_q\in\mathbb{R}$, let \be\label{defintion of G}
G_i(x_q):=F_i(x_q,\widetilde{x}_\pi^*(x_q),\widetilde{x}_c^*(x_q);\widetilde{x}_\mu^*(x_q),\widetilde{x}_\Sigma^*(x_q)).
\ee Then,
$G_i(x_q),\widetilde{x}_\pi^*(x_q),\widetilde{x}_c^*(x_q),\widetilde{x}_\mu^*(x_q)$
and $\widetilde{x}_\Sigma^*(x_q)$ are locally bounded in
$x_q\in\mathbb{R}$.

(iii) If $p<0$ or $i=L$, then $(\widetilde{x}_c^*(x_q))^{-1}$ is
also locally bounded in $x_q\in\mathbb{R}$.
\end{lemma}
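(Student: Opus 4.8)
The plan is to exploit the fact that, for each fixed $x_q$, the map $F_i$ is concave in the investor's variables $(x_\pi,x_c)$ and convex (indeed affine) in nature's variables $(x_\mu,x_\Sigma)$, so that the only genuine difficulty is the non-compactness of $\mathbb{A}$. First I would record the concavity in $(x_\pi,x_c)$: the quadratic term is concave because $x_\Sigma\succeq 0$ together with $p-1<0$ (respectively the coefficient $-\tfrac12$ when $i=L$); the bracket $r(1-1_d^{\rm T}x_\pi)^+-R(1-1_d^{\rm T}x_\pi)^-$ is concave and piecewise linear in $x_\pi$ precisely because $R\ge r$ makes its slope decrease across the kink at $1_d^{\rm T}x_\pi=1$; and $\tfrac{\lambda}{p}e^{-x_q}x_c^{p}-x_c$ (respectively $\lambda e^{-x_q}\ln x_c-x_c$) is concave in $x_c$ for every admissible $p$. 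Dually, $F_i$ is affine, hence convex, in $(x_\mu,x_\Sigma)$; both $\mathbb{A},\mathbb{B}$ are convex, and $F_i$ is continuous where $x_c>0$ (with value $-\infty$ at $x_c=0$ when $i=L$ or $p<0$).

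For (i) the inner value $h(x_\pi,x_c):=\min_{(x_\mu,x_\Sigma)\in\mathbb{B}}F_i$ is attained (compact $\mathbb{B}$, continuous $F_i$), is concave as an infimum of concave functions, and is upper semicontinuous. The decisive step is a coercivity estimate: evaluating the infimum at the distinguished element $(x_\mu^0,x_\Sigma^0)\in\mathbb{B}$ with $x_\Sigma^0$ positive definite bounds $h$ above by $\tfrac{p-1}{2}\,x_\pi^{\rm T}x_\Sigma^0 x_\pi$ (respectively $-\tfrac12 x_\pi^{\rm T}x_\Sigma^0 x_\pi$) plus terms growing at most linearly in $|x_\pi|$ and dominated by $-x_c$ in the consumption variable; since $\lambda_{\min}(x_\Sigma^0)>0$ this forces $h\to-\infty$ as $|(x_\pi,x_c)|\to\infty$ in $\mathbb{A}$. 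Hence $\sup_{\mathbb{A}}h$ is attained inside some ball, on which $\mathbb{A}_R:=\mathbb{A}\cap\overline{B}(0,R)$ is compact; Sion's minimax theorem on $\mathbb{A}_R\times\mathbb{B}$ then produces a saddle point $(\widetilde x^*_\pi,\widetilde x^*_c;\widetilde x^*_\mu,\widetilde x^*_\Sigma)$. To upgrade it to a saddle point over all of $\mathbb{A}$ I would note that, by the choice of $R$, its max-component already maximizes $h$ over $\mathbb{A}$ and lies in the interior of the ball, so a one-dimensional concavity argument along segments of the convex set $\mathbb{A}$ extends the inequality $F_i(x_q;x_\pi,x_c;\widetilde x^*_\mu,\widetilde x^*_\Sigma)\le F_i(\text{saddle})$ from $\mathbb{A}_R$ to every $(x_\pi,x_c)\in\mathbb{A}$. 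This single-truncation route is equivalent to, and a little cleaner than, building the saddle point as a limit of a sequence of compact-set saddle points.

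For (ii) I would make the coercivity bound uniform in $x_q$. Since $x_q$ enters $F_i$ only through the factor $e^{-x_q}$, restricting $x_q$ to a compact interval $[a,b]$ keeps this factor in $[e^{-b},e^{-a}]\subset(0,\infty)$, so one radius $R=R([a,b])$ confines all maximizers $(\widetilde x^*_\pi(x_q),\widetilde x^*_c(x_q))$ to a single compact set, while $(\widetilde x^*_\mu(x_q),\widetilde x^*_\Sigma(x_q))\in\mathbb{B}$ is bounded for free; local boundedness of $G_i$ then follows from continuity of $F_i$ on the resulting compact domain. For (iii), when $p<0$ or $i=L$ the consumption utility blows up, $U^c_i(x_c)\to-\infty$ as $x_c\downarrow 0$, and it is weighted by $\lambda e^{-x_q}>0$; comparing the saddle value with that of a feasible competitor bounded away from $0$ (or using the first-order condition, which gives $\widetilde x^*_c=(\lambda e^{-x_q})^{1/(1-p)}$ in the interior power case and $\lambda e^{-x_q}$ for the logarithm) yields $\widetilde x^*_c(x_q)\ge\delta>0$ with $\delta$ bounded below on compact $x_q$-intervals, i.e. $(\widetilde x^*_c)^{-1}$ is locally bounded. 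The case $p\in(0,1)$ is excluded exactly because then $U^c_P(0)=0$ is finite and the optimal consumption may sit at $x_c=0$.

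I expect the passage from the compact truncation to the full constraint set in (i) to be the main obstacle. Sion's theorem on $\mathbb{A}_R\times\mathbb{B}$ is standard, but it only certifies optimality against competitors inside $\mathbb{A}_R$; the substance lies in the coercivity estimate, and this is exactly where the standing hypothesis that $\mathbb{B}$ contains some $(x_\mu^0,x_\Sigma^0)$ with $x_\Sigma^0$ positive definite is indispensable. Without it the worst-case volatility could degenerate in a direction of $x_\pi$, the investor's maximization need not be attained, and no saddle point over the non-compact $\mathbb{A}$ would exist. Once these bounds are in hand, making them uniform in $x_q$ for (ii) and combining them with the near-origin blow-up of the consumption utility for (iii) is comparatively mechanical.
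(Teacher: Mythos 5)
Your proposal is correct and rests on exactly the same two pillars as the paper's proof: the minimax theorem for a concave--convex payoff on a compact convex product set, and the coercivity estimate obtained by testing the inner infimum against the distinguished element $(x^0_\mu,x^0_\Sigma)\in\mathbb{B}$ with $x^0_\Sigma$ positive definite (which, as you rightly stress, is where that standing hypothesis is indispensable, and which also delivers the uniform-in-$x_q$ radius for (ii) and the lower bound on $\widetilde{x}^*_c$ for (iii)). Where you genuinely depart from the paper is in how the saddle point is transported from the compact truncation to the non-compact $\mathbb{A}$: the paper runs an approximation scheme over the exhausting sequence $\mathbb{A}_n=\mathbb{A}\cap\{|(x_\pi,x_c)|\leq n\}$, shows the values $F^n_i$ are monotone and bounded below, traps all the truncated saddle points in one compact $\mathbb{A}_M\times\mathbb{B}$, extracts a convergent subsequence and passes to the limit in both saddle inequalities; you instead perform a single truncation at a coercivity-determined radius and upgrade the max-inequality by the fact that a maximizer of the concave function $F_i(\cdot;\widetilde{x}^*_\mu,\widetilde{x}^*_\Sigma)$ over a relative neighbourhood in a convex set is a global maximizer over that set. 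Your route is shorter and avoids the subsequence argument; its one delicate point is the justification that the saddle's max-component lies strictly inside the ball, which you correctly reduce to its being a maximizer of $h=\min_{\mathbb{B}}F_i$ over $\mathbb{A}_R$ (hence over $\mathbb{A}$, by the choice of $R$) combined with coercivity of $h$. The paper's sequential route buys, at no extra conceptual cost, the uniform containment $\widetilde{x}^n_c\geq 1/M$ that it then reuses verbatim for (iii); in your version the analogous bound comes from comparing the saddle value with a fixed feasible competitor having $x^0_c>0$, which is the right argument --- note that the parenthetical appeal to the unconstrained first-order condition $\widetilde{x}^*_c=(\lambda e^{-x_q})^{1/(1-p)}$ would not suffice on its own, since the constrained optimizer may sit on the boundary of $\mathbb{A}$ in the $x_c$-direction.
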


\begin{proof}
\emph{Step 1.} We first prove the assertion (i) when the set
$\mathbb{A}$ is compact. Indeed, for fixed $x_q\in \mathbb{R}$, it
is clear that the function $F_i(x_q;\cdot,\cdot;\cdot,\cdot)$ is
concave with respect to $(x_\pi, x_c)$, and convex (accurately
linear) with respect to $(x_\mu,x_\Sigma)$. Since $\mathbb{A}$ and
$\mathbb{B}$ are convex and compact, we may apply the minmax theorem
(see Theorem B on pp. 131 in \cite{Roberts} or section 3 in
\cite{Neufeld}), and deduce that there exists a saddle point
$(\widetilde{x}^*_\pi,\widetilde{x}^*_c;\widetilde{x}^*_\mu,\widetilde{x}^*_\Sigma)$
such that~\eqref{saddle point of F} holds. Moreover, the compactness
of $\mathbb{A}$ and $\mathbb{B}$ implies that
$\widetilde{x}^*_\pi,\widetilde{x}^*_c,\widetilde{x}^*_\mu,\widetilde{x}^*_\Sigma$
are bounded.\smallskip

\emph{Step 2.} If the set $\mathbb{A}$ is not compact, for any
positive integer $n$, let
$\mathbb{A}_n:=\mathbb{A}\cap\{(x_\pi,x_c):|(x_\pi,x_c)|\leq n\}$.
It is clear that we can choose a large enough positive integer $N$
such that $\mathbb{A}_n$ is non-empty for any $n\geq N$ and, without
loss of generality, we may suppose that $n\geq N$ below. Thanks to
Step 1, we know that the function $F_i(x_q;\cdot,\cdot;\cdot,\cdot)$
has at least one saddle point
$(\widetilde{x}^n_\pi,\widetilde{x}^n_c;\widetilde{x}^n_\mu,\widetilde{x}^n_\Sigma)$
in $\mathbb{A}_n\times\mathbb{B}$, and we denote
$F_i(x_q;\widetilde{x}^n_\pi,\widetilde{x}^n_c;x^n_\mu,x^n_\Sigma)$
by $F^n_i$.

Next, we prove that $F^n_i$ is nondecreasing with respect to $n$ and
has a uniformly lower bound for any $n\geq N$. To this end, note
that \be\nonumber
 F^n_i&=&\inf\limits_{(x_\mu,x_\Sigma)\in \mathbb{B}}\sup\limits_{(x_\pi,x_c)\in \mathbb{A}_n} F_i(x_q;x_\pi,x_c;x_\mu,x_\Sigma)
 =\sup\limits_{(x_\pi,x_c)\in \mathbb{A}_n}\inf\limits_{(x_\mu,x_\Sigma)\in \mathbb{B}} F_i(x_q;x_\pi,x_c;x_\mu,x_\Sigma)
 \\[2mm]\label{saddle point of Fn}
 &=&\sup\limits_{(x_\pi,x_c)\in \mathbb{A}_n} F_i(x_q;x_\pi,x_c;\widetilde{x}^n_\mu,\widetilde{x}^n_\Sigma)
 =\inf\limits_{(x_\mu,x_\Sigma)\in \mathbb{B}} F_i(x_q;\widetilde{x}^n_\pi,\widetilde{x}^n_c;x_\mu,x_\Sigma).
\ee From the first equality in~\eqref{saddle point of Fn}, we deduce
that $F^n_i$ is nondecreasing with respect to $n$. Furthermore, the
second equality in \eqref{saddle point of Fn} implies that, for any
$n\geq N$ and  $(x^0_\pi,x^0_c)\in\mathbb{A}_N$,
\be\label{lowerbound}
  F^n_i\geq F^N_i\geq
  \inf\limits_{(x_\mu,x_\Sigma)\in \mathbb{B}}F_i(x_q;x^0_\pi,x^0_c;x_\mu,x_\Sigma)
  >-\infty,
\ee where we have used the fact that $\mathbb{B}$ is compact in the
last inequality. Until now, we have proved that $F^n_i$ is
nondecreasing with respect to $n$ and has a uniformly lower bound
for any $n\geq N$.

\emph{Step 3.} We prove that there exists a large enough positive
integer $M$ such that $(\widetilde{x}^n_\pi,\widetilde{x}^n_c)\in
\mathbb{A}_M$ for any $n\geq M$. Indeed, we may choose a positive
constant $\epsilon$ and a positive-definite matrix $x^0_\Sigma$ such
that $(x^0_\mu,x^0_\Sigma)\in\mathbb{B}$ and $x_\pi^{\rm
T}x_\Sigma^0 x_\pi\geq \epsilon|x_\pi|^2$ for any $x_\pi\in
\mathbb{R}^d$. Hence, as $x_c\rightarrow0^+$ when $p<0$ or $i=L$, or
$|(x_\pi,x_c)|\rightarrow+\infty$, the compactness of $\mathbb{B}$
implies that \be\label{coercivity}
 \left\{
 \begin{array}{ll}
 F_P(x_q;x_\pi,x_c;x^0_\mu,x^0_\Sigma)
 \leq \left({p-1\over 2}\,\epsilon|x_\pi|^2+C|x_\pi|\right)+\left({{\lambda\over p}e^{-x_q}x_c^{p}-x_c}\right)\rightarrow-\infty,
 \vspace{2mm}\\
 F_L(x_q;x_\pi,x_c;x^0_\mu,x^0_\Sigma)
 \leq \left(-{\epsilon\over2}|x_\pi|^2+C|x_\pi|\right)+\left(\lambda e^{-x_q}\ln x_c- x_c\right)\rightarrow-\infty,
 \end{array}
 \right.
\ee for any $(x_\pi,x_c)\in\mathbb{A}$, where $C$ is a constant
independent of $x_q,x_\pi,x_c$, $x_\mu$ and $x_\Sigma$, In turn,
there exists a large enough positive integer $M\geq N$ such that for
any $(x_\pi,x_c)\in \mathbb{A}\backslash\mathbb{A}_M$, or for any
$(x_\pi,x_c)\in \mathbb{A}$ with $x_c<1/M$ when $p<0$ or $i=L$,
$$
  \inf\limits_{(x_\mu,x_\Sigma)\in \mathbb{B}} F_i(x_q;x_\pi,x_c;x_\mu,x_\Sigma)
  \leq F_i(x_q;x_\pi,x_c;x^0_\mu,x^0_\Sigma)<F^N_i\leq F^M_i\leq F^n_i,\quad n\geq M.
$$
For the last two inequalities, we have used the fact that $F^n_i$ is
nondecreasing with respect to $n$ (see Step 2). Thus, the last
equality in~\eqref{saddle point of Fn} implies that
$(\widetilde{x}^n_\pi,\widetilde{x}^n_c)\in \mathbb{A}_M$ for any
$n\geq M$ and, moreover, $\widetilde{x}^n_c\geq 1/M$ when $p<0$ or
$i=L$.

\emph{Step 4.}  We prove that the function
$F_i(x_q;\cdot,\cdot;\cdot,\cdot)$ has at least one saddle point
$(\widetilde{x}^*_\pi,\widetilde{x}^*_c;\widetilde{x}^*_\mu,\widetilde{x}^*_\Sigma)$
in $\mathbb{A}\times\mathbb{B}$. Indeed, according to Step 3,
$F_i(x_q;\cdot,\cdot;\cdot,\cdot)$ has at least one saddle point
$(\widetilde{x}^n_\pi,\widetilde{x}^n_c;\widetilde{x}^n_\mu,\widetilde{x}^n_\Sigma)$
in $\mathbb{A}_n\times\mathbb{B}$, and all of them belong to a
compact set $\mathbb{A}_M\times\mathbb{B}$ for any $n\geq M$. Hence,
there exists a subsequence (still denoted by itself) such that
$(\widetilde{x}^n_\pi,\widetilde{x}^n_c;\widetilde{x}^n_\mu,\widetilde{x}^n_\Sigma)
\rightarrow
(\widetilde{x}^*_\pi,\widetilde{x}^*_c;\widetilde{x}^*_\mu,\widetilde{x}^*_\Sigma)\in
\mathbb{A}_M\times\mathbb{B} \subseteq\mathbb{A}\times\mathbb{B}$.
Next, we prove that
$(\widetilde{x}^*_\pi,\widetilde{x}^*_c;\widetilde{x}^*_\mu,\widetilde{x}^*_\Sigma)$
is a saddle point of $F_i(x_q;\cdot,\cdot;\cdot,\cdot)$ in
$\mathbb{A}\times\mathbb{B}$.

It clear that \begin{equation}\label{inequ}
 F_i(x_q;\widetilde{x}^n_\pi,\widetilde{x}^n_c;x_\mu,x_\Sigma)
 \geq F_i(x_q;\widetilde{x}^n_\pi,\widetilde{x}^n_c;\widetilde{x}^n_\mu,\widetilde{x}^n_\Sigma)
 \geq F_i(x_q;x_\pi,x_c;\widetilde{x}^n_\mu,\widetilde{x}^n_\Sigma),
\end{equation} for any $(x_\pi, x_c)\in \mathbb{A}_n$ and
$(x_\mu,x_\Sigma)\in \mathbb{B}$. Sending $n\rightarrow+\infty$ in
the first inequality in (\ref{inequ}), we deduce that for any
$(x_\mu,x_\Sigma)\in \mathbb{B}$,
$$
  F_i(x_q;\widetilde{x}^*_\pi,\widetilde{x}^*_c;x_\mu,x_\Sigma)
 \geq F_i(x_q;\widetilde{x}^*_\pi,\widetilde{x}^*_c;\widetilde{x}^*_\mu,\widetilde{x}^*_\Sigma).
$$
On the other hand, for any $(x_\pi, x_c)\in \mathbb{A}$, we can
choose a large enough positive integer $\widetilde{N}$ such that
$(x_\pi, x_c)\in \mathbb{A}_n$ for any $n\geq \widetilde{N}$. Then,
sending $n\rightarrow+\infty$ in the second inequality  in
(\ref{inequ}), we deduce that
$$
   F_i(x_q;\widetilde{x}^*_\pi,\widetilde{x}^*_c;\widetilde{x}^*_\mu,\widetilde{x}^*_\Sigma)
 \geq F_i(x_q;x_\pi,x_c;x^*_\mu,x^*_\Sigma).
$$
Therefore,
$(\widetilde{x}^*_\pi,\widetilde{x}^*_c;\widetilde{x}^*_\mu,\widetilde{x}^*_\Sigma)$
is a saddle point of $F_i(x_q;\cdot,\cdot;\cdot,\cdot)$ in
$\mathbb{A}\times\mathbb{B}$.

\emph{Step 5.} We prove that assertions (ii) and (iii) hold. Indeed,
from the proof in Step 4, we know that all saddle points
$(\widetilde{x}^n_\pi,\widetilde{x}^n_c;\widetilde{x}^n_\mu,
\widetilde{x}^n_\Sigma)$ belong to a compact set
$\mathbb{A}_M\times\mathbb{B}$ for any $n\geq M$ and $x_q\in
\mathbb{R}$. Furthermore, it follows from \eqref{lowerbound}
and~\eqref{coercivity} in Step 3 that, there exists a neighborhood
of $x_q$, say $x_q\in(a,b)$, such that the subscript $M$ in
$\mathbb{A}_M$ is independent of $x_q$ (but may depend on $a$ and
$b$). Thus, for $x_q\in(a,b)$,
$(\widetilde{x}_\pi^*(x_q),\widetilde{x}_c^*(x_q),\widetilde{x}_\mu^*(x_q),\widetilde{x}_\Sigma^*(x_q))\in\mathbb{A}_M\times\mathbb{B}$
which means the functions
$\widetilde{x}_\pi^*(x_q),\widetilde{x}_c^*(x_q),\widetilde{x}_\mu^*(x_q)$
and $\widetilde{x}_\Sigma^*(x_q)$ are locally bounded and, moreover,
\eqref{definition of F} and \eqref{defintion of G} imply that
$G_i(x_q)$ is also locally bounded in $x_q\in\mathbb{R}$.

From Step 3, we know that for any $n\geq M$, $\widetilde{x}^n_c\geq
1/M$ when $p<0$ or $i=L$. Since the saddle point
$(\widetilde{x}^*_\pi,\widetilde{x}^*_c;\widetilde{x}^*_\mu,\widetilde{x}^*_\Sigma)$
is the limit of
$(\widetilde{x}^n_\pi,\widetilde{x}^n_c;\widetilde{x}^n_\mu,\widetilde{x}^n_\Sigma)$,
we deduce that $\widetilde{x}^*_c\geq 1/M$ in the case of $p<0$ or
$i=L$, which means that $(\widetilde{x}^*_c(x_q))^{-1}$ is locally
bounded in $x_q\in\mathbb{R}$.
\end{proof}

We are now ready to state our first main result, which is about
nonlinear ODE characterization of the value functions $J_i(\cdot)$
for $i=P,L$. Since the conclusions for power and logarithm utility
functions are different, we present their results separately.

\begin{theorem}\label{Theorem1 for power}  Suppose that $q_P(\cdot)$ solves the following
nonlinear ODE \be\label{intergralequation}
 q_P(t)=\int_t^T\left[\,pG_P(q_P(s))-\rho\right]ds,\;\;t\in[0,T],
\ee where the function $G_P(\cdot)$ is given in Lemma \ref{lemma for
saddle point}.

Then, for the power utility case, the process
\be\label{valueprocess} J_{P,t}^{x;\pi,c,\mu,\Sigma}:= {1\over
p}\int_0^t \lambda e^{-\rho
s}\Big(\,c_sX^{x;\pi,c,\mu,\Sigma}_s\,\Big)^pds
 +{1\over p}e^{q_P(t)-\rho t}\Big(\,X^{x;\pi,c,\mu,\Sigma}_t\,\Big)^p,
\ee together with
$(\pi^*_t,c^*_t)=(\widetilde{x}_{\pi}^*(q_P(t)),\widetilde{x}_c^*(q_P(t)))$
and
$(\mu_t^*,\Sigma_t^*(\Sigma_t^*)^T)=(\widetilde{x}_{\mu}^*(q_P(t)),\widetilde{x}_{\Sigma}^*(q_P(t)))$,
$t\in[0,T]$, satisfy the conditions (C1-C3), where
$(\widetilde{x}^*_\pi(x_q),\widetilde{x}^*_c(x_q);\widetilde{x}^*_\mu(x_q),\widetilde{x}^*_\Sigma(x_q))$
is a saddle point given in Lemma \ref{lemma for saddle point}. In
particular, the value function of the maximin problem
(\ref{problem}) is given by \bee J_{P}(x)=J_{P,0}^{x}={x^p\over
p}e^{q_P(0)}. \eee
\end{theorem}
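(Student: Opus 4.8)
The plan is to verify the three conditions (C1)--(C3) for the process \eqref{valueprocess} with the indicated feedback controls; once (C1)--(C3) hold, the saddle-point inequalities \eqref{saddle_point} and the identification $J_P(x)=J_{P,0}^x=\frac{x^p}{p}e^{q_P(0)}$ follow from the martingale argument already displayed before Lemma \ref{lemma for saddle point}. Conditions (C1) and (C2) are immediate. Evaluating \eqref{intergralequation} at $t=T$ gives $q_P(T)=0$, so $\frac1p e^{q_P(T)-\rho T}X_T^p=e^{-\rho T}U_P(X_T)$ and (C1) holds; evaluating \eqref{valueprocess} at $t=0$ gives $J_{P,0}^{x;\pi,c,\mu,\Sigma}=\frac1p e^{q_P(0)}x^p$, a constant independent of the controls, which is (C2). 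Local boundedness of $G_P$ from Lemma \ref{lemma for saddle point}(ii) makes the right-hand side of \eqref{intergralequation} well defined and renders $q_P$ continuously differentiable with $q_P'(t)=\rho-pG_P(q_P(t))$.

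The heart of the matter is (C3), which I would establish by an It\^o expansion. Writing $X_t=X_t^{x;\pi,c,\mu,\Sigma}$ and denoting by $\beta_t:=\mu_t^{\rm T}\pi_t+r(1-1_d^{\rm T}\pi_t)^+-R(1-1_d^{\rm T}\pi_t)^- -c_t$ the drift coefficient per unit wealth in \eqref{wealth}, applying It\^o's formula to $\frac1p e^{q_P(t)-\rho t}X_t^p$ and adding the exact time derivative of the running integral in \eqref{valueprocess} gives
\be
 dJ_{P,t}^{x;\pi,c,\mu,\Sigma}
 &=& e^{q_P(t)-\rho t}X_t^p\Big[\,F_P\big(q_P(t);\pi_t,c_t;\mu_t,\Sigma_t\Sigma_t^{\rm T}\big)-G_P(q_P(t))\,\Big]\,dt \nonumber \\
 && +\, e^{q_P(t)-\rho t}X_t^p\,\pi_t^{\rm T}\Sigma_t\,dW_t, \nonumber
\ee
where I have used the definition \eqref{definition of F} of $F_P$ to collect the quadratic, drift and consumption terms, and the ODE relation $q_P'(t)-\rho=-pG_P(q_P(t))$ to cancel the remaining time-dependent terms. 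Since the prefactor $e^{q_P(t)-\rho t}X_t^p$ is strictly positive, the sign of the drift is governed entirely by $F_P-G_P$ evaluated at $q_P(t)$.

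The three martingale assertions now follow from the saddle-point inequalities \eqref{saddle point of F} of Lemma \ref{lemma for saddle point}(i). Taking $(\pi^*_t,c^*_t)=(\widetilde{x}_\pi^*(q_P(t)),\widetilde{x}_c^*(q_P(t)))$ and $(\mu_t^*,\Sigma_t^*(\Sigma_t^*)^{\rm T})=(\widetilde{x}_\mu^*(q_P(t)),\widetilde{x}_\Sigma^*(q_P(t)))$ makes $F_P-G_P\equiv 0$, so the drift vanishes; along $(\pi^*,c^*)$ with arbitrary $(\mu,\Sigma)\in\mathcal B$ the left inequality in \eqref{saddle point of F} gives $F_P-G_P\ge 0$, a nonnegative drift; and along arbitrary $(\pi,c)\in\mathcal A$ with the worst-case $(\mu^*,\Sigma^*)$ the right inequality gives $F_P-G_P\le 0$, a nonpositive drift. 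Hence $J_P^{x;\pi^*,c^*,\mu^*,\Sigma^*}$, $J_P^{x;\pi^*,c^*,\mu,\Sigma}$ and $J_P^{x;\pi,c,\mu^*,\Sigma^*}$ are, respectively, a local martingale, a local submartingale, and a local supermartingale.

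The main obstacle is to promote these local statements to genuine (sub/super)martingales and to confirm that the candidate optimal controls are admissible. For admissibility, since $q_P$ is continuous on $[0,T]$ its range is compact, so Lemma \ref{lemma for saddle point}(ii)--(iii) ensures that $(\pi^*_t,c^*_t)$ are bounded (and, when $p<0$, that $c^*_t$ is bounded away from $0$), whence the feedback controls are $\mathbb F$-progressively measurable, satisfy the integrability requirement in $\mathcal A$, and the wealth equation \eqref{wealth} admits a well-defined positive solution; Condition (H) in \eqref{ClassD} must then be checked for them. For the promotion step I would localize the stochastic integral by stopping times $\tau_n\uparrow T$ and pass to the limit: the running-consumption integral is integrable by the first part of Condition (H), while the uniform integrability of the family $U_P(X_\tau)$ in the second part of \eqref{ClassD} upgrades the localized inequalities to the true supermartingale and submartingale inequalities, and to a true martingale in the saddle case. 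Care is needed for the sign of $p$: for $p\in(0,1)$ the term $\frac1p X_t^p$ is controlled above by the terminal utility, whereas for $p<0$ it is negative and the subsistence bound on $c^*$ from Lemma \ref{lemma for saddle point}(iii) is what keeps the running integral finite. In both regimes Condition (H) is exactly tailored to close this gap.
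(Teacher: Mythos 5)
Your proposal is correct and follows essentially the same route as the paper's proof: the same It\^o expansion reducing the drift of $J_{P}^{x;\pi,c,\mu,\Sigma}$ to $e^{q_P(t)-\rho t}X_t^p\,[F_P-G_P]$ via the ODE relation $q_P'(t)-\rho=-pG_P(q_P(t))$, the same use of the saddle-point inequalities of Lemma \ref{lemma for saddle point} to sign the drift, and the same localization plus Condition (H) to upgrade the local (sub/super)martingales. The one step you flag but do not carry out is the verification that the candidate optimal pair $(\pi^*,c^*)$ actually satisfies Condition (H); the paper completes this by writing $\left(X_t^{x,\pi^*,c^*,\mu^*,\Sigma^*}\right)^p$ as a bounded deterministic factor times the stochastic exponential $\mathcal{E}_t\left(p\int_0^{\cdot}(\pi^*_s)^{\rm T}\Sigma^*_s\,dW_s\right)$, which is a uniformly integrable martingale because the saddle-point controls are bounded.
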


\begin{proof}

$J_{P}^{x;\pi,c,\mu,\Sigma}$  in (\ref{valueprocess}) obviously
satisfies the conditions (C1) and (C2), so it suffices to verify the
martingale property (C3).

To this end, for any $(\pi,c)\in\mathcal{A}$ and
$(\mu,\Sigma)\in\mathcal{B}$, an application of It\^o's formula
implies
\be\nonumber
 d \left(X_s^{x;\pi,c,\mu,\Sigma}\right)^p&=&
\left(X_s^{x;\pi,c,\mu,\Sigma}\right)^{p}\,\left\{\left[\,p\,F_P\left(q_P(s);\pi_s,c_s;\mu_s,\Sigma_s\Sigma_s^{\rm T}\right)
 -\lambda e^{-q_P(s)}c_s^p\,\right]\,ds\,\right.
 \\[2mm]\label{estimation}
 &&\left.+\,{p\pi_s^{\rm T}\Sigma_sdW_s}\right\},
 \ee
 and in turn,
\bee
J_{P,t}^{x;\pi,c,\mu,\Sigma}&\!\!\!=\!\!\!&
{J_{P,0}^{x;\pi,c,\mu,\Sigma}}+\int_0^te^{q_P(s)-\rho
s}(X_s^{x;\pi,c,\mu,\Sigma})^{p}\left[F_P\left(q_P(s);\pi_s,c_s;\mu_s,\Sigma_s\Sigma_s^{\rm T}\right)+{q_P^{\prime}(s)-\rho\over p}\,\right]\,ds\\
&\!\!\!\!\!\!&+\int_0^te^{q_P(s)-\rho
s}(X_s^{x;\pi,c,\mu,\Sigma})^{p}\pi_s^{\rm T}\Sigma_sdW_s.
\eee

{Since $q_P(\cdot)$ is a continuous and deterministic function, we
know that $q_P$ is bounded in the interval $[0,T]$. Together with
Lemma \ref{lemma for saddle point},} {we deduce that
$G_P(q_P(\cdot))$ and $\pi^*,c^*,\mu,\Sigma^*$ are all bounded, and
$(c^*)^{-1}$ is also bounded when $p<0$}. It follows that the
stochastic exponential
$\mathcal{E}\left(p\int_0^{\cdot}(\pi^*_s)^{\rm
T}\Sigma_s^*dW_s\right)$ is a uniformly integrable martingale.
Moreover, from (\ref{estimation}), we deduce that
$$
 \left(X_t^{x,\pi^*,c^*,\mu^*,\Sigma^*}\right)^p
 =x^p\mathcal{E}_t\left(p\int_0^{\cdot}(\pi^*_s)^{\rm T}\Sigma^*_sdW_s\right)
 \exp\left(\int_0^t\left[pG_P(q_P(s))
 -\lambda e^{-q_P(s)}(c^*_s)^p\right]ds\right)
$$
for $t\in[0,T]$. Moreover, there exists a constant $C>0$ such that
$$
 E\left[\int_0^T\left(c_t^*X_t^{x,\pi^*,c^*,\mu^*,\Sigma^*}\right)^pdt\right]\leq
 CE\left[\int_0^T\mathcal{E}_t\left(p\int_0^{\cdot}(\pi^*_s)^{\rm T}\Sigma^*_sdW_s\right)dt\right]=CT.
$$
Thus, $X^{x,\pi^*,c^*,\mu^*,\Sigma^*}$ satisfies the condition (H),
and $(\pi^*,c^*)\in\mathcal{A}$ and
$(\mu^*,\Sigma^*)\in\mathcal{B}$. Together with ODE
(\ref{intergralequation}) for $q_P(\cdot)$, we deduce that
$$E\left[J_{P,s}^{x;\pi^*,c^*,\mu^*,\Sigma^*}|\mathcal{F}_t\right]=J_{P,t}^{x;\pi^*,c^*,\mu^*,\Sigma^*}$$
for any $0\leq t\leq s\leq T$.

With
$(\mu_s^*,\Sigma_s^*(\Sigma_s^*)^T)=(\widetilde{x}_{\mu}^*(q_P(s)),\widetilde{x}_{\Sigma}^*(q_P(s)))$,
the second inequality in the saddle point condition (\ref{saddle
point of F}) implies \bee
F_P(q_P(s);\pi_s,c_s;\mu_s^*,\Sigma_s^*(\Sigma_s^*)^{\rm
T})+{q_P^{\prime}(s)-\rho\over p} \leq
G_P(q_P(s))+{q_P^{\prime}(s)-\rho\over p}=0 \eee for any
$(\pi,c)\in\mathcal{A}$. Thus $J_{P}^{x;\pi,c,\mu^*,\Sigma^*}$ is a
local supermartingale. Take an increasing sequence of
$\mathbb{F}$-stopping times $\tau_n\uparrow T$ such that {for any
$0\leq t\leq s\leq T$,}
$$E\left[J_{P,s\wedge\tau_n}^{x;\pi,c,\mu^*,\Sigma^*}|\mathcal{F}_t\right]\leq J_{P,t\wedge\tau_n}^{x;\pi,c,\mu^*,\Sigma^*},$$
i.e. \be\label{condition_inequality}
E\left[J_{P,s\wedge\tau_n}^{x;\pi,c,\mu^*,\Sigma^*}1_{A}\right]\leq
E\left[J_{P,t\wedge\tau_n}^{x;\pi,c,\mu^*,\Sigma^*}1_A\right]\ee for
{any} $A\in\mathcal{F}_t$. By the condition (H) on
$X^{x,\pi,c,\mu^*,\Sigma^*}$, we may let $\tau_n\uparrow T$ in
(\ref{condition_inequality}), which then implies that
$E[J_{P,s}^{x;\pi,c,\mu^*,\Sigma^*}1_{A}]\leq
E[J_{P,t}^{x;\pi,c,\mu^*,\Sigma^*}1_A]$, i.e.
$J_{P}^{x;\pi,c,\mu^*,\Sigma^*}$ is a supermartingale.

Finally, with
$(\pi^*_s,c^*_s)=(\widetilde{x}_{\pi}^*(q_P(s)),\widetilde{x}_c^*(q_P(s)))$,
the first inequality in the saddle point condition (\ref{saddle
point of F}) implies \bee
F_P(q_P(s);\pi^*_s,c^*_s;\mu_s,\Sigma_s\Sigma_s^{\rm
T})+{q_P^{\prime}(s)-\rho\over p} \geq
G_P(q_P(s))+{q_P^{\prime}(s)-\rho\over p}=0 \eee for any
$(\mu,\Sigma)\in\mathcal{B}$, so $J_{P}^{x;\pi^*,c^*,\mu,\Sigma}$ is
a local submartingale. Following along similar argument as above, we
obtain that $J_{P}^{x;\pi^*,c^*,\mu,\Sigma}$ is a submartingale.
\end{proof}

\begin{theorem}\label{Theorem1 for log} Suppose that $q_L(\cdot)$ and $Q_L(\cdot)$ solve the following
ODEs
 \be\label{intergralequation1 for log}
 q_L(t)=\int_t^T\left(\,\lambda e^{-q_L(s)}-\rho\right)ds,\quad
 Q_L(t)=\int_t^T e^{q_L(s)-\rho s}G_L(q_L(s))ds,\;\;t\in[0,T],
\ee where the function {$G_L(\cdot)$} is given in Lemma \ref{lemma
for saddle point}.

Then, for the logarithm utility case, the process
\be\label{valueprocess for log} J_{L,t}^{x;\pi,c,\mu,\Sigma}:=
\int_0^t \lambda e^{-\rho
s}\ln\Big(\,c_sX^{x;\pi,c,\mu,\Sigma}_s\,\Big)ds
 +e^{q_L(t)-\rho t}\ln\Big(\,X^{x;\pi,c,\mu,\Sigma}_t\,\Big)+Q_L(t),
\ee together with
$(\pi^*_t,c^*_t)=(\widetilde{x}_{\pi}^*(q_L(t)),\widetilde{x}_c^*(q_L(t)))$
and
$(\mu_t^*,\Sigma_t^*(\Sigma_t^*)^T)=(\widetilde{x}_{\mu}^*(q_L(t)),\widetilde{x}_{\Sigma}^*(q_L(t)))$,
$t\in[0,T]$, satisfy the conditions (C1-C3), where
$(\widetilde{x}^*_\pi(x_q),\widetilde{x}^*_c(x_q);\widetilde{x}^*_\mu(x_q),\widetilde{x}^*_\Sigma(x_q))$
is a saddle point given  in Lemma \ref{lemma for saddle point}. In
particular, the value function of the maximin problem
(\ref{problem}) is given by \bee J_L(x)=J_{L,0}^{x}=e^{q_L(0)}\ln
x+Q_L(0). \eee
\end{theorem}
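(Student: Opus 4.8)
The plan is to run the same martingale verification as in the proof of Theorem \ref{Theorem1 for power}, with the multiplicative power transform replaced by the additive logarithm. Conditions (C1) and (C2) are immediate consequences of the two integral equations in (\ref{intergralequation1 for log}): evaluating at $t=T$ gives $q_L(T)=Q_L(T)=0$, so $J_{L,T}^{x;\pi,c,\mu,\Sigma}$ reduces to the payoff required in (C1) with $U_L^c=U_L=\ln$, while at $t=0$ one reads off $J_{L,0}^{x;\pi,c,\mu,\Sigma}=e^{q_L(0)}\ln x+Q_L(0)$, a constant independent of $(\pi,c)$ and $(\mu,\Sigma)$, which also identifies the claimed value function. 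The real content is again condition (C3).

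First I would apply It\^o's formula to $\ln X_s^{x;\pi,c,\mu,\Sigma}$. Using the wealth equation (\ref{wealth}) together with the identity $r(1-1_d^{\rm T}\pi)-(R-r)(1-1_d^{\rm T}\pi)^-=r(1-1_d^{\rm T}\pi)^+-R(1-1_d^{\rm T}\pi)^-$, the drift of $\ln X_s$ equals $F_L(q_L(s);\pi_s,c_s;\mu_s,\Sigma_s\Sigma_s^{\rm T})-\lambda e^{-q_L(s)}\ln c_s$, with diffusion term $\pi_s^{\rm T}\Sigma_s\,dW_s$. The discrepancy $-\lambda e^{-q_L(s)}\ln c_s$ appears precisely because $\ln(c_sX_s)=\ln c_s+\ln X_s$ splits additively, whereas in $F_L$ the consumption enters through $\lambda e^{-x_q}\ln x_c-x_c$; this is the structural reason the logarithm case needs the auxiliary ODE for $Q_L$, absent in the power case where $(c_sX_s)^p=c_s^pX_s^p$ factors and is absorbed into the It\^o expansion of $X_s^p$.

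Next I would differentiate the three pieces of $J_{L,t}^{x;\pi,c,\mu,\Sigma}$ in (\ref{valueprocess for log}) and check the cancellations engineered by the two ODEs. The first ODE gives $q_L'(s)-\rho=-\lambda e^{-q_L(s)}$, which cancels the two $\ln X_s$ contributions (the running term $\lambda e^{-\rho s}\ln X_s$ against the time-derivative of $e^{q_L(s)-\rho s}$ acting on $\ln X_s$); the spurious $-\lambda e^{-q_L(s)}\ln c_s$ cancels automatically against the running $\ln c_s$ term because $e^{q_L(s)-\rho s}\lambda e^{-q_L(s)}=\lambda e^{-\rho s}$; and the second ODE gives $Q_L'(s)=-e^{q_L(s)-\rho s}G_L(q_L(s))$, which removes the saddle level $G_L$. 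After these cancellations the drift of $J_{L,t}^{x;\pi,c,\mu,\Sigma}$ collapses to \[ e^{q_L(s)-\rho s}\big[F_L(q_L(s);\pi_s,c_s;\mu_s,\Sigma_s\Sigma_s^{\rm T})-G_L(q_L(s))\big]\,ds, \] with martingale part $\int_0^{\cdot}e^{q_L(s)-\rho s}\pi_s^{\rm T}\Sigma_s\,dW_s$. Invoking the saddle-point inequalities (\ref{saddle point of F}) then yields the three sign conclusions exactly as in the power case: the drift vanishes at $(\pi^*,c^*;\mu^*,\Sigma^*)$, is nonpositive when only $(\pi,c)$ varies against $(\mu^*,\Sigma^*)$, and nonnegative when only $(\mu,\Sigma)$ varies against $(\pi^*,c^*)$. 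Hence $J_L^{x;\pi^*,c^*,\mu^*,\Sigma^*}$, $J_L^{x;\pi,c,\mu^*,\Sigma^*}$ and $J_L^{x;\pi^*,c^*,\mu,\Sigma}$ are respectively a local martingale, a local supermartingale and a local submartingale.

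The last step, where I expect most care, is the integrability needed to upgrade these local statements to genuine (super/sub)martingales and to verify condition (H). Since $q_L$ is continuous and deterministic it is bounded on $[0,T]$, so Lemma \ref{lemma for saddle point} makes $G_L(q_L(\cdot))$, $\pi^*$, $c^*$, $\mu^*$, $\Sigma^*$ and $(c^*)^{-1}$ all bounded. Consequently $\ln X_t^{x;\pi^*,c^*,\mu^*,\Sigma^*}$ equals $\ln x$ plus a drift with bounded integrand plus the martingale $\int_0^t(\pi_s^*)^{\rm T}\Sigma_s^*\,dW_s$, whose quadratic variation is bounded; by optional stopping the family $\{\ln X_\tau^{x;\pi^*,c^*,\mu^*,\Sigma^*}\}$ over stopping times $\tau\in[0,T]$ is bounded in $L^2$, hence uniformly integrable, and $E[\int_0^T\ln(c_s^*X_s^{x;\pi^*,c^*,\mu^*,\Sigma^*})\,ds]$ is finite. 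This verifies condition (H), shows $(\pi^*,c^*)\in\mathcal{A}$ and $(\mu^*,\Sigma^*)\in\mathcal{B}$, and promotes the optimal process to a true martingale. For the super- and submartingale claims I would localize by stopping times $\tau_n\uparrow T$ and pass to the limit through the analogue of (\ref{condition_inequality}), using condition (H) on the relevant wealth process exactly as in Theorem \ref{Theorem1 for power}. The only genuinely new bookkeeping relative to the power case is the extra ODE $Q_L$ and the additive split of $\ln(c_sX_s)$; once the drift collapse above is checked, the remainder is parallel.
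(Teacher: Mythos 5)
Your proposal is correct and follows essentially the same route as the paper: Itô's formula applied to $\ln X$, the two ODEs for $q_L$ and $Q_L$ engineering the cancellation of the $\ln X_s$ and $\ln c_s$ terms so the drift of $J_L$ collapses to $e^{q_L(s)-\rho s}\bigl[F_L-G_L\bigr]$, the saddle-point inequalities of Lemma \ref{lemma for saddle point} giving the three sign conclusions, and the boundedness of $q_L$, $Q_L$, $\pi^*$, $c^*$, $(c^*)^{-1}$ plus the bound $|\ln c^*|\leq c^*+(c^*)^{-1}$ to verify condition (H) and upgrade the local (super/sub)martingales via localization. No gaps.
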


\begin{proof}

$J_L^{x;\pi,c,\mu,\Sigma}$  in (\ref{valueprocess for log}) obviously
satisfies the conditions (C1) and (C2), so it suffices to verify the
martingale property (C3).

To this end, for any $(\pi,c)\in\mathcal{A}$ and
$(\mu,\Sigma)\in\mathcal{B}$, an application of It\^o's formula
implies
\be\label{estimation for log}
 d \ln\left(X_s^{x;\pi,c,\mu,\Sigma}\right)
 =\left[F_L\left(q_L(s);\pi_s,c_s;\mu_s,\Sigma_s\Sigma_s^{\rm T}\right)
 -\lambda e^{-q_L(s)}\ln c_s\right]ds+\pi_s^{\rm T}\Sigma_sdW_s,
 \ee
 and in turn,
\bee J_{L,t}^{x;\pi,c,\mu,\Sigma}&\!\!\!=\!\!\!&
{J_{L,0}^{x;\pi,c,\mu,\Sigma}}+\int_0^t{e^{q_L(s)-\rho
s}\left\{\,\left[\,F_L\left(q_L(s);\pi_s,c_s;\mu_s,\Sigma_s\Sigma_s^{\rm
T}\right)+e^{-q_L(s)+\rho s}Q_L^\prime(s)\,\right]\right.}
\\
&\!\!\!\!\!\!&\left.+\,\left[\,q_L^{\prime}(s)-\rho+\lambda e^{-q_L(s)}\,\right]\,
\ln X_s^{x;\pi,c,\mu,\Sigma}\,\right\}ds+\int_0^te^{q_L(s)-\rho
s}\pi_s^{\rm T}\Sigma_sdW_s
\\[2MM]
&\!\!\!=\!\!\!&
{J_{L,0}^{x;\pi,c,\mu,\Sigma}}+\int_0^t\left[\,e^{q_L(s)-\rho s}F_L\left(q_L(s);\pi_s,c_s;\mu_s,\Sigma_s\Sigma_s^{\rm T}\right)+Q_L^\prime(s)\,\right]ds
\\
&\!\!\!\!\!\!&+\int_0^te^{q_L(s)-\rho s}\pi_s^{\rm T}\Sigma_sdW_s.
\eee {Since $q_L(\cdot),Q_L(\cdot)$ are continuous and deterministic
functions, we know that $q_L(\cdot),Q_L(\cdot)$ are bounded in the
interval $[0,T]$. Together with Lemma \ref{lemma for saddle point},}
{we deduce that $G_L(q_L(\cdot))$ and
$\pi^*,c^*,\mu,\Sigma^*,(c^*)^{-1}$ are all bounded}. It follows
that the stochastic integral $\int_0^\cdot(\pi_s^*)^{\rm
T}\Sigma_s^*dW_s$ is a uniformly integrable martingale. Moreover,
from (\ref{estimation for log}), we deduce that
$$
 \ln \left(X_t^{x,\pi^*,c^*,\mu^*,\Sigma^*}\right)
 =\ln x+\int_0^t\left[G_L(q_L(s))-\lambda e^{-q_L(s)}\ln c_s\right]ds+\int_0^t\pi_s^{\rm
 T}{\Sigma_sdW_s}
$$
for $t\in[0,T]$. Since $|\ln c^*|\leq c^*+(c^*)^{-1}$,
%it follows that $\ln(X^{x,\pi^*,c^*,\mu^*,\Sigma^*})$ is in Class (D). On the other hand,
{and there} exists a constant $C>0$ such that
$$
 E\left[\int_0^T\left|\ln (c_s^*X_s^{x,\pi^*,c^*,\mu^*,\Sigma^*})\right|ds\right]\leq
 E\left[\int_0^T\left|\ln c_s^*\right|ds\right]+E\left[\int_0^T\left|\ln X_s^{x,\pi^*,c^*,\mu^*,\Sigma^*}\right|ds\right]\leq
 {CT},
$$
we deduce that $X^{x,\pi,c,\mu^*,\Sigma^*}$ satisfies the condition
(H), and $(\pi^*,c^*)\in\mathcal{A}$ and
$(\mu^*,\Sigma^*)\in\mathcal{B}$. Together with the two ODEs
(\ref{intergralequation1 for log}) for $q_L(\cdot)$ and
$Q_L(\cdot)$, we obtain
$$E\left[J_{L,s}^{x;\pi^*,c^*,\mu^*,\Sigma^*}|\mathcal{F}_t\right]=J_{L,t}^{x;\pi^*,c^*,\mu^*,\Sigma^*}$$
for any $0\leq t\leq s\leq T$. The rest of the proof is similar to
that of Theorem \ref{Theorem1 for power} and is thus omitted.
\end{proof}

%%%%%%%%%%%%%%%%%%%%%%%%%%%%%%%%%%%%%%%%%%%%%%%%%%%%%%%5
\subsection{One-dimensional case}

In the rest of this paper, we focus on one-dimensional case, and
derive explicit solutions for the optimal investment-consumption
strategies and the worst-case parameters. Assume that $d=d'=1$ and
$\mathbb{A}=[\underline{\pi},\overline{\pi}]\times[\underline{c},\overline{c}]$,
where $\underline{\pi},\overline{\pi},\underline{c},\overline{c}$
are constants satisfying $-\infty\leq \underline{\pi}\leq0,\;1\leq
\overline{\pi}\leq+\infty,\;0\leq
\underline{c}\leq\overline{c}\leq+\infty$.

For $i=P,L$, we split the function $F_i$ (cf. (\ref{definition of
F})) into two parts as $F_i(x_q;x_\pi,x_c;x_\mu,{x_\sigma})=
f_i(x_q;x_c)+ g(x_\pi;x_\mu,{x_\sigma})$, where {we used the
notation $x_\sigma$ to replace $x_\Sigma$ in (\ref{definition of F})
to emphasize the one-dimensional setting, and $f_i, g$ are defined
as follows,} \be\label{definition of f}
 f_i(x_q;x_c):=\left\{
 \begin{array}{ll}
 {{\lambda\over p}e^{-x_q}x_c^{p}-x_c},&i=P;
 \vspace{2mm}\\
 \lambda e^{-x_q}\ln x_c- x_c,&i=L,
 \end{array}
 \right.
\ee
and
\be\label{definition of g}
 {g(x_\pi;x_\mu,{x_\sigma})}
 :={p-1\over 2}\,{x_\sigma}x_{\pi}^2
 +\left(x_{\mu}x_{\pi}+r(1-x_{\pi})-(R-r)(1-x_{\pi})^-\right).
 %{p-1\over 2}\,x_\pi^{\rm T}x_\Sigma x_\pi
 %+\Big[\,x_\mu^{\rm T}x_\pi+r(1-1_d^{\rm T}x_\pi)^+-R(1-1_d^{\rm T}x_\pi)^-\,\Big],
\ee Herein, with a slight abuse of notation, we take $p=0$ in the
function $g$ for $i=L$.

It is clear that for any $x_q\in\mathbb{R}$,
$(x_{\pi}^*,\widetilde{x}^*_{c,i}(x_q);x_{\mu}^*,{x_\sigma^*})$ is a
saddle point of $F_i$ in $\mathbb{A}\times\mathbb{B}$, if
$\widetilde{x}^*_{c,i}(x_q)$ is the maximum point of
$f_i(x_q;\cdot)$ in the interval $[\underline{c},\overline{c}]$ and
$(x_{\pi}^*;x_{\mu}^*,{x_\sigma^*})$ is a saddle point of $g$ in
$[\underline{\pi},\overline{\pi}]\times \mathbb{B}$, i.e.
\be\label{optimalconsumption}
  f_i(x_q;\widetilde{x}_{c,i}^{*}(x_q))=\max_{x_c\in[\,\underline{c},\overline{c}\,]}f_i(x_q;x_c);
\\[2mm]\label{optimalportfolio}
g(x_{\pi};x_{\mu}^*,{x_\sigma^*})\leq
g(x_{\pi}^*;x_{\mu}^*,{x_\sigma^*})\leq
g(x_{\pi}^*;x_{\mu},{x_\sigma}) \ee for any
${(x_{\pi};x_{\mu},x_{\sigma})}\in[\underline{\pi},\overline{\pi}]\times
\mathbb{B}$.

From (\ref{definition of f}), it is immediate that the maximum value
and maximum point of $f_i$ in the interval
$[\underline{c},\overline{c}]$ take the form\be\label{optimalg1}
f_P(x_q;\widetilde{x}_{c,P}^*(x_q))=
 \left\{
 \begin{array}{lll}
 {\lambda\over p}\overline{c}^pe^{-x_q}-\overline{c},
 &\ \text{if}\ x_q<(p-1)\ln\overline{c}+\ln\lambda;
 \vspace{2mm}\\
 {(1-p)\lambda^{1/(1-p)}\over p}e^{x_q/(p-1)},
 &\ \text{if}\ (p-1)\ln\overline{c}+\ln\lambda\leq x_q\leq (p-1)\ln\underline{c}+\ln\lambda;
 \vspace{2mm}\\
 {\lambda\over p}\underline{c}^pe^{-x_q}-\underline{c},
 &\ \text{if}\ x_q>(p-1)\ln\underline{c}+\ln\lambda.
 \end{array}
 \right.
\ee
where
\begin{equation}\label{widehat_c}
 \widetilde{x}_{c,P}^*(x_q):=\underline{c}1_{\{\widehat{c}_P(x_q)\leq
 \underline{c}\}}+
 \widehat{c}_P(x_q)1_{\{\underline{c}<\widehat{c}_P(x_q)<
 \overline{c}\}}+ \overline{c}1_{\{\widehat{c}_P(x_q)\geq
 \overline{c}\}},\;\;
 \widehat{c}_P(x_q):=\lambda^{\frac{1}{1-p}}\exp\left(\frac{x_q}{p-1}\right),
\end{equation}
and \be\label{optimalg1 for log}
f_L(x_q;\widetilde{x}_{c,L}^*(x_q))=
 \left\{
 \begin{array}{lll}
 \lambda e^{-x_q}\ln \overline{c}- \overline{c},
 &\ \text{if}\ x_q<\ln\lambda-\ln\overline{c};
 \vspace{2mm}\\
 \lambda e^{-x_q}(\ln\lambda-x_q-1),
 &\ \text{if}\ \ln\lambda-\ln\overline{c}\leq x_q\leq \ln\lambda-\ln\underline{c};
 \vspace{2mm}\\
 \lambda e^{-x_q}\ln \underline{c}- \underline{c},
 &\ \text{if}\ x_q>\ln\lambda-\ln\underline{c}.
 \end{array}
 \right.
\ee
where
\begin{equation}\label{widehat_c for log}
 \widetilde{x}_{c,L}^*(x_q)=\underline{c}1_{\{\widehat{c}_L(x_q)\leq
 \underline{c}\}}+
 \widehat{c}_L(x_q)1_{\{\underline{c}<\widehat{c}_L(x_q)<
 \overline{c}\}}+ \overline{c}1_{\{\widehat{c}_L(x_q)\geq
 \overline{c}\}},\;\;
 \widehat{c}_L(x_q):=\lambda e^{-x_q}.
\end{equation}

In the case of power utility function, the corresponding ODE
(\ref{intergralequation}) has a financial interpretation. The
exponential of the ODE's solution $e^{q_P(t)}$ represents the
investor's extra utilities obtained by optimizing over all
admissible portfolio-consumption strategies (least affected by model
uncertainty) in the remaining horizon $[t,T]$, and in the
literature, $e^{q_P(t)}$ is dubbed as a (deterministic)
\emph{opportunity process} (see \cite{nutz}).

{Moreover, ODE (\ref{intergralequation}) and the definition of
$f_P(\cdot;\cdot)$ imply
$$
  \frac{-(e^{q_P(t)})^\prime}{e^{q_P(t)}}
  =-q_P^\prime(t)=p
  f_P(q_P(t);c^*(t))+pg(x_{\pi}^*;x_{\mu}^*,{x_\sigma^*})-\rho,
$$
where, with a slight abuse of notation, we denote
$$
 c^*(t):=\widetilde{x}_{c,P}^*(q_P(t)).
$$
Hence, we can further interpret ODE (\ref{intergralequation}) as a
description of} {the relative changing rate} of the opportunity
process $e^{q_P(t)}$, which {consists of} three factors: (i)
{\emph{the consumption contributing factor}
$pf_P(q_P(\cdot),c^*(\cdot))$, representing the change of the
opportunity process due to the consumption optimization, and
including two parts: current contribution $\lambda
e^{-q_P(\cdot)}(c^*(\cdot))^{p}/p$ and future contribution
$-c^*(\cdot)$}; (ii) \emph{the future investment contributing
factor} $pg(x_{\pi}^*;x_{\mu}^*,{x_\sigma^*})$, representing the
change of the opportunity process due to the portfolio optimization
in the remaining horizon; and (iii) \emph{the discount rate} $\rho$.
Increasing the consumption and future investment contributing
factors or decreasing the discount rate will lead to a larger
opportunity process.

{The current consumption contributing factor is the only one
affecting the instantanous utility, which is also reflected in the
expression of the expected utility (\ref{objectfunctional1}). The
future consumption contributing factor and the future investment
contributing factor determine the future consumption and terminal
utility through the channel of the future wealth. The player
achieves the maximum utility through balancing the risky asset and
riskless asset via the investment strategy, while balancing the
current utility and future utility via the consumption strategy.
Moreover, the definition of $f_P(q_P(\cdot),c^*(\cdot))$ implies
that $\lambda e^{-q_P(\cdot)}$ is the weight of the current
consumption utility relative to the future utility, which is
consistent with our intuition that increasing opportunity process
will lead to a larger weight of the future utility, and decrease the
current consumption.}

%%%%%%%%%%%%%%%%%%%%%%%%%%%%%%%%%%%%%%%%%%%%%%%%%%%%%%%%%%%%%%%%%%%%5
\section{Explicit solutions of the optimal strategies and worst-case parameters}

\subsection{The worst-case parameters and the optimal portfolios}

In this section, we further compute the saddle point
$(x_{\pi}^*;x_{\mu}^*,{x_\sigma^*})$ of the function
$g(\cdot;\cdot,\cdot)$ given in (\ref{definition of g}). It then
follows from Theorems \ref{Theorem1 for power} and \ref{Theorem1 for
log} that the saddle point provides an explicit solution for the
worst-case parameters and the optimal portfolio of the maxmin
problem (\ref{problem}) by letting
$(\mu^*_s,\sigma^*_s)=(x_{\mu}^{*},{\sqrt{x_{\sigma}^*}})$ and
$\pi^*_s=x_{\pi}^*$ for $s\in[0,T]$. In particular, we consider two
specific examples of the uncertain parameter sets $\mathbb{B}$.
\smallskip

\begin{assumption}\label{no correlation}
Assume that $\mathbb{B}=[\,\underline{\mu},\overline{\mu}\,]\times
[\,\underline{\sigma}^2,\overline{\sigma}^2\,]$, where
$\underline{\mu},
\overline{\mu},\underline{\sigma},\overline{\sigma}$ are constants
satisfying $-\infty<\underline{\mu}\leq \overline{\mu}<+\infty,\,
0\leq\underline{\sigma}\leq\overline{\sigma}<+\infty$ and
$\overline{\sigma}>0$.
\end{assumption}

\begin{theorem}\label{theorem_portfolio}
Under Assumption \ref{no correlation}, the worst-case parameters
$(\mu^*,\sigma^*)$ and the optimal portfolio $\pi^*$ are given as
follows:

(i) the worst-case drift and volatility are
$$(\mu^*_s,\sigma_s^*)=\left(\underline{\mu}1_{\{\underline{\mu}>r\}}+
[\underline{\mu},\overline{\mu}]1_{\{\underline{\mu}\leq r\leq
\overline{\mu}\}}+\overline{\mu}1_{\{\overline{\mu}<r\}},\overline{\sigma}\right)
$$ for $s\in[0,T]$, where $[\underline{\mu},\overline{\mu}]$ means
$\mu^*_s$ may take any value in that interval;

(ii) the optimal portfolio is a constant process, which is
summarized in Table 1, with $\beta_1$, $\beta_2$ and $\beta_3$ given
as
 $$
  \beta^1{:=}{\underline{\mu}-R\over (1-p)\overline{\sigma}^2},\quad
  \beta^2{:=}{\underline{\mu}-r\over (1-p)\overline{\sigma}^2},\quad
  \beta^3{:=}{\overline{\mu}-r\over (1-p)\overline{\sigma}^2}.
  $$

\begin{table}[h!]
    \caption{the optimal portfolio strategies}
    \ \ \ \ \ \ \ \ \ \ \ \
    \begin{tabular}{c c c c c c}
      \hline
      & $\beta^1\geq 1$ &$\beta^1\leq1\leq\beta^2$ & $0\leq\beta^2\leq1$
      & $\beta^2\leq 0\leq \beta^3$& $\beta^3\leq 0$ \\
      \hline
      $\;\;\pi^*_s$ &$\min\{\beta^1,\overline{\pi}\}$ & $1$ & $\beta^2$ & $0$ & $\max\{\beta^3,\underline{\pi}\}$\\
      \hline
    \end{tabular}  \label{table1}
  \end{table}
\end{theorem}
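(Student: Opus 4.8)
The plan is to locate explicitly the saddle point $(x_{\pi}^*; x_{\mu}^*, x_{\sigma}^*)$ of $g$ demanded by (\ref{optimalportfolio}), and then set $\pi^* = x_{\pi}^*$, $\mu^* = x_{\mu}^*$, $\sigma^* = \sqrt{x_{\sigma}^*}$. From (\ref{definition of g}), for fixed $(x_{\mu}, x_{\sigma})$ the map $x_{\pi} \mapsto g$ is concave: the quadratic coefficient $\frac{p-1}{2} x_{\sigma} \le 0$ since $p < 1$, and the riskless term equals $r(1-x_{\pi})$ for $x_{\pi} \le 1$ and $R(1-x_{\pi})$ for $x_{\pi} > 1$, so its slope drops from $-r$ to $-R \le -r$ at the kink $x_{\pi} = 1$, which preserves concavity. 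As $g$ is linear in $(x_{\mu}, x_{\sigma})$ over the convex compact set $\mathbb{B}$, Lemma \ref{lemma for saddle point} already guarantees that a saddle point exists; the remaining task is to pin it down by performing the inner minimization first and the outer maximization second.

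First I would carry out the inner minimization for a fixed portfolio. Because $\frac{p-1}{2} < 0$, the term $\frac{p-1}{2} x_{\sigma} x_{\pi}^2$ is nonincreasing in $x_{\sigma}$, so the worst-case volatility is $x_{\sigma}^* = \overline{\sigma}^2$, i.e. $\sigma^* = \overline{\sigma}$, for every $x_{\pi}$. The only drift dependence is the linear term $x_{\mu} x_{\pi}$, minimized over $[\underline{\mu}, \overline{\mu}]$ at $\underline{\mu}$ when $x_{\pi} > 0$, at $\overline{\mu}$ when $x_{\pi} < 0$, and at any value when $x_{\pi} = 0$. Substituting these back defines the inner value function $h(x_{\pi}) := \min_{(x_{\mu}, x_{\sigma}) \in \mathbb{B}} g(x_{\pi}; x_{\mu}, x_{\sigma})$, which, being a pointwise infimum of functions concave in $x_{\pi}$, is itself concave and piecewise quadratic, with kinks at $x_{\pi} = 0$ (the drift switch) and at $x_{\pi} = 1$ (the lending/borrowing switch).

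Next I would maximize the concave function $h$ over $[\underline{\pi}, \overline{\pi}]$. Solving $h' = 0$ on the three pieces $x_{\pi} < 0$, $0 < x_{\pi} < 1$, $x_{\pi} > 1$ gives the vertices $\beta^3, \beta^2, \beta^1$ respectively, and $R \ge r$ together with $\overline{\mu} \ge \underline{\mu}$ force the ordering $\beta^1 \le \beta^2 \le \beta^3$. Since $h$ is concave, its maximizer is the unique point at which $h'$ changes sign; comparing the one-sided slopes at the kinks $x_{\pi} = 0,1$ against zero yields the five exhaustive cases: the maximizer lies in the borrowing piece ($\beta^1 \ge 1$), at the kink $x_{\pi} = 1$ ($\beta^1 \le 1 \le \beta^2$), in the interior piece ($0 \le \beta^2 \le 1$), at the kink $x_{\pi} = 0$ ($\beta^2 \le 0 \le \beta^3$), or in the shortsale piece ($\beta^3 \le 0$), matching Table \ref{table1}. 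The standing assumptions $\underline{\pi} \le 0$ and $\overline{\pi} \ge 1$ guarantee the three middle outcomes lie inside $[\underline{\pi}, \overline{\pi}]$, whereas in the two extreme cases the constraint may bind, producing $\min\{\beta^1, \overline{\pi}\}$ and $\max\{\beta^3, \underline{\pi}\}$.

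The step I expect to be the main obstacle is upgrading this max--min analysis to a genuine saddle point in the sense of (\ref{optimalportfolio}), not merely attaining the common value. The right-hand inequality is exactly the inner minimization above, with $\sigma^* = \overline{\sigma}$ and $x_{\mu}^*$ a minimizer of $x_{\mu} \mapsto g(\pi^*; x_{\mu}, \overline{\sigma}^2)$ over $[\underline{\mu}, \overline{\mu}]$, equal to $\underline{\mu}$ when $\pi^* > 0$ and $\overline{\mu}$ when $\pi^* < 0$; when $\pi^* = 0$ the map $g(0; \cdot, \cdot) \equiv r$ is drift-free, so every value in $[\underline{\mu}, \overline{\mu}]$ is such a minimizer, which is precisely the interval recorded in the statement. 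The left-hand inequality is the delicate one: once $x_{\mu}^*$ is frozen according to the sign of $\pi^*$, I must confirm that $\pi^*$ still maximizes $g(\cdot; x_{\mu}^*, \overline{\sigma}^2)$ over the \emph{whole} of $[\underline{\pi}, \overline{\pi}]$, including the half-line on which this function no longer coincides with $h$. Concavity alone does not settle this, and the resolution rests on sign consistency: $\pi^* > 0$ forces $\beta^2 \ge 0$, placing the vertex of $g(\cdot; \underline{\mu}, \overline{\sigma}^2)$ at $\beta^2 \ge 0$, so this function increases on $(-\infty, 0]$ and no short position can beat $\pi^*$, and symmetrically $\beta^3 \le 0$ disposes of $\pi^* < 0$; the borderline no-trading case is handled by selecting the representative $x_{\mu}^* = r \in [\underline{\mu}, \overline{\mu}]$, which sets the vertex exactly at $0$ so that $g(\cdot; r, \overline{\sigma}^2) \le r$ throughout. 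Running this verification across all five cases of Table \ref{table1}, together with matching the one-sided slopes at the borrowing kink $x_{\pi} = 1$, completes the proof.
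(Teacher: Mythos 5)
Your proposal is correct and follows essentially the same route as the paper's Appendix A: minimize over $(x_\mu,x_\sigma)$ first to obtain $\sigma^*=\overline{\sigma}$ and the sign-dependent worst-case drift, then maximize the resulting concave, piecewise-quadratic lower envelope over $[\,\underline{\pi},\overline{\pi}\,]$ using the ordered vertices $\beta^1\le\beta^2\le\beta^3$, which yields exactly the five cases of Table \ref{table1}. The one place you go beyond the paper is the last step: the paper passes from ``$x_\pi^*$ maximizes the lower envelope'' directly to the frozen-parameter inequality $g(x_\pi^*;x_\mu^*,x_\sigma^*)\ge g(x_\pi;x_\mu^*,x_\sigma^*)$, whereas you correctly note that this requires the additional sign-consistency check on the half-line where the frozen drift no longer minimizes (together with the choice $x_\mu^*=r$ in the no-trading case, rather than an arbitrary point of $[\,\underline{\mu},\overline{\mu}\,]$) --- a worthwhile clarification of the saddle-point verification.
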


\begin{proof} Due to its length, the proof is postponed to Appendix A.
\end{proof}

We note that the worst-case volatility $\sigma^*$ attains its upper
bound $\overline{\sigma}$. This is due to the fact that the value
function is monotone in volatility ${\sigma}$ in the one-dimensional
setting. A larger $\sigma$ means the investor faces more market
risks, and therefore, she will have a smaller value function.

On the other hand, the worst-case drift is a bang-bang type. By the
assertion (ii) about the optimal portfolio strategies, we know that
$\underline{\mu}>r$ implies $\pi^*>0$, i.e. the investor holds a
long position of the stock. The worst-case drift is therefore its
lower bound. Likewise, $\overline{\mu}<r$ implies $\pi^*<0$, and
therefore, the worst-case drift takes its upper bound. If
$\underline{\mu}\leq r\leq \overline{\mu}$, then $\pi^*\equiv 0$, so
the estimation of the drift is irrelevant in this situation.

From Table 1, we categorize five different optimal portfolio
strategies $\pi^*$ according to various scenarios.

(i) \emph{Borrow-to-buy strategy.} When $\beta^1\geq 1$, the
investor will borrow $(\min\{\beta^1,\overline{\pi}\}-1)$ units of
her wealth with borrowing rate $R$ to invest in the stock, and the
optimal portfolio is $\min\{\beta^1,\overline{\pi}\}$. The reason is
that in this situation, $\underline{\mu}\geq
R+(1-p)\overline{\sigma}^2$, i.e. the stock's return even with the
worst estimation of the drift is still higher than the borrowing
cost. Hence, the stock's high risk premium attracts the investor to
{borrow to} invest as much as possible {to approach the optimal
strategy without constraint}.

(ii) \emph{Full-position strategy.} When $\beta^1\leq 1\leq
\beta^2$, the investor will simply invest all her wealth in the
stock with no additional borrowing or lending. In this case, since
$\underline{\mu}\leq R+(1-p)\overline{\sigma}^2$, there exists a
possibility that the stock's return may not be good enough to
compensate for the borrowing cost. As a result, the investor would
prefer not to borrow. On the other hand, since $\underline{\mu}\geq
r+(1-p)\overline{\sigma}^2$, the stock's return even in the worst
scenario is still better than the return from the bank account, and
accordingly, the investor would put all her money in the stock
rather than in the bank account.

(iii) \emph{Lend-and-buy strategy}. When $0\leq \beta^2\leq 1$, the
investor will invest $\beta^2$ proportion of her wealth in the
stock, and the remaining proportion $(1-\beta^2)$ in the bank
account to earn the interest rate $r$. This is similar to the
standard Merton's strategy with Sharpe ratio
$(\underline{\mu}-r)/\overline{\sigma}$.

(iv) \emph{No-trading strategy}. When $\beta^2\leq 0\leq \beta^3$,
the investor will put all her money in the bank account. In this
case, $\underline{\mu}\leq r\leq \overline{\mu}$, so there is a risk
that the return from buying the stock is not as good as holding the
bank account, and the investor would prefer not to invest in the
stock. On the other hand, the best estimation of the drift
$\overline{\mu}$ is still better than the interest rate $r$, so
implementing a shortsale strategy may incur a potential loss for the
investor. This refrains her from short selling the stock.

(v) \emph{Shortsale strategy}. When $\beta^3\leq 0$, the investor
will hold short position in the stock as much as possible, which is
$\max\{\beta_3,\underline{\pi}\}$ units of her wealth in this
situation. Consequently, she keeps
$(1-\max\{\beta_3,\underline{\pi}\})$ units of her wealth in the
bank account in order to earn the interest rate $r$.

We can further illustrate the above five optimal portfolio
strategies via the following figure, where the horizontal axes
represent the values of $\beta^1$, $\beta^2$ and $\beta^3$ from the
top to the bottom, and the vertical axis represents the optimal
portfolio.\\

\begin{picture}(0,230)(190,-15)\label{figureinvestment}
\put(307,6){\vector(0,1){200}}\put(293,205){$\pi^*$}
\textcolor{red}{\put(190,8){\vector(1,0){315}}\put(508,10){$\beta^1$}}
\textcolor{blue}{\put(190,28){\vector(1,0){315}}\put(508,30){$\beta^2$}}
\textcolor{cyan}{\put(190,48){\vector(1,0){315}}\put(508,50){$\beta^3$}}
\put(257,47){$\bullet$}\put(252,42){$\underline{\pi}$}
\put(298,47){$\bullet$}\put(292,42){$0$}
\put(338,27){$\bullet$}\put(342,34){$0$}
\put(378,27){$\bullet$}\put(382,34){$1$}
\put(418,7){$\bullet$}\put(422,14){$1$}
\put(458,7){$\bullet$}\put(462,14){$\overline\pi$}
\put(298,67){$\bullet$}\put(305,67){$\underline{\pi}$}
\put(298,107){$\bullet$}\put(305,99){$0$}
\put(298,147){$\bullet$}\put(305,147){$1$}
\put(298,187){$\bullet$}\put(305,187){$\overline\pi$}
{\thicklines\textcolor{cyan}{\put(200,70){\line(1,0){60}}\put(260,70){\line(1,1){40}}\put(300,110){\line(1,0){20}}}\textcolor{blue}{\put(320,110){\line(1,0){20}}
\put(340,110){\line(1,1){40}}\put(380,150){\line(1,0){20}}}\textcolor{red}{\put(400,150){\line(1,0){20}}\put(420,150){\line(1,1){40}}
\put(460,190){\line(1,0){50}}}}
\put(340,30){\line(0,1){10}}\put(340,45){\line(0,1){10}}\put(340,60){\line(0,1){10}}
\put(340,75){\line(0,1){10}}\put(340,90){\line(0,1){10}}\put(340,105){\line(0,1){10}}
\put(340,120){\line(0,1){10}}\put(340,135){\line(0,1){10}}\put(340,150){\line(0,1){10}}
\put(340,165){\line(0,1){10}}\put(340,180){\line(0,1){10}}\put(340,195){\line(0,1){10}}
\put(380,30){\line(0,1){10}}\put(380,45){\line(0,1){10}}\put(380,60){\line(0,1){10}}
\put(380,75){\line(0,1){10}}\put(380,90){\line(0,1){10}}\put(380,105){\line(0,1){10}}
\put(380,120){\line(0,1){10}}\put(380,135){\line(0,1){10}}\put(380,150){\line(0,1){10}}
\put(380,165){\line(0,1){10}}\put(380,180){\line(0,1){10}}\put(380,195){\line(0,1){10}}
\put(420,10){\line(0,1){10}}\put(420,25){\line(0,1){10}}\put(420,40){\line(0,1){10}}
\put(420,55){\line(0,1){10}}\put(420,70){\line(0,1){10}}\put(420,85){\line(0,1){10}}
\put(420,100){\line(0,1){10}}\put(420,115){\line(0,1){10}}\put(420,130){\line(0,1){10}}
\put(420,145){\line(0,1){10}}\put(420,160){\line(0,1){10}}\put(420,175){\line(0,1){10}}
\put(420,190){\line(0,1){10}}\put(420,205){\line(0,1){5}}
\put(220,85){\textcolor{cyan}{shortsale
strategy}}\put(290,112){no-trading
strategy}\put(320,132){\textcolor{blue}{lend-and-buy strategy}}
\put(375,152){full-position
strategy}\put(420,170){\textcolor{red}{borrow-to-buy strategy}}
\put(240,-10){$\textcolor{red}{\beta^1={\underline{\mu}-R\over
(1-p)\overline{\sigma}^2}},\qquad
  \textcolor{blue}{\beta^2={\underline{\mu}-r\over (1-p)\overline{\sigma}^2}},\qquad
  \textcolor{cyan}{\beta^3={\overline{\mu}-r\over
  (1-p)\overline{\sigma}^2}}.$}
\end{picture}
\begin{center}
Figure 1: the optimal portfolio strategies
\end{center}

In the existing literature, the worst-case parameters are usually
bang-bang type, i.e., they take values at the boundaries of the
uncertain parameter set. Next, we give an example where the
worst-case drift and volatility are an interior point in the
uncertain parameter set. In particular, the worst-case volatility
may not be its upper bound anymore.
\smallskip

\begin{assumption}\label{correlation}
Assume that $R=r,\overline{\pi}=+\infty,\underline{\pi}=-\infty$ and
$\mathbb{B}=\{(\mu,\sigma):\mu=\underline{\mu}+\alpha,\sigma=\underline{\sigma}^2+k\alpha^q,\alpha\in[\,0,\overline{\alpha}\,]\,\}$,
where $\underline{\mu},\underline{\sigma},k,q,\overline{\alpha}$ are
constants satisfying {$\underline{\sigma}\geq0,
k>0,0<q<1,\overline{\alpha}\geq0$.}\smallskip
\end{assumption}

The set $\mathbb{B}$ indicates that the ambiguities about drift and
volatility are correlated. A higher return is associated with a
larger risk. The limiting case $q=1$ means that the relationship
between the ambiguity about drift and the ambiguity about the
volatility square is linear, which is just Example 2.4 in
\cite{Epstein}. The other spectrum $q=0$ means no ambiguity about
volatility. Finally, $0<q<1$ means that the relationship between the
ambiguity about drift and the ambiguity about {the volatility
square} is sub-linear.

\begin{theorem}\label{theorem for ambiguity correlation}
Under Assumption \ref{correlation}, the worst-case parameters
$(\mu^*,\sigma^* )$ and the optimal portfolio $\pi^*$ are given as
follows:

(i) the worst-case parameters
$(\mu^*,\sigma^*)=(\underline{\mu}+\alpha^*,\sqrt{\underline{\sigma}^2+k(\alpha^*)^q})$,
with
$$
\alpha^*=\left\{
 \begin{array}{ll}
 r-\underline{\mu},\qquad &-\overline{\alpha}<\underline{\mu}-r\leq 0;
 \\[2mm]
 \alpha_0, &0<\underline{\mu}-r
 <\widehat{\alpha}:=[\,2\underline{\sigma}^2\overline{\alpha}^{1-q}+{k(2-q)\overline{\alpha}\,]/(kq)};
 \\[2mm]
 \overline{\alpha},&\mbox{otherwise},
 \end{array}
 \right.
$$
where $\alpha_0$ is the solution of the following algebra equation
(for the case $\underline{\mu}-r>0$), \be\label{algebra equation}
 h_1(\alpha):=2\underline{\sigma}^2+{k(2-q)\alpha^q-k q(\underline{\mu}-r)\alpha^{q-1}}=0;
\ee

(ii) the optimal portfolio $\pi^*$ is a constant process given by
$(\mu^*-r)/((1-p)(\sigma^*)^2)$.
\end{theorem}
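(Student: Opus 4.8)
The plan is to reduce everything to finding the saddle point $(x_\pi^*; x_\mu^*, x_\sigma^*)$ of the function $g$ in (\ref{definition of g}) over $[\underline{\pi},\overline{\pi}]\times\mathbb{B}$, since by the splitting $F_i=f_i+g$ and the characterization (\ref{optimalportfolio}), together with Theorems \ref{Theorem1 for power} and \ref{Theorem1 for log}, this saddle point delivers both the worst-case parameters $(\mu^*,\sigma^*)=(x_\mu^*,\sqrt{x_\sigma^*})$ and the optimal portfolio $\pi^*=x_\pi^*$. Under Assumption \ref{correlation} we have $R=r$, so the borrowing term $(R-r)(1-x_\pi)^-$ vanishes and the portfolio constraints are absent; hence
\[
 g(x_\pi; x_\mu, x_\sigma) = \frac{p-1}{2}\,x_\sigma x_\pi^2 + (x_\mu - r)\,x_\pi + r .
\]
Since $p-1<0$, this is strictly concave in $x_\pi$ with unconstrained maximizer $x_\pi^*=(x_\mu-r)/((1-p)x_\sigma)$, and substituting gives $\sup_{x_\pi}g = r + (x_\mu-r)^2/(2(1-p)x_\sigma)$. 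Writing $m:=\underline{\mu}-r$ and parameterizing $\mathbb{B}$ by $\alpha\in[0,\overline{\alpha}]$, the inner minimization then reduces to minimizing
\[
 \Phi(\alpha):=\frac{(m+\alpha)^2}{\underline{\sigma}^2+k\alpha^q}
\]
over $[0,\overline{\alpha}]$.

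First I would compute $\Phi'$ and factor its numerator as $(m+\alpha)\,h_1(\alpha)$, with $h_1$ exactly the function in (\ref{algebra equation}); this pins down the critical-point equation. Then I would split by the sign of $m$. When $-\overline{\alpha}<m\le 0$, every term of $h_1$ is nonnegative so $h_1>0$, whence $\Phi$ decreases then increases and attains its global minimum value $0$ at the interior point $\alpha^*=-m=r-\underline{\mu}$, giving $\mu^*=r$ and $x_\pi^*=0$. When $m>0$ we have $m+\alpha>0$ on $[0,\overline{\alpha}]$, so $\mathrm{sgn}\,\Phi'=\mathrm{sgn}\,h_1$; the key computation is
\[
 h_1'(\alpha)=kq\,\alpha^{q-2}\big[(2-q)\alpha+(1-q)m\big]>0,
\]
so $h_1$ is strictly increasing with $h_1(0^+)=-\infty$ and $h_1(+\infty)=+\infty$, yielding a unique positive root $\alpha_0$ at which $\Phi$ switches from decreasing to increasing. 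Monotonicity of $h_1$ gives $\alpha_0<\overline{\alpha}\iff h_1(\overline{\alpha})>0$, and rearranging $h_1(\overline{\alpha})>0$ produces precisely $m<\widehat{\alpha}$; thus $\alpha^*=\alpha_0$ when $0<m<\widehat{\alpha}$ and $\alpha^*=\overline{\alpha}$ when $m\ge\widehat{\alpha}$. The remaining subcase $m\le-\overline{\alpha}$ (where $h_1>0$ while $m+\alpha\le 0$, so $\Phi$ is decreasing) likewise gives $\alpha^*=\overline{\alpha}$, and these three branches reproduce assertion (i).

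Finally I would verify that the candidate $(x_\pi^*; x_\mu^*, x_\sigma^*)$ with $x_\pi^*=(\mu^*-r)/((1-p)(\sigma^*)^2)$ is a genuine saddle point of $g$, which simultaneously establishes assertion (ii). The left inequality in (\ref{optimalportfolio}) is immediate since $x_\pi^*$ is the concave maximizer. For the right inequality I would freeze $x_\pi^*$ and observe that $\alpha\mapsto g(x_\pi^*;\underline{\mu}+\alpha,\underline{\sigma}^2+k\alpha^q)$ is convex in $\alpha$, because its only nonlinear term is $\tfrac{p-1}{2}k(x_\pi^*)^2\alpha^q$, a convex function as $p-1<0$ and $\alpha^q$ is concave; hence its minimum over $[0,\overline{\alpha}]$ is determined by the first-order condition. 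A short calculation shows that this condition is exactly $h_1(\alpha^*)=0$ when $x_\pi^*\neq 0$, and is automatic when $x_\pi^*=0$ (since then $g(0;\cdot)\equiv r$), so the minimizer agrees with the $\alpha^*$ found above, confirming the saddle point. I expect the main obstacle to be the case analysis of the middle step: establishing the strict monotonicity of $h_1$, matching the threshold $h_1(\overline{\alpha})>0$ with the stated quantity $\widehat{\alpha}$, and treating the degenerate regimes $m\le 0$ (where $\min\Phi=0$ and $\pi^*=0$) and $m\le-\overline{\alpha}$ separately from the interior-root regime.
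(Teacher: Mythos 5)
Your proposal is correct, and its verification of the saddle point runs parallel to the paper's: the paper also establishes the two inequalities in (\ref{optimalportfolio}) separately, using completion of the square in $x_\pi$ for the max side and, for the min side, an auxiliary function $h_2(\alpha)$ (a rescaling of $g(\pi^*;\underline{\mu}+\alpha,\underline{\sigma}^2+k\alpha^q)$) whose second derivative $h_2''(\alpha)=kq(1-q)\alpha^{q-2}(\underline{\mu}+\alpha^*-r)$ plays exactly the role of your convexity observation. What you add that the paper omits is the derivation of the candidate $\alpha^*$: reducing $\inf\sup$ to minimizing $\Phi(\alpha)=(m+\alpha)^2/(\underline{\sigma}^2+k\alpha^q)$ and factoring the numerator of $\Phi'$ as $(m+\alpha)h_1(\alpha)$ explains where the equation $h_1=0$ and the threshold $\widehat{\alpha}$ come from, whereas the paper simply posits $\alpha^*$ and verifies. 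Your framing of convexity in $\alpha$ (a negative constant times the concave $\alpha^q$ plus an affine term) is also cleaner than the paper's explicit computation of $h_2''$. One point to tighten when writing this up: the first-order condition $h_1(\alpha^*)=0$ characterizes the minimizer of the convex map $\alpha\mapsto g(\pi^*;\cdot)$ only in the interior regime $0<m<\widehat{\alpha}$; in the boundary regimes ($m\geq\widehat{\alpha}$, where the one-sided derivative at $\overline{\alpha}$ equals $x_\pi^*h_1(\overline{\alpha})/(2(1-p)(\sigma^*)^2)\cdot(\mbox{positive factor})\leq0$, and $m\leq-\overline{\alpha}$, where $x_\pi^*\leq0$ makes the derivative nonpositive termwise) you must instead argue from the sign of the derivative at $\overline{\alpha}$ together with its monotonicity, which is precisely the case analysis the paper carries out for $h_2'$. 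You flag this at the end, so it is a matter of execution rather than a gap.
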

\begin{proof}
First, we prove that the algebra equation~\eqref{algebra equation}
has a unique zero crossing point $\alpha_0$ for the case
$\underline{\mu}-r>0$ and, moreover,
$\alpha_0\in(0,\overline{\alpha})$ if $0<\underline{\mu}-r
<\widehat{\alpha}$. Indeed, it is not difficult to check that for
$\underline{\mu}-r>0$, we have
$$
 \lim\limits_{\alpha\rightarrow0^+}h_1(\alpha)=-\infty,\;\; \lim\limits_{\alpha\rightarrow+\infty}h_1(\alpha)=+\infty,
$$
and
$$
 h^\prime_1(\alpha)={kq(2-q)\alpha^{q-1}+kq(1-q)(\underline{\mu}-r)\alpha^{q-2}}>0.
$$
Hence, $h_1(\cdot)$ in the algebra equation~\eqref{algebra equation}
has a unique zero crossing point $\alpha_0$ for
$\underline{\mu}-r>0$. Moreover, direct computations show that if
$\underline{\mu}-r<\widehat{\alpha}$, then
$$
  h_1(\overline{\alpha})=2\underline{\sigma}^2+{k(2-q)\overline{\alpha}^q-kq(\underline{\mu}-r)\overline{\alpha}^{q-1}}>0,
$$
which means $\alpha_0\in(0,\overline{\alpha})$ if
$0<\underline{\mu}-r <\widehat{\alpha}$.

Secondly, we show that
$$
  g(\pi^*;\mu^*,(\sigma^*)^2)\leq g(\pi^*;x_\mu,{x_\sigma}),\;\;(x_\mu,{x_\sigma})\in\mathbb{B}.
$$
To this end, note that \bee
  g(\pi^*;x_\mu,{x_\sigma})&=&{p-1\over 2}\,(\,\underline{\sigma}^2+k\alpha^q\,) (\pi^*)^2+(\underline{\mu}+\alpha-r)\pi^*+r
  ={(\underline{\mu}+\alpha^*-r)h_2(\alpha)\over 2{(1-p)}(\sigma^*)^4}+r,
\eee
where
$$
 h_2(\alpha):={-(\underline{\sigma}^2+k\alpha^q)}(\underline{\mu}+\alpha^*-r)
 +2[\,\underline{\sigma}^2+k(\alpha^*)^q\,](\underline{\mu}+\alpha-r),
$$
and
$$
 h_2^\prime(\alpha)={-kq}\alpha^{q-1}(\underline{\mu}+\alpha^*-r)+2[\,\underline{\sigma}^2+k(\alpha^*)^q\,],\;\;
 h_2^{\prime\prime}(\alpha)={kq(1-q)}\alpha^{q-2}(\underline{\mu}+\alpha^*-r).
$$

We divide the possible values of $\underline{\mu}-r$ into four
cases. Fix $\alpha\in[\,0,\overline{\alpha}\,]$ and
$(x_\mu,x_\sigma)\in\mathbb{B}$. If
$\underline{\mu}-r\leq-\overline{\alpha}$, then
$$ \underline{\mu}+\alpha^*-r\leq 0,\;\;h_2^\prime(\alpha)>0,\;\;h_2(\alpha)\leq h_2(\overline{\alpha})=h_2(\alpha^*),\;\;
 g(\pi^*;x_\mu,{x_\sigma})\geq  g(\pi^*;\mu^*,(\sigma^*)^2).
$$
If $-\overline{\alpha}<\underline{\mu}-r\leq 0$, then
$$
 \underline{\mu}+\alpha^*-r=\underline{\mu}+(r-\underline{\mu})-r=0,\;\;\pi^*=0,\;\;
 g(\pi^*;x_\mu,{x_\sigma})=r=g(\pi^*;\mu^*,(\sigma^*)^2).
$$
If $0<\underline{\mu}-r<\widehat{\alpha}$, then
$$
 \underline{\mu}+\alpha^*-r>0,\;\;h_2^{\prime\prime}(\alpha)>0,\;\;
 h_2^{\prime}(\alpha^*)=0,\;\;h_2(\alpha)\geq h_2(\alpha^*),\;\;
 g(\pi^*;x_\mu,{x_\sigma})\geq  g(\pi^*;\mu^*,(\sigma^*)^2),
$$
where we have used the fact that $h_1(\alpha^*)=0$ implies that
$h_2^{\prime}(\alpha^*)=0$. Finally, if $\underline{\mu}-r\geq
\widehat{\alpha}$, then
$$
 h_2^{\prime}(\overline{\alpha})=h_1(\overline{\alpha})\leq0,\;\;h_2^{\prime\prime}(\alpha)>0,\;\;h_2^{\prime}(\alpha)\leq0,\;\;
 h_2(\alpha)\geq  h_2(\alpha^*),\;\;
 g(\pi^*;x_\mu,{x_\sigma})\geq g(\pi^*;\mu^*,(\sigma^*)^2).
$$

Thirdly, we prove that
$$
  g(\pi^*;\mu^*,(\sigma^*)^2)\geq g(x_\pi;\mu^*,(\sigma^*)^2),\;\;x_\pi\in{\mathbb{R}}.
$$
To see this, we note that \bee
  g(x_\pi;\mu^*,(\sigma^*)^2)&=&{p-1\over 2}(\sigma^*)^2 x_\pi^2+(\mu^*-r)x_\pi+r
  \\[2mm]
  &=&{p-1\over 2}(\sigma^*)^2\left[\,x_\pi-{\mu^*-r\over (1-p)(\sigma^*)^2}\,\right]^2
  +{(\mu^*-r)^2\over 2(1-p)(\sigma^*)^2}+r.
\eee It is then clear that $g(x_\pi;\mu^*,(\sigma^*)^2)$ attains its
maximum at the point $x_\pi=\pi^*$, so $(x_\pi^*
;\mu^*,(\sigma^*)^2)$ is a saddle point of $g$, and the conclusion
follows from Theorems \ref{Theorem1 for power} and \ref{Theorem1 for
log}.
\end{proof}

If $\underline{\mu}-r\leq -\overline{\alpha}$, then the investor
will short sell her stock, which is similar to {\em Shortsale
strategy} in {Theorem \ref{theorem_portfolio}}. Moreover, $\pi^*<0$
implies that the worst-case drift $\mu^*$ and volatility $\sigma^*$
attain their respective upper bounds
$\underline{\mu}+\overline{\alpha}$ and
$\sqrt{\underline{\sigma}^2+k(\overline{\alpha})^q}$.

If $-\overline{\alpha}<\underline{\mu}-r{< 0}$, then the lower bound
of the drift $\underline{\mu}\leq r$ and the upper bound
$\underline{\mu}+\overline{\alpha}>r$. Similar to {\em No-trading
strategy} in {Theorem \ref{theorem_portfolio}}, the investor may
suffer losses if she buys or short sells the stocks, so she will
simply invest all her money in the bank account. Moreover, $\pi^*=0$
implies that the estimation of the drift and volatility is
irrelevant in this situation and, without loss of generality, we let
$(\mu^*,\sigma^*)=\left(\,r,\sqrt{\underline{\sigma}^2+k(r-\underline{\mu})^q}\,\right)$.

If $\underline{\mu}-r>0$, then the investor will invest in the stock
according to the optimal proportion
$\pi^*=(\mu^*-r)/((1-p)(\sigma^*)^2)>0$. {If there is no ambiguity
about volatility, the worst-case drift is its lower bound
$\underline{\mu}$ and $\alpha^*=0$. Since the correlation between
the uncertain drift and uncertain volatility is positive, the
worst-case parameter $\alpha^*=\alpha_0$, which is an interior point
of the interval $[\,0,\overline{\alpha}\,]$ if $0<\underline{\mu}-r
 <\widehat{\alpha}$.} In
particular, the worst-case volatility may not be its upper bound
anymore. This is in contrast to Theorem
 \ref{theorem_portfolio}, where the worst-case
parameters take values at the boundaries of the uncertain parameter
set.

%%%%%%%%%%%%%%%%%%%%%%%%%%%%%%%%%%%%%%%%%%%%%%%%%%%%%%%%%%%5
\subsection{The Optimal consumption under power utility}

In this section, we compute the explicit solutions to ODEs
(\ref{intergralequation}) and (\ref{intergralequation1 for log}),
which in turn allows us to construct the optimal consumption of the
maxmin problem (\ref{problem}) (cf. (\ref{widehat_c}) or
(\ref{widehat_c for log})). Note that if $\lambda=0$ in
(\ref{objectfunctional1}) and $\overline{c}=\underline{c}$, the
consumption does not play a role and the optimal consumption
strategy is simply $c^*_t=\overline{c}=\underline{c}$. Hence, we
focus on the case $\lambda>0$ and $\overline{c}>\underline{c}$ in
the rest of the paper. We first present the result for power
utility.

\begin{theorem}\label{theorem_consumption}
Let $T>0$ be {a large enough} number. For the power utility case,
the optimal consumption $c^*_t=c^*(t)$, $t\in[0,T]$, is a
deterministic process, where $c^*(t)=\widetilde{x}_{c,P}^*(q_P(t))$
with $\widetilde{x}_{c,P}^*(\cdot)$ given in (\ref{widehat_c}) {and
$q_P(\cdot)$ given in Table 8}. Moreover, the optimal consumption
$c^*_t$ is summarized in Table 2\footnote{{In the case of
$\underline{c}=0$, the results are similar to those in Table
\ref{table2} except that
$\underline{c}I_0^{123}+\widehat{c}(t)I^{12}_{123}+\overline{c}I^4_{12}$
and $\underline{c}I_0^{23}+\widehat{c}(t)I^4_{23}$ are replaced by
$\widehat{c}(t)I^{12}_0+\overline{c}I^4_{12}$ and $\widehat{c}(t)$,
respectively.} Note that when $\underline{c}=0$, since $\lambda>0$,
the last two rows about the optimal consumption are then
irrelevant.}.

  \begin{table}[h!]
    \caption{the optimal consumption {in the case of $\underline{c}>0$}}
    \begin{tabular}{c c c c c c }
       \hline
        &  {\tiny$\rho-pK$} & & & &   \\
       & {\tiny$(-\infty,(1-p)\underline{c}\,)$} &  {\tiny$\{(1-p)\underline{c}\,\}$} &
       {\tiny$((1-p)\underline{c},(1-p)\overline{c}\,)$} &  {\tiny$\{(1-p)\overline{c}\,\}$}
       &  {\tiny$((1-p)\overline{c},+\infty)$}   \\
       \hline
       {\tiny$\underline{c}<\overline{c}<\lambda^{1/(1-p)}$} &  {\tiny$\underline{c}I_0^{123}+\widehat{c}(t)I^{12}_{123}+\overline{c}I^4_{12}$}
       &{\tiny$\widehat{c}(t)I^{12}_0+\overline{c}I^4_{12}$} &  {\tiny$\widehat{c}(t)I^{12}_0+\overline{c}I^4_{12}$} &  \cellcolor{yellow!25}{\tiny$\overline{c}$}
       &\cellcolor{yellow!25}{\tiny$\overline{c}$}
       \\
       {\tiny$\underline{c}<\overline{c}=\lambda^{1/(1-p)}$} & {\tiny$\underline{c}I_0^{23}+\widehat{c}(t)I^4_{23}$}
       &\cellcolor{red!25}{\tiny$\widehat{c}(t)$} & \cellcolor{red!25}{\tiny$\widehat{c}(t)$} & \cellcolor{yellow!25}{\tiny$\overline{c}$} & \cellcolor{yellow!25}{\tiny$\overline{c}$}
       \\
       {\tiny$\underline{c}<\lambda^{1/(1-p)}<\overline{c}$} & {\tiny$\underline{c}I_0^{23}+\widehat{c}(t)I^4_{23}$}
       & \cellcolor{red!25}{\tiny$\widehat{c}(t)$} & \cellcolor{red!25}{\tiny$\widehat{c}(t)$} & \cellcolor{red!25}{\tiny$\widehat{c}(t)$} & {\tiny$\overline{c}I^{21}_0+\widehat{c}(t)I^4_{21}$}
       \\
       {\tiny$\lambda^{1/(1-p)}=\underline{c}<\overline{c}$} & \cellcolor{blue!25}{\tiny$\underline{c}$} & \cellcolor{blue!25}{\tiny$\underline{c}$}
       & \cellcolor{red!25}{\tiny$\widehat{c}(t)$} & \cellcolor{red!25}{\tiny$\widehat{c}(t)$} &{\tiny$\overline{c}I^{21}_0+\widehat{c}(t)I^4_{21}$}
       \\
       {\tiny$\lambda^{1/(1-p)}<\underline{c}<\overline{c}$} & \cellcolor{blue!25}{\tiny$\underline{c}$} & \cellcolor{blue!25}{\tiny$\underline{c}$}
       & {\tiny$\widehat{c}(t)I_0^{32}+\underline{c}I^4_{32}$} & {\tiny$\widehat{c}(t)I_0^{32}+\underline{c}I^4_{32}$}
       & {\tiny$\overline{c}I^{321}_0+\widehat{c}(t)I^{32}_{321}+\underline{c}I^4_{32}$}
       \\
       \hline
    \end{tabular}  \label{table2}
  \end{table}
\noindent Herein, the constant $K$ in the table corresponds to the
future investment contributing factor in (\ref{intergralequation}),
and has the explicit form \be\label{constantK}
 K:=g(x_{\pi}^*;x_{\mu}^*,x_{\sigma}^*)=
 \left\{
 \begin{array}{ll}
 R+\overline{\pi}(\underline{\mu}-R)
 -{1-p\over 2}\,\overline{\sigma}^2\overline{\pi}^2,\quad
 &\beta^1\geq \overline{\pi};
 \vspace{2mm}\\
 R+{(\underline{\mu}-R)^2\over 2(1-p)\overline{\sigma}^2},\quad
 &1\leq\beta^1\leq\overline{\pi};
 \vspace{2mm}\\
 \underline{\mu}-{1-p\over2}\,\overline{\sigma}^2,\quad
 &\beta^1\leq1\leq\beta^2;
 \vspace{2mm}\\
 r+{(\underline{\mu}-r)^2\over 2(1-p)\overline{\sigma}^2},\quad
 &0\leq\beta^2\leq1;
 \vspace{2mm}\\
 r,\quad
 &\beta^2\leq0\leq \beta^3;
 \vspace{2mm}\\
 r+{(\overline{\mu}-r)^2\over 2(1-p)\overline{\sigma}^2},\quad
 &\underline{\pi}\leq\beta^3\leq0;
 \vspace{2mm}\\
 r+\underline{\pi}(\overline{\mu}-r)-{1-p\over 2}\,\overline{\sigma}^2\underline{\pi}^2,\quad
 &\beta^3\leq\underline{\pi},
 \end{array}
 \right.
\ee and {$\widehat{c}(t)=\widehat{c}_P(q_P(t))$} (cf.
(\ref{widehat_c})). The indicator function $I_{a}^{b}$ represents
the time period $[T_a,T_b]$ with $T_0=0$ and $T_4=T$, where the
explicit forms of different time periods are given in Appendix B.
\end{theorem}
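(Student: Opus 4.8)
The plan is to exploit the product structure of the one-dimensional problem. Since here $F_P=f_P+g$ and the saddle point of $g$ is the \emph{constant} triple $(\pi^*;\mu^*,\sigma^*)$ identified in Theorem \ref{theorem_portfolio}, the function $G_P$ of Lemma \ref{lemma for saddle point} decouples as $G_P(x_q)=f_P(x_q;\widetilde{x}^*_{c,P}(x_q))+K$, where $K=g(\pi^*;\mu^*,\sigma^*)$ is the constant given explicitly by (\ref{constantK}). Feeding this into (\ref{intergralequation}) and differentiating reduces everything to the autonomous scalar ODE $q_P'(t)=H(q_P(t))$ with $q_P(T)=0$, where $H(x_q):=\rho-pK-p\,f_P(x_q;\widetilde{x}^*_{c,P}(x_q))$. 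The deterministic nature of $c^*_t$ is then immediate from Theorem \ref{Theorem1 for power}, because $c^*_t=\widetilde{x}^*_{c,P}(q_P(t))$ with $q_P$ a deterministic function; so it remains to solve this ODE explicitly and read off $c^*(t)$ from (\ref{widehat_c}).

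Next I would analyse $H$. An envelope computation gives $\frac{d}{dx_q}f_P(x_q;\widetilde{x}^*_{c,P}(x_q))=-\frac{\lambda}{p}e^{-x_q}(\widetilde{x}^*_{c,P}(x_q))^p$, so that $H'(x_q)=\lambda e^{-x_q}(\widetilde{x}^*_{c,P}(x_q))^p>0$ for every admissible $p$; hence $H$ is continuous and strictly increasing, the ODE integrated backward from $t=T$ is monotone, and $q_P$ is driven into the regime of (\ref{widehat_c}) selected by the (unique) equilibrium of $H$. The crucial point is that, using the interior branch of (\ref{optimalg1}), $p\,f_P(x_q;\widetilde{x}^*_{c,P}(x_q))=(1-p)\widehat{c}_P(x_q)$ on the middle regime, so evaluating at the two regime boundaries $\widehat{c}_P(x_q)=\underline c$ and $\widehat{c}_P(x_q)=\overline c$ yields exactly $(1-p)\underline c$ and $(1-p)\overline c$. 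Consequently the equilibrium consumption level equals $\frac{\rho-pK}{1-p}$ clipped to $[\underline c,\overline c]$, and the column thresholds $\rho-pK\lessgtr(1-p)\underline c,(1-p)\overline c$ are precisely the clipping boundaries, fixing $c^*(t)$ for $t$ small.

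Then I would integrate the ODE in each regime. On the interior regime the substitution $\widehat{c}(t):=\widehat{c}_P(q_P(t))=\lambda^{1/(1-p)}e^{q_P(t)/(p-1)}$ turns the equation into the logistic/Bernoulli form $\widehat{c}'(t)=\widehat{c}(t)\bigl(\widehat{c}(t)-\frac{\rho-pK}{1-p}\bigr)$, which is linearized by $v=1/\widehat{c}$ and solved in closed form, its equilibrium being the very level $\frac{\rho-pK}{1-p}$ found above. On the two boundary regimes $c^*$ is pinned at $\underline c$ or $\overline c$ and the separable equation $q_P'=(\rho-pK+p\underline c)-\lambda\underline c^{\,p}e^{-q_P}$ (resp.\ with $\overline c$) need only be inverted to locate crossing times. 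The terminal value $c^*(T)=\widetilde{x}^*_{c,P}(0)=\min\{\overline c,\max\{\underline c,\lambda^{1/(1-p)}\}\}$ explains the row split by $\lambda^{1/(1-p)}$ versus $\underline c,\overline c$: each table entry records how $c^*(t)$ transitions between its terminal value near $t=T$ and its equilibrium value for $t$ small, with the switching times $T_a$ obtained by inverting these explicit solutions.

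The main obstacle is the case bookkeeping. Crossing the five positions of $\lambda^{1/(1-p)}$ (rows) with the five positions of $\rho-pK$ (columns) forces one, in each cell, to trace the ordered sequence of regimes visited by $q_P$ as $t$ runs from $0$ to $T$, to verify which switching times are admissible, and to assemble the indicators $I^b_a$ accordingly. This is exactly where the hypothesis that $T$ is large enough enters: it guarantees that $T$ exceeds the largest switching time, so that the initial (equilibrium) phase is genuinely present and the crossing times $T_a\in[0,T]$ are well defined and correctly ordered. The explicit formulas for these $T_a$, relegated to Appendix B, are the routine but lengthy output of inverting the closed-form solutions obtained above.
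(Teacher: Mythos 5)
Your plan follows essentially the same route as the paper's Appendix B: reduce to the autonomous scalar ODE $q_P'(t)=\rho-pK-p\,f_P(q_P(t);\widetilde{x}^*_{c,P}(q_P(t)))$ with $q_P(T)=0$, exploit the monotonicity of $q_P$ (the paper's ``sign of $q_P'$ does not change''), integrate explicitly on each of the three consumption regimes, and classify the table cells by the terminal value $\widehat c(T)=\lambda^{1/(1-p)}$ (rows) and the long-horizon limit of $\widehat c(0)$ (columns), with $T$ large enough guaranteeing that the switching times fall inside $[0,T]$. Your envelope identity and logistic/Bernoulli substitution are merely cleaner ways of producing the paper's closed forms $q^1,q^2,q^3$; the one nuance is that the equilibrium of $H$ need not exist in the extreme cases $\rho-pK\le -p\underline c$ or $\rho-pK\le -p\overline c$ (where $q_P(0)$ diverges), but the clipped consumption limit you state remains correct there, exactly as handled by the paper's asymptotics (\ref{Asymptoticproperty1})--(\ref{Asymptoticproperty4}).
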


\begin{proof} Due to its length, the proof is postponed to Appendix B.
\end{proof}

Table 2 lists all the possible consumption patterns under different
parameters. For example,
$\underline{c}I_0^{123}+\widehat{c}(t)I^{12}_{123}+\overline{c}I^4_{12}$
in the first row and the first column (left-top corner) is the
optimal consumption when the market parameters satisfy
$\underline{c}<\overline{c}<\lambda^{1/(1-p)}$ and
$\rho-pK\in(-\infty,(1-p)\underline{c}\,)$. More specifically, in
the time interval $[0,T_{123}]$, the investor will consume at the
minimum rate $\underline{c}$. Then the investor will consume at the
optimal rate $\widehat{c}(t)=\lambda^{1/(1-p)}{\exp(q_P(t)/(p-1))}$
in the time interval $[T_{123},T_{12}]$, since in this case
$\underline{c}\leq \widehat{c}(t)\leq \overline{c}$. Finally, in the
remaining time interval $[T_{12},T]$, the investor will consume at
the maximum rate $\overline{c}$.

In contrast, in the right-bottom corner, we obtain a reversed
consumption pattern when
$\lambda^{1/(1-p)}<\underline{c}<\overline{c}$ and
$\rho-pK\in((1-p)\overline{c},+\infty)$. That is, the consumption
will be decreasing from the maximum rate $\overline{c}$ in
$[0,T_{321}]$, to $\widehat{c}(t)$ in $[T_{321},T_{32}]$, and
finally to the minimal rate $\underline{c}$ in $[T_{32},T]$.

In the following, we give some intuitive explanations of different
consumption patterns. From the expression of $f_P$ and $c^*(t)$, we
know that the optimal consumption $c^*_t=c^*(t)$ achieves the
maximum of the concave function $f_P(q_P(t),\cdot)$ in the interval
$[\underline{c},\overline{c}]$. Moreover, note that
$\widehat{c}(t)=\lambda^{1/(1-p)}\exp(q_P(t)/(p-1))$ as in
(\ref{widehat_c}) is the maximum point of $f_P(q_P(t),\cdot)$ on
$\mathbb{R}_+$. Hence, $c^*_t=\widehat{c}(t)$ if
$\underline{c}<\widehat{c}(t)<\overline{c}$. Otherwise, $c^*_t$ will
be either $\underline{c}$ or $\overline{c}$.

From the proof of Proposition \ref{proposition} below, we know
$q_P(t)$ is monotone in time $t$, so is $\widehat{c}(t)$. As a
result, whether $\widehat{c}(t)$ stays in
$[\underline{c},\overline{c}]$ or not only depends on its values at
the two end points $\widehat{c}(T)$ and $\widehat{c}(0)$, and their
relationship with $\underline{c}$ and $\overline{c}$.

In fact, it follows from $q_P(T)=0$ that
$\widehat{c}(T)=1/\lambda^{1-p}$. By the continuity of
$\widehat{c}(t)$, when $t$ approaches maturity $T$, $c^*(t)$ will
reach its upper bound $\overline{c}$ if
$\underline{c}<\overline{c}<\lambda^{1/(1-p)}$; $c^*(t)$ will be
precisely $\widehat{c}(t)$ if
$\underline{c}<\lambda^{1/(1-p)}<\overline{c}$; $c^*(t)$ will reach
its lower bound $\underline{c}$ if
$\lambda^{1/(1-p)}<\underline{c}<\overline{c}$. The above three
situations thus determine the classification of the rows in Table 2.

On the other hand, we have the following asymptotic results for
$\lim\limits_{T\rightarrow+\infty}\widehat{c}(0)$ in Table 3 ({see
also Appendix B, in particular
(\ref{Asymptoticproperty1})-(\ref{Asymptoticproperty4})}). By the
continuity of $\widehat{c}(t)$, when $T$ is large enough and $t$ is
near initial time $0$, $c^*(t)=\underline{c}$ if
$\rho-pK\in(-\infty,(1-p)\underline{c}\,)$; $c^*(t)=\widehat{c}$ if
$\rho-pK\in((1-p)\underline{c},(1-p)\overline{c}\,)$; and
{$c^*(t)=\overline{c}$} if $\rho-pK\in((1-p)\overline{c},+\infty)$.
Consequently, the above three situations divide the columns in Table
2.

  \begin{table}[h!]
    \caption{the limit of $\widehat{c}(0)$ when $T\rightarrow+\infty$}
    \begin{tabular}{c c c c c c }
       \hline
       $\rho-pK$ & $(-\infty,(1-p)\underline{c}\,)$ &  $\{(1-p)\underline{c}\,\}$ &
       $((1-p)\underline{c},(1-p)\overline{c}\,)$ &  $\{(1-p)\overline{c}\,\}$
       & $((1-p)\overline{c},+\infty)$    \\
       \hline
       $\lim\limits_{T\rightarrow+\infty}\,\widehat{c}(0)$ & $<\underline{c}$ & $=\underline{c}$
       & $\in(\,\underline{c},\overline{c}\,)$ & $=\overline{c}$
       & $>\overline{c}$  \\
       \hline
    \end{tabular}  \label{table3}
  \end{table}

Next, we further show that the optimal consumption admits some time
monotone properties. As opposed to the unconstrained consumption
case, the consumption constraints may force the optimal consumption
to be either nonincreasing or nondecreasing no matter the value of
$(\rho-pK)$.

\begin{proposition}\label{proposition}
The optimal consumption $c^*_t$, $t\in[0,T]$, has the following
monotone properties in time $t$, as specified in Table 4. The
symbols $\nearrow,\searrow$ and $\perp$ represent nondecreasing,
nonincreasing and independent of time $t$, respectively.
 \begin{table}[h!]
  \centering
  \fontsize{10}{12}\selectfont
  \caption{the optimal consumption in time} \label{table42}
    \begin{tabular}{|c|c|c|c|c|c|c}
    \hline
    {\scriptsize$\underline{c}<\overline{c}<\lambda^{1/(1-p)}$}&\multicolumn{6}{c|}{{\scriptsize$\nearrow$}}\cr\cline{2-7}
    \hline
    \multirow{2}{*}{{\scriptsize$\underline{c}\leq \lambda^{1/(1-p)}\leq \overline{c}$}}&
    \multicolumn{2}{c|}{{\scriptsize$\rho-pK<(1-p)\lambda^{1/(1-p)}$}}&\multicolumn{2}{c|}{{\scriptsize$\rho- pK=(1-p)\lambda^{1/(1-p)}$}}
    &\multicolumn{2}{c|}{{\scriptsize$\rho-pK>(1-p)\lambda^{1/(1-p)}$}}\cr\cline{2-7}
    &\multicolumn{2}{c|}{{\scriptsize$\nearrow$}}&\multicolumn{2}{c|}{{\scriptsize$\perp$}}&
    \multicolumn{2}{c|}{{\scriptsize$\searrow$}}\cr\cline{2-7}
    \hline
    {\scriptsize$\lambda^{1/(1-p)}<\underline{c}<\overline{c}$}&\multicolumn{6}{c|}{{\scriptsize$\searrow$}}\cr\cline{2-7}
    \hline
    \end{tabular}
\end{table}

\end{proposition}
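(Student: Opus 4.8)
The plan is to reduce the time-monotonicity of the optimal consumption $c^*_t=\widetilde{x}_{c,P}^*(q_P(t))$ to the monotonicity of the scalar function $q_P$, and then to fix its direction through the single number $q_P'(T)$. Differentiating the integral equation (\ref{intergralequation}) turns it into the autonomous scalar ODE $q_P'(t)=\rho-pG_P(q_P(t))=:\Phi(q_P(t))$ with terminal value $q_P(T)=0$. Here $\Phi$ depends on $q_P$ alone, since $G_P(x_q)=f_P(x_q;\widetilde{x}_{c,P}^*(x_q))+K$ with $K=g(x_\pi^*;x_\mu^*,x_\sigma^*)$ constant (the portfolio saddle point decouples from $x_q$). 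Because $p-1<0$, the map $\widehat{c}_P(x_q)=\lambda^{1/(1-p)}\exp(x_q/(p-1))$ of (\ref{widehat_c}) is strictly decreasing, and $\widetilde{x}_{c,P}^*$ is its truncation to $[\underline{c},\overline{c}]$; hence $x_q\mapsto\widetilde{x}_{c,P}^*(x_q)$ is nonincreasing, and $c^*_t$ will be monotone whenever $q_P$ is, with the opposite sense.

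The key structural fact I would establish is that $\Phi$ is strictly increasing. Using the three branches of $f_P(x_q;\widetilde{x}_{c,P}^*(x_q))$ in (\ref{optimalg1}) — where the middle branch simplifies to $pf_P=(1-p)\widehat{c}_P(x_q)$ — a direct differentiation (equivalently, the envelope theorem applied to $\max_{c}f_P(x_q;c)$) gives $\frac{d}{dx_q}\big[pf_P(x_q;\widetilde{x}_{c,P}^*(x_q))\big]=-\lambda e^{-x_q}(\widetilde{x}_{c,P}^*(x_q))^p<0$ on every branch, and the three pieces glue in a $C^1$ fashion at the two junctions. Thus $\Phi$ is $C^1$ and strictly increasing, with at most one zero $\bar{q}$. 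Because $\Phi$ is locally Lipschitz, a standard phase-line analysis shows that the trajectory of the autonomous ODE issuing from $q_P(T)=0$ stays on one side of $\bar{q}$ (or, if $\Phi$ has no zero, has derivative of fixed sign), so $\Phi(q_P(t))$ retains the constant sign of $\Phi(0)$ for all $t$; consequently $q_P$ is monotone — increasing when $\Phi(0)>0$, decreasing when $\Phi(0)<0$, and constant when $\Phi(0)=0$.

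It then remains to read off the direction in each regime from $\Phi(0)=\rho-pK-pf_P(0;\widetilde{x}_{c,P}^*(0))$, using $\widehat{c}_P(0)=\lambda^{1/(1-p)}$. In the middle regime $\underline{c}\le\lambda^{1/(1-p)}\le\overline{c}$, the value $x_q=0$ lies on the interior branch, so $\Phi(0)=\rho-pK-(1-p)\lambda^{1/(1-p)}$; the three columns of Table \ref{table42} then correspond exactly to $\Phi(0)<0$ (so $q_P\searrow$ and $c^*_t\nearrow$), $\Phi(0)=0$ ($c^*_t$ constant, i.e.\ $\perp$), and $\Phi(0)>0$ ($q_P\nearrow$ and $c^*_t\searrow$). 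In the regime $\overline{c}<\lambda^{1/(1-p)}$ I would show $c^*_t$ is always nondecreasing: if $\Phi(0)\le0$ then $q_P\searrow$ forces $c^*_t\nearrow$ directly, while if $\Phi(0)>0$ then $q_P\nearrow$ gives $q_P(t)\le q_P(T)=0$, whence $\widehat{c}_P(q_P(t))\ge\lambda^{1/(1-p)}>\overline{c}$ and $c^*_t\equiv\overline{c}$ is constant (still $\nearrow$). The remaining regime $\lambda^{1/(1-p)}<\underline{c}$ is the mirror image: there $c^*_t\equiv\underline{c}$ whenever $q_P\searrow$, so $c^*_t$ is always nonincreasing.

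The hard part will be the second paragraph: checking that $\Phi$ is strictly increasing and $C^1$ across the branch junctions, which is exactly what legitimizes upgrading the local datum $q_P'(T)=\Phi(0)$ to global monotonicity of $q_P$ through the phase-line argument. A secondary subtlety, handled in the last paragraph, is that in the two extreme regimes the naive rule ``\emph{$c^*_t$ moves opposite to $q_P$}'' must be corrected: precisely when that rule would predict the wrong sense, the truncation pins $c^*_t$ to the constant $\overline{c}$ (resp.\ $\underline{c}$), which is consistent with the single arrow recorded in Table \ref{table42}.
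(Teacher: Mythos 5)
Your proposal is correct and follows essentially the same route as the paper: both arguments reduce the claim to showing that $q_P'$ keeps a fixed sign determined by $q_P'(T)=\Phi(0)$ and then read off the three regimes, and your phase-line argument for the autonomous ODE with strictly increasing right-hand side $\Phi=\rho-pG_P$ is a repackaging of the paper's observation that $q_P''=-pG_P'(q_P)\,q_P'$ with $-pG_P'>0$ followed by a mean-value-theorem contradiction. The only substantive difference is minor but welcome: in the extreme regimes you show directly from the truncation that $c^*\equiv\overline{c}$ (resp.\ $\underline{c}$) precisely when the naive ``opposite sense'' rule would fail, whereas the paper disposes of those subcases by citing the classification in Theorem \ref{theorem_consumption}.
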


\begin{proof} It follows from the expressions of $\widetilde{x}_{c,P}^*(x_q)$ and $\widehat{c}_P(x_q)$ in $(\ref{widehat_c})$ that if $q_P(t)$ is
nonincreasing, then {$c^*(t)=\widetilde{x}_{c,P}^*(q_P(t))$} is
nondecreasing; if $q_P(t)$ is nondecreasing, then $c^*(t)$ is
nonincreasing. On the other hand, The expression {(\ref{optimalg1})}
and ODE (\ref{intergralequation}) lead to
\begin{equation}\label{second_derivative}
q_P^{\prime\prime}(t)={-p\bar{f}_P(q_P(t),c^*(t))}q_P^{\prime}(t),
\end{equation}
where \bee\label{optimalg2} \bar{f}_P(q_P(t),c^*(t)):=
 \left\{
 \begin{array}{lll}
 -{\lambda\over p}\overline{c}^pe^{-q_P(t)}{<0},
 & \text{if}\ q_P(t)<(p-1)\ln\overline{c}+\ln\lambda;
 \vspace{2mm}\\
 -{\lambda^{1/(1-p)}\over p}e^{q_P(t)/(p-1)}{<0},
 & \text{if}\ (p-1)\ln\overline{c}+\ln\lambda\leq q_P(t)\leq (p-1)\ln\underline{c}+\ln\lambda;
 \vspace{2mm}\\
 -{\lambda\over p}\underline{c}^pe^{-q_P(t)}{<0},
 & \text{if}\ q_P(t)>(p-1)\ln\underline{c}+\ln\lambda.
 \end{array}
 \right.
\eee

We claim that the sign of $q_P^{\prime}(t)$ does not change for
$t\in[0,T]$. Otherwise, suppose there exist $0\leq t_1<t_2\leq T$
such that $q_P^{\prime}(t_1)>0$ and $q_P^{\prime}(t_2)<0$. By the
continuity of $q_P^{\prime}(t)$, there exists $t\in(t_1,t_2)$ such
that $q_P^{\prime}(t)=0$. Now let
$t_3:=\inf\{t>t_1:q_P^{\prime}(t)=0\}$. It follows that
$t_3\in(t_1,t_2)$, $q_P^{\prime}(t_3)=0$, and $q_P^{\prime}(t)>0$ for
$t\in[t_1,t_3)$. By the Mean Value Theorem, there exits
$t_4\in(t_1,t_3)$ such that
$q_P^{\prime\prime}(t_4)=\frac{q_P^{\prime}(t_3)-q_P^{\prime}(t_1)}{t_3-t_1}<0$.
However, $q_P^{\prime\prime}(t)>0$ for $t\in[t_1,t_3)$ according to
(\ref{second_derivative}). This is a contradiction.

We have shown that $q_P(t)$ is either nonincreasing or nondecreasing
for $t\in[0,T]$. Thus, it suffices to consider the sign of
$q_P^{\prime}(T)$.

Let us first consider the case
$\underline{c}<\overline{c}<\lambda^{1/(1-p)}$. For this case, we
have $(p-1)\ln\overline{c}+\ln\lambda>0=q_P(T)$, and therefore,
{(\ref{optimalg1}) implies that} ODE (\ref{intergralequation}) at
$t=T$ reduces to
$$q_P^{\prime}(T)=-(\lambda\overline{c}^{p}-p\overline{c})-pK+\rho,$$
where the constant $K$ is given in (\ref{constantK}). However,
Theorem \ref{theorem_consumption} implies that
$c^*(t)\equiv\overline{c}$ if $\rho-pK\geq (1-p)\overline{c}$ in
this case, so we only need to consider the situation $\rho-pK<
(1-p)\overline{c}$ for the monotone property of $c^*(t)$. Together
with $\overline{c}<\lambda^{1/(1-p)}$, we further obtain that
$$q_P^{\prime}(T)<-(\overline{c}^{1-p}\overline{c}^p-p\overline{c})+(1-p)\overline{c}=0.$$
In turn, $q_P^{\prime}(t)\leq0$ for $t\in[0,T]$, which implies that
$c^*(t)$ is nondecreasing for $t\in[0,T]$.

The other two cases $\underline{c}\leq\lambda^{1/(1-p)}\leq
\overline{c}$ and $\lambda^{1/(1-p)}<\underline{c}<\overline{c}$ can
be treated in a similar way, so their proofs are omitted.
\end{proof}

%%%%%%%%%%%%%%%%%%%%%%%%%%%%%%%%%%%%%%%%%%%%%%%%%%%%%%%%%%%%%%%%%%
\subsection{The optimal consumption under logarithm utility}

\begin{theorem}\label{theorem_consumption for log}

Assume that $T$ is a large enough number. For the logarithm utility
case, the optimal consumption $c_t^*=\widetilde{x}_{c,L}^*
(q_L(t)),\ t\in [\,0,T\,]$, is a deterministic process, with
$\widetilde{x}_{c,L}^*(\cdot)$ and $q_L(\cdot)$ given respectively
in (\ref{widehat_c for log}) and (\ref{definition of g1 and g2}).
Moreover, the optimal consumption $c^*_t$ is summarized in Table
5\footnote{Note that when $\underline{c}=0$, since $\lambda>0$ and
$\rho\geq0$, the first row and the first column about the optimal
consumption are then irrelevant.}.

%Table 2
\begin{table}[h]
\centering
\caption{the optimal consumption in the case of $\underline{c}\geq 0$}
\begin{tabular}{m{0.08\columnwidth} m{0.16\columnwidth}<{\centering} m{0.12\columnwidth}<{\centering} m{0.12\columnwidth}<{\centering} m{0.12\columnwidth}<{\centering} m{0.16\columnwidth}<{\centering}}
\hline \quad &{\scriptsize$0<\rho
<\underline{c}$}&{\scriptsize$\rho=\underline{c}$}
&{\scriptsize$\underline{c}<\rho<\overline{c}$}&{\scriptsize$\rho=\overline{c}$}
&{\scriptsize$\rho>\overline{c}$}\\ \hline
{\scriptsize$0<\lambda\leq \underline{c}$}
&\cellcolor{blue!25}{\scriptsize$\underline{c}$}&\cellcolor{blue!25}{\scriptsize$\underline{c}$}
&
{\scriptsize$\widehat{c}(t)I_0^2+\underline{c}I_2^3$}&{\scriptsize$\widehat{c}(t)I_0^2+\underline{c}I_2^3$}&
{\scriptsize$\overline{c}I_0^1+\widehat{c}(t)I_1^2+\underline{c}I_2^3$}\\
\hline {\scriptsize$\underline{c}<\lambda<\overline{c}$}
&{\scriptsize$\underline{c}I_0^2+\widehat{c}(t)I_2^3$}
&\cellcolor{red!25}{\scriptsize$\hat c(t)$}
&\cellcolor{red!25}{\scriptsize$\hat c(t)$}
&\cellcolor{red!25}{\scriptsize$\hat
c(t)$}&{\scriptsize$\overline{c}I_0^1+\widehat{c}(t)I_1^3$}\\ \hline
{\scriptsize$\lambda\geq\overline{c}$}
&{\scriptsize$\underline{c}I_0^2+\widehat{c}(t)I_2^1+\overline{c}I_1^3$}
&{\scriptsize$\widehat{c}(t)I_0^1+\overline{c}I_1^3$}&{\scriptsize$\widehat{c}(t)I_0^1+\overline{c}I_1^3$}&\cellcolor{yellow!25}{\scriptsize$\overline{c}$}&\cellcolor{yellow!25}{\scriptsize$\overline{c}$}\\
\hline
\end{tabular}
\end{table}
\noindent Herein, $I^a_b$ represents the indicator function of the
time interval $[\,T_a,T_b\,]$, and \be\label{T1 T2}
  \widehat{c}(t)=\lambda e^{-q_L(t)},\quad T_0=0,\quad T_{1}=T+{1\over\rho}\ln{\lambda(\rho-\overline{c})\over\overline{c}(\rho-\lambda)},\quad T_{2}=T+{1\over\rho}\ln{\lambda(\rho-\underline{c})\over\underline{c}(\rho-\lambda)},\quad T_4=T.
\ee The function $q_L$ takes the form \be\label{definition of g1 and
g2}
 q_L(t)=\ln\left[\,{\lambda\over\rho}+\left(1-{\lambda\over\rho}\right)e^{-\rho(T-t)}\,\right].
\ee

Moreover, the optimal consumption $c^*_t$ is nonincreasing with
respect to $t$ for $\rho\geq\lambda$, and nondecreasing with respect
to $t$ for $\rho\leq\lambda$.
\end{theorem}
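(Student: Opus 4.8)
The plan is to leverage Theorem \ref{Theorem1 for log}, which already guarantees that the optimal consumption is the deterministic process $c_t^*=\widetilde{x}_{c,L}^*(q_L(t))$, with $\widetilde{x}_{c,L}^*(\cdot)$ the interval-clamping map in (\ref{widehat_c for log}) and $q_L(\cdot)$ the solution of the first ODE in (\ref{intergralequation1 for log}). The crucial simplification in the logarithm case is that this ODE for $q_L$ decouples from $G_L$ (the latter enters only the $Q_L$-equation), so $q_L$ is fully explicit. First I would differentiate the integral equation to get $q_L'(t)=\rho-\lambda e^{-q_L(t)}$ with terminal condition $q_L(T)=0$, and then solve it by the substitution $y:=e^{q_L}$, which turns it into the linear equation $y'=\rho y-\lambda$. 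Integrating with $y(T)=1$ yields $y(t)=\frac{\lambda}{\rho}+\left(1-\frac{\lambda}{\rho}\right)e^{-\rho(T-t)}$, and taking logarithms recovers (\ref{definition of g1 and g2}).

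Next I would read off the monotonicity from this closed form. Since $\widehat{c}(t)=\lambda e^{-q_L(t)}=\lambda/y(t)$ and $y'(t)=(\rho-\lambda)e^{-\rho(T-t)}$, the sign of $y'$ is constant and equal to that of $\rho-\lambda$; hence $y$ is nondecreasing exactly when $\rho\geq\lambda$, so $\widehat{c}$ is nonincreasing when $\rho\geq\lambda$ and nondecreasing when $\rho\leq\lambda$ (with $q_L\equiv0$ and $\widehat{c}\equiv\lambda$ in the boundary case $\rho=\lambda$). Because $c_t^*=\widetilde{x}_{c,L}^*(q_L(t))$ is obtained by clamping $\widehat{c}(t)$ to the fixed interval $[\underline{c},\overline{c}]$ as in (\ref{widehat_c for log}), and such clamping preserves the direction of monotonicity, the process $c_t^*$ inherits the time-monotonicity of $\widehat{c}$. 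This is precisely the final monotonicity assertion.

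To obtain Table 5, I would locate the two times at which the unconstrained optimum $\widehat{c}(t)$ meets the two consumption bounds. Solving $\widehat{c}(t)=\overline{c}$ and $\widehat{c}(t)=\underline{c}$, i.e. $y(t)=\lambda/\overline{c}$ and $y(t)=\lambda/\underline{c}$, reduces to single exponential equations whose solutions are exactly $T_1$ and $T_2$ in (\ref{T1 T2}). Using the clamping description, $c_t^*$ equals $\overline{c}$ on the set where $\widehat{c}(t)\geq\overline{c}$, equals $\underline{c}$ where $\widehat{c}(t)\leq\underline{c}$, and equals $\widehat{c}(t)$ in between; combined with the constant sign of $\rho-\lambda$, which both orients $\widehat{c}$ and orders $T_1,T_2$, this produces each cell of the table.

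I expect the main obstacle to be the bookkeeping in the table rather than any analytic difficulty. For $T_1,T_2$ to be well defined one needs the logarithm arguments $\frac{\lambda(\rho-\overline{c})}{\overline{c}(\rho-\lambda)}$ and $\frac{\lambda(\rho-\underline{c})}{\underline{c}(\rho-\lambda)}$ to be positive, which forces a sign analysis of $(\rho-\overline{c})$, $(\rho-\underline{c})$ and $(\rho-\lambda)$ in each of the fifteen cells; the position of $\lambda$ relative to $\underline{c},\overline{c}$ (the rows) and of $\rho$ relative to $\underline{c},\overline{c}$ (the columns) then determines whether each crossing time falls inside $[0,T]$ and in which order. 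The hypothesis that $T$ is large enough is exactly what guarantees that, in the cells where a crossing is relevant, $T_1$ and $T_2$ indeed lie in $[0,T]$ in the correct order, so that the full three-regime clamping pattern is realized within the trading horizon; verifying this monotone crossing picture case by case is the only genuinely laborious step.
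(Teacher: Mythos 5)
Your proposal is correct and follows essentially the same route as the paper: solve the $q_L$-ODE in closed form (the paper simply states the solution, which your substitution $y=e^{q_L}$ verifies), deduce the time-monotonicity of $c^*_t$ from the monotonicity of $q_L$ combined with the fact that the clamping map $\widetilde{x}^*_{c,L}$ is monotone, and then build Table 5 by locating the crossing times $T_1,T_2$ of $\widehat{c}$ with $\overline{c},\underline{c}$, using the endpoint values $\widehat{c}(T)=\lambda$ and $\lim_{T\to\infty}\widehat{c}(0)=\rho$ to classify rows and columns. The paper likewise carries out only one representative cell in detail and leaves the remaining cases to the same argument.
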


\begin{proof} First, it is clear that the solution of ODE (\ref{intergralequation1 for log}) takes the form (\ref{definition of g1 and g2}).
From (\ref{widehat_c for log}), we know that
$\widetilde{x}^*_{c,L}(x_q)$ is nonincreasing with respect to $x_q$.
Moreover, the expression (\ref{definition of g1 and g2}) implies
that $q_L(\cdot)$ is nondecreasing with respect to $t$ when
$\rho\geq\lambda$, and nonincreasing with respect to $t$ when
$\rho\leq\lambda$. Then, the monotonicity of
$c^*_t=\widetilde{x}^*_{c,L}(q_L(t))$ follows immediately.

Next, we note that \be\label{intial terminal value}
 e^{q_L(T)}=1\left\{
 \begin{array}{ll}
 \geq{\lambda\over\underline{c}},&0<\lambda\leq \underline{c};
 \\[2mm]
 \in({\lambda\over\overline{c}},{\lambda\over\underline{c}}),&\underline{c}<\lambda<\overline{c};
 \\[2mm]
 \leq{\lambda\over\overline{c}},&\lambda\geq\overline{c},
 \end{array}
 \right. \quad  \lim\limits_{T\rightarrow+\infty}e^{q_L(0)}
 ={\lambda\over\rho}\left\{
 \begin{array}{ll}
 \geq{\lambda\over\underline{c}},&0<\rho\leq \underline{c};
 \\[2mm]
 \in({\lambda\over\overline{c}},{\lambda\over\underline{c}}),&\underline{c}<\rho<\overline{c};
 \\[2mm]
 \leq{\lambda\over\overline{c}},&\rho\geq\overline{c}.
 \end{array}
 \right.
\ee In the following, we only prove the case
$0<\lambda\leq\underline{c}$ and $\rho>\overline{c}$. Other cases
follow along similar arguments. It follows from \eqref{intial
terminal value} that
$$
  \widehat{c}(T)=\lambda e^{-q_L(T)}\leq \underline{c}<\overline{c}<\lambda e^{-q_L(0)}=\widehat{c}(0),\quad
  c^*_T=\widetilde{x}^*_{c,L}(q_L(T))=\underline{c},\quad c^*_0=\widetilde{x}^*_{c,L}(q_L(0))=\overline{c}
$$
provided $T$ is large enough. Moreover, since $q_L(\cdot)$ is continuous and strictly increasing with respect to $t$, there exists unique $(T_1,T_2)$ such that $$
  q_L(t)\geq \ln{\lambda\over\underline{c}},\;\;t\in[\,T_2,T\,];\qquad
  \ln{\lambda\over\overline{c}}<q_L(t)<\ln{\lambda\over\underline{c}},\;\;t\in(T_1,T_2);\qquad
  q_L(t)\leq \ln{\lambda\over\overline{c}},\;\;t\in[\,0,T_1\,],
$$
and $T_1,T_2$ take the form in~\eqref{T1 T2}. Together with
\eqref{widehat_c for log}, we deduce that
$c_t^*=\widetilde{x}^*_{c,L}(q_L(t))
=\overline{c}I_0^1+\widehat{c}(t)I_1^2+\underline{c}I_2^3$.
\end{proof}

%%%%%%%%%%%%%%%%%%%%%%%%%%%%%%%%%%%%%%%%%%%%%%%%%%%%%%%%%%%%%%%%%%%%%%%%%%%%%%%%5
\section{The impacts of model uncertainty, portfolio-consumption constraints and borrowing costs}

In this section, we investigate the impacts of model uncertainty,
portfolio-consumption constraints and borrowing costs on the
worst-case parameters $(\mu^*,\sigma^*)$ and the optimal
portfolio-consumption strategies $(\pi^*,c^*)$.

\begin{proposition}\label{proposition2}
Suppose that Assumption 4.1 holds. Then, for the power utility case,
the worst-case parameters and the optimal portfolio-consumption
strategies admit the following monotone properties in terms of the
borrowing rate $R$, the constraint set
$[\underline{\pi},\overline{\pi}]\times[\underline{c},\overline{c}]$,
and the uncertain parameter set
$[\underline{\mu},\overline{\mu}]\times[\underline{\sigma}^2,\overline{\sigma}^2]$,
as specified in Table 6. The symbols $\searrow$, $\nearrow$, $\perp$
and NM represent nonincreasing, nondecreasing, independent and
non-monotone of the corresponding  variable. For example, the bottom
row and the first column (left-bottom corner) means $c_s^*$ is
nondecreasing in the borrowing rate  $R$.

\begin{table}[h!]
    \caption{the comparative statistics}\hspace{0.5cm}
    \hspace{1cm}
    \ \ \ \ \ \ \ \ \ \ \ \
    \begin{tabular}{c c c c c c c c c c}
       \hline
       &$R$ & $\underline{\pi}$ &$\overline{\pi}$ & $\underline{c}$ & $\overline{c}$ & $\underline{\mu}$ &$\overline{\mu}$ & $\underline{\sigma}$ & $\overline{\sigma}$  \\
       \hline
      $\mu^*_s$ & $\perp$ &  $\perp$ &  $\perp$ &  $\perp$  &  $\perp$  & $\nearrow$ & $\nearrow$ & $\perp$ & $\perp$  \\
      $\sigma^*_s$ & $\perp$ & $\perp$ & $\perp$ & $\perp$ & $\perp$  & $\perp$ & $\perp$ & $\perp$ & $\nearrow$\\
      $\pi^*_s$ & $\searrow$ &  $\nearrow$ &  $\nearrow$ &  $\perp$  &  $\perp$  & $\nearrow$ & $\nearrow$ & $\perp$ & $\searrow$  \\
      $c^*_s$ & $\nearrow$ & $\nearrow$ & $\searrow$ & $\nearrow$ & NM  & $\searrow$ & $\nearrow$ & $\perp$ & $\nearrow$\\
       \hline
    \end{tabular}  \label{table4}
  \end{table}

\end{proposition}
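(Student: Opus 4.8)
The plan is to handle the four rows of Table~\ref{table4} one at a time, using that every object in them is available in closed form through Theorem~\ref{theorem_portfolio} and the ODE~(\ref{intergralequation}). The two top rows are immediate. By Theorem~\ref{theorem_portfolio}(i) one has $\sigma^*_s\equiv\overline{\sigma}$, which depends on no parameter other than $\overline{\sigma}$ and is nondecreasing in it; and $\mu^*_s$ equals $\underline{\mu}$, a value in $[\underline{\mu},\overline{\mu}]$, or $\overline{\mu}$ on the three sub-cases $\{\underline{\mu}>r\}$, $\{\underline{\mu}\le r\le\overline{\mu}\}$, $\{\overline{\mu}<r\}$, each manifestly nondecreasing in $\underline{\mu}$ and $\overline{\mu}$ and independent of the remaining parameters. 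Thus the first two rows follow by inspection.

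For the portfolio row I would read off the piecewise description in Table~\ref{table1} after recording how each parameter enters the three ratios: $R$ appears only in $\beta^1$ (decreasingly), $\underline{\mu}$ in $\beta^1,\beta^2$ (increasingly), $\overline{\mu}$ only in $\beta^3$ (increasingly), $\overline{\sigma}$ in all three (shrinking their magnitudes), $\underline{\sigma}$ in none, and the consumption bounds in none. On each cell of Table~\ref{table1}, $\pi^*$ is either a constant ($0$ or $1$) or one truncated ratio ($\min\{\beta^1,\overline{\pi}\}$, $\beta^2$, $\max\{\beta^3,\underline{\pi}\}$); checking that neighbouring cells agree at the thresholds $\beta^1=1$, $\beta^2\in\{0,1\}$, $\beta^3=0$ shows $\pi^*$ is a continuous function that is nondecreasing in each of $\beta^1,\beta^2,\beta^3,\underline{\pi},\overline{\pi}$. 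Composing these monotonicities yields every entry of the third row, e.g. $R\uparrow\Rightarrow\beta^1\downarrow\Rightarrow\pi^*\downarrow$, while the $c$-bounds and $\underline{\sigma}$ leave all three ratios fixed so $\pi^*$ is independent of them. The only place demanding care is the volatility column, where one must keep track of the sign of $\pi^*$ (a long versus a short position) when differentiating the truncated ratio on the relevant piece.

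The consumption row is the substantial part. The starting point is $c^*(t)=\widetilde{x}^*_{c,P}(q_P(t))$ with $\widetilde{x}^*_{c,P}(\cdot)$ from~(\ref{widehat_c}): since $\widehat{c}_P(x_q)=\lambda^{1/(1-p)}e^{x_q/(p-1)}$ is \emph{decreasing} in $x_q$ (as $p-1<0$), the clamp $\widetilde{x}^*_{c,P}$ is nonincreasing in $x_q$, so monotonicity of $c^*$ in a parameter reduces to monotonicity of the solution $q_P$ of~(\ref{intergralequation}), together with the direct effect of the levels $\underline{c},\overline{c}$. I would use $G_P(x_q)=\phi(x_q)+K$, where $K=g(\pi^*;\mu^*,\sigma^*)$ is the constant given explicitly in~(\ref{constantK}) and $\phi(x_q)=\max_{c\in[\underline{c},\overline{c}]}f_P(x_q;c)$. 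For the parameters entering only through $K$, namely $R,\overline{\pi},\underline{\pi},\underline{\mu},\overline{\mu},\overline{\sigma}$, I would first read the sign of $\partial K/\partial\theta$ from~(\ref{constantK}) (equivalently by the envelope theorem at the saddle point of $g$), and then invoke an ODE comparison principle: since the right-hand side $\rho-pG_P$ of~(\ref{intergralequation}) is shifted by $-p\,\Delta K$, a monotone change of $K$ moves $q_P$ monotonically on $[0,T)$ and hence $c^*$ in the opposite direction, producing $\nearrow$ for $R$, $\searrow$ for $\overline{\pi}$, $\nearrow$ for $\underline{\pi}$, $\searrow$ for $\underline{\mu}$, and $\nearrow$ for $\overline{\mu},\overline{\sigma}$. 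Since $\underline{\sigma}$ enters neither $K$ (the worst-case volatility is $\overline{\sigma}$) nor the clamping levels, $c^*$ is independent of it; and the floor $\underline{c}$ raises $c^*$ both directly and through $q_P$, giving $\nearrow$.

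I expect two points to be the main obstacles. First, the comparison argument hides a genuine dependence on the sign of $p$: the displacement of $q_P$ induced by a change in $K$ carries the sign of $p$, so the clean directions above are those attached to the regime in which $e^{q_P}$ genuinely acts as an \emph{opportunity process}, and the sign bookkeeping must be done explicitly rather than through the heuristic. Second, and most importantly, the entry NM for $\overline{c}$ must be \emph{proved} non-monotone, not merely asserted: raising $\overline{c}$ enlarges the current feasible consumption set, pushing $c^*(t)$ up, while simultaneously enlarging the \emph{future} consumption set, which raises $\phi$, hence $q_P$, and pushes current $c^*(t)$ down. To make this competition quantitative I would use the explicit form of $q_P$ recorded in Table~8 (the solution of~(\ref{intergralequation})) to exhibit two parameter configurations, or two time points, at which $\partial c^*/\partial\overline{c}$ has opposite signs. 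The sign-of-$q_P'$ analysis already established in Proposition~\ref{proposition}—in particular the fact that $q_P'$ does not change sign on $[0,T]$—supplies exactly the control on $q_P$ needed for these computations.
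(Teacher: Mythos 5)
Your plan reproduces the paper's proof essentially step for step: the first two rows are read off Theorem \ref{theorem_portfolio}; the portfolio row comes from the monotonicity of $\beta^1,\beta^2,\beta^3$ in the parameters combined with the fact that $\pi^*$ is nondecreasing in each $\beta^i$ and in $\underline{\pi},\overline{\pi}$; the consumption row is obtained by tracing each parameter through $K=g(x_\pi^*;x_\mu^*,x_\sigma^*)$ (the paper uses exactly your envelope observation, writing $K$ as a maximum over $x_\pi$ and a minimum over $(x_\mu,x_\sigma)$), then applying the ODE comparison theorem to $q_P$ and composing with the decreasing map $x_q\mapsto\widehat{c}_P(x_q)$; and the NM entry for $\overline{c}$ is established exactly as you propose, by comparing two values $\overline{c}_2<\overline{c}_1$ at $t$ near $T$ (where $c^*=\overline{c}$, so the order is one way) and at $t$ near $0$ for $T$ large (where $c^*=\widehat{c}$ and the order reverses by ODE comparison). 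It is worth recording that the two ``obstacles'' you flag are genuine and are passed over silently in the published argument: the step ``$K$ larger $\Rightarrow q_P$ larger $\Rightarrow c^*$ smaller'' uses that the integrand in (\ref{intergralequation}) is $pG_P-\rho$, so the displacement of $q_P$ induced by a change in $K$ (or in $\max_c f_P$) carries the sign of $p$, and for $p<0$ the directions of the $c^*$-row entries driven by $K$ reverse (this is the usual income-versus-substitution effect, which can be checked directly on the closed form (\ref{equ21})); likewise the assertion that $\beta^1,\beta^2,\beta^3$ are nonincreasing in $\overline{\sigma}$ fails when the relevant $\beta^i$ is negative, so the $\overline{\sigma}$-column of the $\pi^*$-row does require the long-versus-short case analysis you mention. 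Carrying out the sign bookkeeping you describe would therefore either restrict the consumption row to $p\in(0,1)$ or flip some of its entries for $p<0$; apart from making that explicit, your route and the paper's are the same.
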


Before proceeding to the proof, we provide some intuitive
explanations for the above results. The impacts of different
parameters on the worst-case parameters $(\mu_s^*,\sigma^*_s)$ and
the optimal portfolio $\pi_s^*$ are obvious from {the results} in
Theorem \ref{theorem_portfolio}. So we only discuss about their
impacts on the optimal consumption $c^*_s$.

By the expression (\ref{widehat_c}) and
$c^*_s=\widetilde{x}_{c,P}^*(q_P(s))$, the parameters ($R$,
$\underline{\pi}$, $\overline{\pi}$, $\underline{\mu}$,
$\overline{\mu}$, $\underline{\sigma}$, $\overline{\sigma}$) will
effect the optimal consumption through the channel of the
opportunity process $e^{q_P(s)}$, which is the investor's extra
utilities obtained by optimizing over all the admissible
portfolio-consumption strategies (least affected by model
uncertainty) in the remaining horizon $[s,T]$. A closer look at the
ODE (\ref{intergralequation}) for $q_P(s)$ tells us that those
parameters will only enter into the future investment contributing
factor $g(x_{\pi}^*;x_{\mu}^*,{x_\sigma^*})$ in (\ref{definition of
g}). Increasing the borrowing cost $R$ will make the future
investment contributing factor $g(x_{\pi}^*;x_{\mu}^*,{x_\sigma^*})$
smaller, so the opportunity process will also become smaller, i.e.
the investor will obtain less utilities in the remaining horizon. In
turn, her current optimal consumption will go up. Similarly,
enlarging the uncertainty parameters interval
$[\underline{\mu},\overline{\mu}]\times[\underline{\sigma}^2,\overline{\sigma}^2]$
or shrinking the portfolio constraint interval
$[\underline{\pi},\overline{\pi}]$ will also make the future
investment contributing factor $g(x_{\pi}^*;x_{\mu}^*,{x_\sigma^*})$
smaller, and therefore, the current optimal consumption will arise.

The more striking result is probably the impact of the consumption
constraint interval $[\underline{c},\overline{c}]$ on the optimal
consumption $c^*_s$. Note that the constraint interval will only
effect the consumption contributing factor $f_P(q_P(s),c^*(s))$ in
(\ref{definition of f}), with
{$c^*(s)=\widetilde{x}_{c,P}^*(q_P(s))$}. The smaller interval will
lead to a smaller consumption contributing factor
$f_P(q_P(s),c^*(s))$ as indicated in {(\ref{optimalconsumption})}.
In turn, the investor will obtain less utilities in the remaining
horizon $[s,T]$. This might suggest the current optimal consumption
would increase. However, it is not always the case, as there is less
chance for the unconstrained optimal consumption $\widehat{c}(s)$ to
stay in the shrinking interval $[\underline{c},\overline{c}]$. If
$\widehat{c}(s)$ reaches the lower bound $\underline{c}$, then the
optimal consumption will further arise as $\underline{c}$ increases.
On the other hand, if $\widehat{c}(s)$ reaches the upper bound
$\overline{c}$, then the optimal consumption will go down for
$\overline{c}$ becomes smaller, thus offsets the previous increasing
impact on the optimal consumption when $\overline{c}$ is decreasing.
\emph{This means the optimal consumption is non-monotone in its
upper bound $\overline{c}$.}

\begin{proof}

(i) \emph{The monotone property of $\mu_s^*$}. According to Theorem
\ref{theorem_portfolio}, the worst-case drift can be rewritten as
\bee {\mu_s^*}&=&\underline{\mu}1_{\{\overline{\mu}\geq r\}}+
\overline{\mu}1_{\{\overline{\mu}<r\}}\\
&=&\underline{\mu}1_{\{\underline{\mu}>r\}}+
\overline{\mu}1_{\{\underline{\mu}\leq r\}}\eee {for} $s\in[0,T]$.
The first line implies that $\mu_s^*$ is nondecreasing in
$\underline{\mu}$, and the second line implies it is also
nondecreasing in $\overline{\mu}$, and is irrelevant to the other
parameters
$(R,\underline{\pi},\overline{\pi},\underline{c},\overline{c},\underline{\sigma},\overline{\sigma})$.

(ii) \emph{The monotone property of $\sigma_s^*$}. The conclusion
simply follows from the expression of the worst-case volatility
$\sigma_s^*=\overline{\sigma}$ for $s\in[0,T]$ in Theorem
\ref{theorem_portfolio}.

(iii) \emph{The monotone property of $\pi_s^*$}. First, the
expressions of $\beta^1,\beta^2,\beta^3$ in Theorem
\ref{theorem_portfolio} imply that they are all nondecreasing in
$\underline{\mu},\overline{\mu}$, nonincreasing in
$R,\overline{\sigma}$ and independent of
$\underline{\sigma},\underline{c},\overline{c}$, so is the optimal
portfolio $\pi_s^*$, {as $\pi_s^*$ is nondecreasing with respect to
$\beta^1,\beta^2,\beta^3$ (cf. Figure \ref{figureinvestment})}.

From Table 1 in Theorem \ref{theorem_portfolio}, we further obtain
\bee \pi_s^*&=&\min\{\beta^1,\overline{\pi}\}1_{\{\beta^1\geq1\}}+
C_11_{\{\beta^1<1\}}\\
&=&\max\{\beta^3,\underline{\pi}\}1_{\{\beta^3\leq 0\}}+
C_21_{\{\beta^3>0\}} \eee for some constants $C_1$ independent of
$\overline{\pi}$, and $C_2$ independent of $\underline{\pi}$. Hence,
$\pi_s^*$ is nondecreasing in both $\underline{\pi}$ and
$\overline{\pi}$.

(iv) \emph{The monotone property of $c_s^*$}. We first study the
impacts of different parameters on the solution $q_P(t)$ of ODE
(\ref{intergralequation}). Note that ($R$, $\underline{\pi}$,
$\overline{\pi}$, $\underline{\mu}$, $\overline{\mu}$,
$\underline{\sigma}$, $\overline{\sigma}$) will effect $q_P(t)$ only
through $g(x_{\pi}^*;x_{\mu}^*,{x_\sigma^*})=K$, where $K$ is given
in (\ref{constantK}).

It is obvious from the expression (\ref{constantK}) that $K$ is
nonincreasing in $R$. Moreover, since $K$ is the maximum value of
$g(x_{\pi};x_{\mu}^*,{x_\sigma^*})$ over
$x_{\pi}\in[\underline{\pi},\overline{\pi}]$, $K$ is nonincreasing
in $\underline{\pi}$ and nondecreasing in $\overline{\pi}$. On the
other hand, $K$ is also the minimum value of
$g(x_{\pi}^*;x_{\mu},{x_\sigma})$ over
$(x_{\mu},{x_\sigma})\in[\underline{\mu},\overline{\mu}]\times[\underline{\sigma}^2,\overline{\sigma}^2]$.
Therefore, $K$ is nondecreasing in
$\underline{\mu},\underline{\sigma}$ and nonincreasing in
$\overline{\mu},\overline{\sigma}$. However, the expression of $K$
further implies that $K$ is independent of $\underline{\sigma}$.

It then follows from the comparison theorem for ODE {
(\ref{intergralequation})} that its solution $q_P(s)$ is nonincreasing
in $R$, $\underline{\pi}$, $\overline{\mu}$, $\overline{\sigma}$,
nondecreasing in $\overline{\pi}$ and $\underline{\mu}$, and
independent of $\underline{\sigma}$. The conclusion about the
optimal consumption $c^*_s$ then follows from Theorem \ref{Theorem1 for power}
together with the expression {(\ref{widehat_c}).}

In terms of the impacts of $\underline{c}$ and $\overline{c}$ on
$c_s^*$, since $f_P(q_P(s),c^*(s))$, with
{$c^*(s)=\widetilde{x}_{c,P}^*(q_P(s))$}, is the maximum value of
$f_P(q_P(s),x_c)$ over $x_c\in[\underline{c},\overline{c}]$, it is
nonincreasing in $\underline{c}$ and nondecreasing in
$\overline{c}$. Following the comparison theorem for ODE
(\ref{intergralequation}) and the expression (\ref{widehat_c}) once
again, we conclude $\widehat{c}(s)$ is nondecreasing in
$\underline{c}$ and nonincreasing in $\overline{c}$.

In turn, the expression {(\ref{widehat_c})} implies that the optimal
consumption $c^*_s$ is also nondecreasing in $\underline{c}$, but
neither increasing nor decreasing in $\overline{c}$, for the second
and last terms in $c^*(s)$ offset the effects of each other. Indeed,
{we show the non-monotonicity in the case of $0\leq
\underline{c}<\overline{c}_2<\overline{c}_1<\lambda^{1/(1-p)}$ and
$\rho-pK\in ((1-p)\underline{c},(1-p)\overline{c}_2)$. According to
Theorem \ref{theorem_consumption}, both $c_1^*(t)$ and $c_2^*(t)$
take the form of $\widehat{c}(t)I^{12}_0+\overline{c}I^4_{12}$. When
$t$ is close to $T$, then
$c_1^*(t)=\overline{c}_1>\overline{c}_2=c_2^*(t)$. On the other
hand, when $T$ is large enough and $t$ is close to zero, we have
$$
 c^*_1(t)=\widehat{c}_1(t)=\exp\left\{{q_{P,1}(t)\over p-1}\right\}<\exp\left\{{q_{P,2}(t)\over
 p-1}\right\}=\widehat{c}_2(t)=c^*_2(t),
$$
where the strict inequality can be derived from the comparison
theorem for ODE.}
\end{proof}

Finally, we present the result for the logarithm utility case. Its
proof is omitted as it is similar to the proof for the power utility
case.

%Theorem 5.2
\begin{proposition}\label{proposition3}
Suppose that Assumption 4.1 holds. Then, for the logarithm utility
case, the worst-case parameters and the optimal
portfolio-consumption strategies have the following monotone
properties in terms of the borrowing rate $R$, the constraint set
$[\underline{\pi},\overline{\pi}]\times
[\underline{c},\overline{c}]$, and the uncertain parameter set
$[\underline{\mu},\overline{\mu}]\times
[\underline{\sigma}^2,\overline{\sigma}^2]$, as showed in Table 7.
The symbols $\searrow$, $\nearrow$, $\perp$\ represent
nonincreasing, nondecreasing and independent of the corresponding
variable.

%Table 5
\begin{table}[h]
\centering
\caption{The comparative statistics for log}
\begin{tabular}{lccccccccc}
\hline
\quad &$R$ &$\underline{\pi}$ &$\overline{\pi}$ &$\underline{c}$ &$\overline{c}$ &$\underline{\mu}$ &$\overline{\mu}$ &$\underline{\sigma}$  &$\overline{\sigma}$\\
\hline
$\mu^*$ &$\perp$ &$\perp$ &$\perp$ &$\perp$ &$\perp$ &$\nearrow$ &$\nearrow$ &$\perp$ &$\perp$\\
$\sigma^*$ &$\perp$ &$\perp$ &$\perp$ &$\perp$ &$\perp$ &$\perp$ &$\perp$ &$\perp$ &$\nearrow$ \\
$\pi^*$ &$\searrow$ &$\nearrow$ &$\nearrow$ &$\perp$ &$\perp$ &$\nearrow$ &$\nearrow$ &$\perp$ &$\searrow$\\
$c^*$ &$\perp$ &$\perp$ &$\perp$ &$\nearrow$ &$\nearrow$ &$\perp$ &$\perp$ &$\perp$ &$\perp$\\
\hline
\end{tabular}
\end{table}
\end{proposition}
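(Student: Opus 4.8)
The plan is to follow the blueprint of the proof of Proposition \ref{proposition2}, splitting the argument into the worst-case parameters together with the optimal portfolio on the one hand, and the optimal consumption on the other. For the first three rows of Table 7, I would observe that $(\mu^*,\sigma^*,\pi^*)$ are determined solely through the saddle point of $g$ in (\ref{definition of g}), which is exactly the object analysed in Theorem \ref{theorem_portfolio}; the logarithm case is simply the specialization $p=0$, so that $\beta^1,\beta^2,\beta^3$ read $(\underline{\mu}-R)/\overline{\sigma}^2$, $(\underline{\mu}-r)/\overline{\sigma}^2$ and $(\overline{\mu}-r)/\overline{\sigma}^2$. Since the monotonicity arguments in parts (i)--(iii) of the proof of Proposition \ref{proposition2} use only the qualitative dependence of the $\beta^i$ on the parameters (nondecreasing in $\underline{\mu},\overline{\mu}$, nonincreasing in $R,\overline{\sigma}$, independent of $\underline{\sigma},\underline{c},\overline{c}$) and the fact that $\pi^*$ is nondecreasing in each $\beta^i$ and in $\underline{\pi},\overline{\pi}$, these arguments transfer verbatim, yielding the first three rows.

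The genuinely new ingredient is the last row, and here the key structural observation is that the ODE (\ref{intergralequation1 for log}) governing $q_L$, namely $q_L(t)=\int_t^T(\lambda e^{-q_L(s)}-\rho)\,ds$, involves only $\lambda$ and $\rho$ and is completely \emph{decoupled} from $G_L$, and hence from the investment parameters. This is in sharp contrast with the power case, where $q_P$ solves (\ref{intergralequation}) through $pG_P$, which carries the future investment contributing factor $K$ in (\ref{constantK}). Consequently $q_L$, and with it the unconstrained optimal rate $\widehat{c}_L(q_L(t))=\lambda e^{-q_L(t)}$ from (\ref{widehat_c for log}), is independent of each of $R,\underline{\pi},\overline{\pi},\underline{\mu},\overline{\mu},\underline{\sigma},\overline{\sigma}$. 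Since $c^*_t=\widetilde{x}_{c,L}^*(q_L(t))$ is obtained from $\widehat{c}_L(q_L(t))$ by clamping to $[\underline{c},\overline{c}]$, and the clamp involves no parameter other than $\underline{c},\overline{c}$, I would conclude that $c^*$ is independent of these seven parameters, which gives the $\perp$ entries.

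It then remains to establish the two $\nearrow$ entries for $\underline{c}$ and $\overline{c}$. For fixed $q_L(t)$, the map $\widetilde{x}_{c,L}^*(q_L(t))=\max\{\underline{c},\min\{\widehat{c}_L(q_L(t)),\overline{c}\}\}$ is the median of $\underline{c}$, $\widehat{c}_L(q_L(t))$, $\overline{c}$, which is manifestly nondecreasing in both bounds: raising $\underline{c}$ can only push a clamped-below value up, while raising $\overline{c}$ can only push a clamped-above value up, and in either case leaves the other regime unchanged. Because $q_L$ is itself unaffected by $\underline{c},\overline{c}$, this monotonicity passes directly to $c^*_t$. I expect the main subtlety to be precisely this last point, namely emphasizing \emph{why} the non-monotone behaviour in $\overline{c}$ from the power case disappears here: in the power case raising $\overline{c}$ simultaneously enlarges the feasible set (pushing $c^*$ up) and perturbs $q_P$ through the consumption contributing factor (potentially pushing $c^*$ down), whereas the decoupling of (\ref{intergralequation1 for log}) removes the second, competing effect entirely, leaving only the monotone clamping effect.
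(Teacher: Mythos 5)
Your proposal is correct and follows exactly the route the paper intends: the paper omits this proof with the remark that it is ``similar to the proof for the power utility case,'' and your first three rows are precisely the arguments of parts (i)--(iii) of the proof of Proposition \ref{proposition2} specialized to $p=0$, while your treatment of the consumption row correctly identifies the one place where the logarithm case genuinely simplifies, namely that $q_L$ in (\ref{intergralequation1 for log}) decouples from $G_L$ (as is already visible from the closed form (\ref{definition of g1 and g2}), which depends only on $\lambda$ and $\rho$), so that $c^*_t=\widetilde{x}^*_{c,L}(q_L(t))$ reduces to a clamp of $\lambda e^{-q_L(t)}$ to $[\underline{c},\overline{c}]$ and is therefore monotone in both bounds and independent of everything else. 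No gaps.
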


%%%%%%%%%%%%%%%%%%%%%%%%%%%%%%%%%%%%%%%%%%%%%%%%%%%%%%%%%%%%%%%%5
\appendix
\section{Appendix: Proof of Theorem \ref{theorem_portfolio}}

\noindent{\it Proof of Theorem \ref{theorem_portfolio}.} According
to Theorem \ref{Theorem1 for power}, if
$\{x_{\pi}^*;x_{\mu}^*,{x_\sigma^*}\}$ is a saddle point of the
function $g(\cdot;\cdot,\cdot)$, then $x_{\pi}^*$ is the optimal
portfolio, and $(x_{\mu}^*,{\sqrt{x_{\sigma}^*}})$ are the
worst-case parameters. {Thus, it is sufficient to show that
$\{\pi^*;\mu^*,{x_\sigma^*}\}$ given in Theorem
\ref{theorem_portfolio} is indeed a saddle point of the function
$g(\cdot;\cdot,\cdot)$.}

First, for fixed $x_{\pi}\in[\underline{\pi},\overline{\pi}]$, it is
obvious to check that with
$$
  g(x_{\pi};x_{\mu},{x_\sigma})={p-1\over 2}\,{x_\sigma}x_{\pi}^2
 +\left(x_{\mu}x_{\pi}+r(1-x_{\pi})-(R-r)(1-x_{\pi})^-\right),
$$
we have \be\label{min}
\min_{(x_{\mu},{x_\sigma})\in[\underline{\mu},\overline{\mu}]\times[\underline{\sigma}^2,\overline{\sigma}^2]}g(x_{\pi};x_{\mu},{x_\sigma})=
 \left\{
 \begin{array}{lll}
g(x_{\pi};{\underline{\mu}},{\overline{\sigma}}^2),
 &\ \text{if}\ x_{\pi}>0;
 \vspace{2mm}\\
 g(x_{\pi};[\underline{\mu},\overline{\mu}],{\overline{\sigma}}^2),
 &\ \text{if}\ x_{\pi}=0;
 \vspace{2mm}\\
 g(x_{\pi};{\overline{\mu}},{\overline{\sigma}}^2),
 &\ \text{if}\ x_{\pi}<0,
 \end{array}
 \right.
\ee where $[\underline{\mu},\overline{\mu}]$ means that $x_{\mu}^*$
may take any value in that interval.

The above minimum function can be further written in a compact form
by defining \bee
  g_1(x_{\pi})&:=& g(x_{\pi};x_{\mu}^*,{x_\sigma^*})
  \\[2mm]
  &=&{p-1\over 2}\,\overline{\sigma}^2x_{\pi}^2
  +\Big(\,\underline{\mu}I_{\{{x_{\pi}}>0\}}+\overline{\mu}I_{\{x_{\pi}<0\}}-rI_{\{x_{\pi}<1\}}-RI_{\{x_{\pi}\geq
  1\}}\,\Big)x_{\pi}
  \\[2mm]
  &&+\,\Big(\,rI_{\{x_{\pi}<1\}}+RI_{\{x_{\pi}\geq 1\}}\,\Big).
\eee In the following, we study the maximum value of $g_1(x_{\pi})$
in three different cases $x_{\pi}\geq1$, $0\leq x_{\pi}\leq1$ and
$x_{\pi}\leq0$, then together with the constraint
$\underline{\pi}\,\leq x_{\pi} \leq \overline{\pi}$, we will obtain
the maximizer $x_{\pi}^*$ and the associated maximum value
$g_1(x_{\pi}^*)$.

 \noindent \underline{Case (1)} $\,x_{\pi}\geq 1$.
$$
 g_1(x_{\pi})={p-1\over2}\,\overline{\sigma}^2\,\left[\,x_{\pi}
  +{\underline{\mu}-R\over (p-1)\overline{\sigma}^2}\,\right]^2+R-{(\underline{\mu}-R)^2\over 2(p-1)\overline{\sigma}^2}.
$$
If $\beta_1=(\underline{\mu}-R)/((1-p)\overline{\sigma}^2)\geq
\overline{\pi}$, then
$$
 \max\limits_{1\leq x_{\pi}\leq \overline{\pi}}\,g_1(x_{\pi})=g_1(\overline{\pi})
\geq g_1(1).
$$
If $1<\beta_1<\overline{\pi}$, then
$$
 \max\limits_{1\leq x_{\pi}\leq \overline{\pi}}\,g_1(x_{\pi})=g_1(\beta_1)
 >g_1(1).
$$
If $\beta_1\leq1$, then
$$
 \max\limits_{1\leq x_{\pi}\leq \overline{\pi}}\,g_1(x_\pi)=g_1(1).
$$

\noindent \underline{Case (2)} $\,0\leq x_{\pi}\leq 1$.
$$
 g_1(x_{\pi})={p-1\over2}\,\overline{\sigma}^2\,\left[\,x_{\pi}
  +{\underline{\mu}-r\over (p-1)\overline{\sigma}^2}\,\right]^2+r-{(\underline{\mu}-r)^2\over 2(p-1)\overline{\sigma}^2}.
$$
If $\beta_2=(\underline{\mu}-r)/((1-p)\overline{\sigma}^2)\geq 1$,
then
$$
 \max\limits_{0\leq x_{\pi}\leq 1}\,g_1(x_\pi)=g_1(1)>g_1(0).
$$
If $0<\beta_2<1$, then
$$
 \max\limits_{0\leq x_{\pi}\leq 1}\,g_1(x_{\pi})=g_1(\beta_2)
 >\max\{g_1(1),g_1(0)\}.
$$
If $\beta_2\leq 0$, then
$$
 \max\limits_{0\leq x_{\pi}\leq 1}\,g_1(x_{\pi})=g_1(0)>g_1(1).
$$

\noindent \underline{Case (3)} $\,x_{\pi}\leq 0$.
$$
 g_1(x_{\pi})={p-1\over2}\,\overline{\sigma}^2\,\left[\,x_{\pi}
 +{\overline{\mu}-r\over (p-1)\overline{\sigma}^2}\,\right]^2+r-{(\overline{\mu}-r)^2\over 2(p-1)\overline{\sigma}^2}.
$$
If $\beta_3=(\overline{\mu}-r)/((1-p)\overline{\sigma}^2)\geq 0$,
then
$$
 \max\limits_{\underline{\pi}\,\leq x_{\pi}\leq 0}\,g_1(x_{\pi})=g_1(0).
$$
If $\underline{\pi}<\beta_3<0$, then
$$
 \max\limits_{\underline{\pi}\,\leq x_{\pi}\leq 0}\,g_1(x_{\pi})=g_1(\beta_3)
 >g_1(0).
$$
If $\beta_3\leq \underline{\pi}$, then
$$
 \max\limits_{\underline{\pi}\,\leq x_{\pi}\leq 0}\,g_1(x_{\pi})=g_1(\underline{\pi})
 \geq g_1(0).
$$

Comparing the maximum values in the above three cases, and noting
that the fact $\beta_1\leq\beta_2\leq\beta_3$, we see that
$\max\limits_{\underline{\pi}\,\leq x_{\pi}\leq
\overline{\pi}}\,g_1(x_{\pi})=g_1(x_{\pi}^*)=K$, where $K$ {is
defined in} \eqref{constantK}, and the optimal $x_{\pi}^*$ {is
defined in} Table \ref{table1}. Thus, we have proved
$$
  {g}(x^*_\pi;x_\mu^*,{x_\sigma^*})\geq
 {g}(x_\pi;x^*_\mu, {x_\sigma^*}),\;\;\forall\;x_\pi\in[\,\underline{\pi},\overline{\pi}\,].
$$

On the other hand, with $x_{\pi}^*$ as in Table 1, it follows from
(\ref{min}) that
$$
 {g}(x^*_\pi;x_\mu^*,{x_\sigma^*})\leq
  {g}(x^*_\pi;x_\mu,{x_\sigma}),\;\;\forall\;(x_{\mu},{x_\sigma})\in[\,\underline{\mu},\overline{\mu}\,]\times[\,\underline{\sigma}^2,\overline{\sigma}^2\,].
$$
Hence, $\{x_{\pi}^*;x_{\mu}^*,{x_\sigma^*}\}$ is a saddle point of
the function $g(\cdot;\cdot,\cdot)$. $\hfill{} \Box$
\medskip

%%%%%%%%%%%%%%%%%%%%%%%%%%%%%%%%%%%%%%%%%%%%%%%%%%%%%%%%%%%%%%%%%%5
\section{Appendix: Proof of Theorem \ref{theorem_consumption}}

First, we give the explicit solution to ODE
(\ref{intergralequation}) in Table 8 when
$\underline{c}>0$\footnote{In the case of $\underline{c}=0$, the
results are similar to those in Table \ref{table5} except that
$q_{123}$ and $q_{23}$ are replaced by $q_{12}$ and $q_{2}$,
respectively. Note that in this case, the forth and fifth rows in
Table \ref{table5} and $q_{123}, q_{23}, q_{32}, q_{321}, q^3,
T_{123}, T_{23}, T_{32}, T_{321}$ are irrelevant.}.

\begin{table}[h!]
  \centering
    \caption{The explicit solution {$q_P(\cdot)$} to ODE (\ref{intergralequation}) in the case of $\underline{c}>0$
}
     %when $\lambda=1$}
    \begin{tabular}{c c c c c c }
       \hline
       & {\scriptsize$\rho-pK$} & & & &   \\
       &{\scriptsize$(-\infty,(1-p)\underline{c}\,)$} & {\scriptsize$\{(1-p)\underline{c}\,\}$}
       &{\scriptsize$((1-p)\underline{c},(1-p)\overline{c}\,)$} & {\scriptsize$\{(1-p)\overline{c}\,\}$} &
       {\scriptsize$((1-p)\overline{c},+\infty)$}   \\
       \hline
       {\scriptsize$\underline{c}<\overline{c}<\lambda^{1/(1-p)}$} & {\scriptsize$ q_{123}(t)$} & {\scriptsize$ q_{12}(t)$} &
       {\scriptsize$ q_{12}(t)$} & {\scriptsize\cellcolor{yellow!25}$ q_{1}(t)$} & {\scriptsize\cellcolor{yellow!25}$  q_{1}(t)$}  \\
       {\scriptsize$\underline{c}<\overline{c}=\lambda^{1/(1-p)}$} & {\scriptsize$ q_{23}(t)$} & \cellcolor{red!25}{\scriptsize$ q_2(t)$} &
       \cellcolor{red!25}{\scriptsize$ q_2(t)$} & \cellcolor{yellow!25}{\scriptsize$ q_1(t)$} & \cellcolor{yellow!25}
       {\scriptsize$ q_1(t)$}  \\
       {\scriptsize$ \underline{c}<\lambda^{1/(1-p)}<\overline{c}$} & {\scriptsize$ q_{23}(t)$} & \cellcolor{red!25}{\scriptsize$ q_2(t)$} &
       \cellcolor{red!25}{\scriptsize$ q_2(t)$} & \cellcolor{red!25}{\scriptsize$ q_2(t)$} & {\scriptsize$ q_{21}(t)$}  \\
       {\scriptsize$ \lambda^{1/(1-p)}=\underline{c}<\overline{c}$} & \cellcolor{blue!25}{\scriptsize$ q_3(t)$} & \cellcolor{blue!25}{\scriptsize$ q_3(t)$}
       & \cellcolor{red!25}{\scriptsize$ q_2(t)$} & \cellcolor{red!25}{\scriptsize$ q_2(t)$} &{\scriptsize$ q_{21}(t)$}  \\
       {\scriptsize$ \lambda^{1/(1-p)}<\underline{c}<\overline{c}$} & \cellcolor{blue!25}{\scriptsize$ q_3(t)$} & \cellcolor{blue!25}{\scriptsize$ q_3(t)$} &
       {\scriptsize$ q_{32}(t)$} & {\scriptsize$ q_{32}(t)$} & {\scriptsize$ q_{321}(t)$}  \\
\hline
    \end{tabular}  \label{table5}
\end{table}

\noindent The solutions
$q_{123},q_{12},q_1,q_{23},q_{2},q_{21},q_3,q_{32},q_{321}$ have the
explicit forms
 \bee
  q_{123}(t)&\!\!\!\!\!\!=\!\!\!\!\!\!&q^1(t;1/\lambda,T_{12},T)I_{[\,T_{12},T\,]}
  +q^2(t;\overline{c}^{p-1},T_{123},T_{12})I_{[\,T_{123},T_{12}\,]}
  +q^3(t;\underline{c}^{p-1},0,T_{123})I_{[\,0,T_{123}\,]};
  \\[2mm]
  q_{12}(t)&\!\!\!\!\!=\!\!\!\!\!&q^1(t;1/\lambda,T_{12},T)I_{[\,T_{12},T\,]}
  +q^2(t;\overline{c}^{p-1},0,T_{12})I_{[\,0,T_{12}\,]};
  \\[2mm]
  q_{1}(t)&\!\!\!\!\!=\!\!\!\!\!&q^1(t;1/\lambda,0,T);\qquad
  q_2(t)=q^2(t;1/\lambda,0,T);\qquad
  q_3(t)=q^3(t;1/\lambda,0,T);
  \\[2mm]
  q_{23}(t)&\!\!\!\!\!=\!\!\!\!\!&q^2(t;1/\lambda,T_{23},T)I_{[\,T_{23},T\,]}+q^3(t;\underline{c}^{p-1},0,T_{23})I_{[\,0,T_{23}\,]};
  \\[2mm]
  q_{21}(t)&\!\!\!\!\!=\!\!\!\!\!&q^2(t;1/\lambda,T_{21},T)I_{[\,T_{21},T\,]}+q^1(t;\overline{c}^{p-1},0,T_{21})I_{[\,0,T_{21}\,]};
  \\[2mm]
  q_{32}(t)&\!\!\!\!\!=\!\!\!\!\!&q^3(t;1/\lambda,T_{32},T)I_{[\,T_{32},T\,]}+q^2(t;\underline{c}^{p-1},0,T_{32})I_{[\,0,T_{32}\,]};
  \\[2mm]
  q_{321}(t)&\!\!\!\!\!=\!\!\!\!\!&q^3(t;1/\lambda,T_{32},T)I_{[\,T_{32},T\,]}+q^2(t;\underline{c}^{p-1},T_{321},T_{32})I_{[\,T_{321},T_{32}\,]}
  +q^1(t;\overline{c}^{p-1},0,T_{321})I_{[\,0,T_{321}\,]},
\eee where $I_{[\underline{T},\overline{T}]}$ is an indicator
function of the set $[\underline{T},\overline{T}]$, and the
functions
$q^1(t;A,\underline{T},\overline{T}),q^2(t;A,\underline{T},\overline{T}),$\\$q^3(t;A,\underline{T},\overline{T})$
in the interval $[\,\underline{T},\overline{T}\,]$ are given as \be
 q^1(t;A,\underline{T},\overline{T})&\!\!\!=\!\!\!&\ln\lambda+
 \left\{
 \begin{array}{ll}
 \ln\left[\,\left(\,A-{\overline{c}^p\over \rho+p\overline{c}-pK}\,\right)
 e^{(\rho+p\overline{c}-pK)(t-\overline{T})}+{\overline{c}^p\over \rho+p\overline{c}-pK}\,\right],\!
 &\!\!\rho-pK\neq -p\overline{c};
 \vspace{2mm} \\
 \ln\Big[\,A+\overline{c}^p(\overline{T}-t)\,\Big],\!
 &\!\!\rho-pK=-p\overline{c};
 \end{array}
 \right.\label{equ20}
\\[2mm]
 q^2(t;A,\underline{T},\overline{T})&\!\!\!\!=\!\!\!\!&\ln\lambda+
 \left\{
 \begin{array}{ll}
 (1-p)\ln\left[\,\left(\,A^{1/(1-p)}-{1-p\over\rho-pK}\,\right)
 e^{{\rho-pK\over 1-p}\left(\,t-\overline{T}\,\right)}+{1-p\over\rho-pK}\,\right],\!
 &\!\!\rho-pK\neq0;
 \vspace{2mm} \\
 (1-p)\ln\left[\,A^{1/(1-p)}+\overline{T}-t\,\right],\!
 &\!\!\rho-pK=0;
 \end{array}
 \right.\label{equ21}
\\[2mm]
 q^3(t;A,\underline{T},\overline{T})&\!\!\!\!=\!\!\!\!&\ln\lambda+
 \left\{
 \begin{array}{ll}
 \ln\left[\,\left(\,A-{\underline{c}^p\over \rho+p\underline{c}-pK}\,\right)
 e^{(\rho+p\underline{c}-pK)(t-\overline{T})}+{\underline{c}^p\over \rho+p\underline{c}-pK}\,\right],\!
 &\!\!\rho-pK\neq-p\underline{c};
 \vspace{2mm} \\
 \ln\left[\,A+\underline{c}^p(\overline{T}-t)\,\right],\!
 &\!\!\rho-pK=-p\underline{c},
 \end{array}
 \right.\label{equ22}
\ee and $T_{12},T_{123},T_{23},T_{21},T_{32},T_{321}$ are given as
\be
 T_{12}&=&
 \left\{
 \begin{array}{ll}
 T+{1\over\rho+p\overline{c}-pK}\left[\,
 \ln\left|\,\overline{c}^{p-1}-{\overline{c}^p\over \rho+p\overline{c}-pK}\,\right|
 -\ln\left|\,{1\over\lambda}-{\overline{c}^p\over \rho+p\overline{c}-pK}\,\right|\,\right],\;
 \;&\rho-pK\neq -p\overline{c};
 \vspace{2mm} \\
 T-1/\overline{c}+1/(\lambda\overline{c}^p),\;&\rho-pK=-p\overline{c};
 \end{array}
 \right.\label{equ23}
 \\[2mm]
 T_{123}&=&
 \left\{
 \begin{array}{ll}
 T_{12}+{1-p\over\rho-pK}\left[\,
 \ln\left|\,{1\over\underline{c}}-{1-p\over\rho-pK}\,\right|
 -\ln\left|\,{1\over\overline{c}}-{1-p\over\rho-pK}\,\right|\,\right],\quad &\rho-pK\neq0;
 \vspace{2mm} \\
 T_{12}+1/\overline{c}-1/\underline{c}\,, \;&\rho-pK=0;
 \end{array}
 \right.\label{equ24}
 \\[2mm]
 T_{23}&=&
 \left\{
 \begin{array}{ll}
 T+{1-p\over\rho-pK}\left[\,
 \ln\left|\,{1\over\underline{c}}-{1-p\over\rho-pK}\,\right|
 -\ln\left|\,\lambda^{1/(p-1)}-{1-p\over\rho-pK}\,\right|\,\right],\quad &\rho-pK\neq0;
 \vspace{2mm} \\
 T+\lambda^{1/(p-1)}-1/\underline{c}\,,
 \;&\rho-pK=0;
 \end{array}
 \right.\label{equ25}
 \\[2mm]
 T_{21}&=&T+{1-p\over\rho-pK}\left[\,
 \ln\left(\,{1\over\overline{c}}-{1-p\over\rho-pK}\,\right)
 -\ln\left(\,\lambda^{1/(p-1)}-{1-p\over\rho-pK}\,\right)\,\right];\label{equ26}
 \\[2mm]
 T_{32}&=&T+{1\over\rho+p\underline{c}-pK}\left[\,
 \ln\left(\,\underline{c}^{p-1}-{\underline{c}^p\over \rho+p\underline{c}-pK}\,\right)
 -\ln\left(\,{1\over\lambda}-{\underline{c}^p\over
 \rho+p\underline{c}-pK}\,\right)\,\right];\label{equ27}
 \\[2mm]
 T_{321}&=&T_{32}+{1-p\over\rho-pK}\left[\,
 \ln\left(\,{1\over\overline{c}}-{1-p\over\rho-pK}\,\right)
 -\ln\left(\,{1\over\underline{c}}-{1-p\over\rho-pK}\,\right)\,\right].\label{equ28}
\ee

It is routine to check that for any $A>0$ and $0\leq
\underline{T}\leq\overline{T}$, the functions
$q^1(t;A,\underline{T},\overline{T})$,
$q^2(t;A,\underline{T},\overline{T})$ and
$q^3(t;A,\underline{T},\overline{T})$ solve the following ODEs,
respectively,
 \be\label{ODE1}
  {q_P}(t)&=&\ln\lambda+\ln A+\int_t^{\overline{T}}
  \Big[\,-\rho+\lambda\overline{c}^pe^{-{q_P}(s)}-p\overline{c}+pK\,\Big]ds,\;\forall\;t\in[\,\underline{T},\overline{T}\,];
  \\[2mm]\label{ODE2}
  {q_P}(t)&=&\ln\lambda+\ln A+\int_t^{\overline{T}}
  \Bigg[\,-\rho+(1-p)\lambda^{1\over(1-p)}\exp\left\{{{q_P}(s)\over p-1}\right\}+pK\,\Bigg]ds,\;\forall\;t\in[\,\underline{T},\overline{T}\,];
  \\[2mm]\label{ODE3}
  {q_P}(t)&=&\ln\lambda+\ln A+\int_t^{\overline{T}}
  \Big[\,-\rho+\lambda\underline{c}^pe^{-{q_P}(s)}-p\underline{c}+pK\,\Big]ds,\;\forall\;t\in[\,\underline{T},\overline{T}\,].
 \ee

When $\underline{c}>0$,  $q^1(0;A,0,\overline{T}),\,
q^2(0;A,0,\overline{T})$ and $q^3(0;A,0,\overline{T})$ have the
following asymptotic properties,
 \be\nonumber
  &&\lim\limits_{\overline{T}\rightarrow\infty}q^1(0;A,0,\overline{T})=
  \left\{
  \begin{array}{ll}
  \ln\lambda+\ln\left(\,{\overline{c}^p\over \rho+p\overline{c}-pK}\,\right),
  &\rho-pK>-p\overline{c};
  \vspace{2mm} \\
  +\infty,
  &\rho-pK\leq-p\overline{c};
  \end{array}
  \right.
  \\[2mm]\label{Asymptoticproperty1}
   &&\lim\limits_{\overline{T}\rightarrow\infty}q^1(0;A,0,\overline{T})\leq(p-1)\ln\overline{c}+\ln\lambda
  \Leftrightarrow \rho-pK\geq(1-p)\overline{c};
  \ee
  \be\nonumber
   &&\lim\limits_{\overline{T}\rightarrow\infty}q^2(0;A,0,\overline{T})=\left\{
  \begin{array}{ll}
  \ln\lambda+(1-p)\ln\left(\,{1-p\over\rho-pK}\,\right),
  &\rho-pK>0;
  \vspace{2mm} \\
  +\infty,
  &\rho-pK\leq0;
  \end{array}
  \right.
  \\[2mm]\label{Asymptoticproperty2}
   &&\lim\limits_{\overline{T}\rightarrow\infty}q^2(0;A,0,\overline{T})\geq(p-1)\ln\overline{c}+\ln\lambda
  \Leftrightarrow \rho-pK\leq (1-p)\overline{c};
  \\[2mm]\label{Asymptoticproperty3}
   &&\lim\limits_{\overline{T}\rightarrow\infty}q^2(0;A,0,\overline{T})
   \leq(p-1)\ln\underline{c}+\ln\lambda
  \Leftrightarrow \rho-pK\geq (1-p)\underline{c};
  \ee
  \be\nonumber
   &&\lim\limits_{\overline{T}\rightarrow\infty}q^3(0;A,0,\overline{T})=
  \left\{
  \begin{array}{ll}
  \ln\lambda+\ln\left(\,{\underline{c}^p\over \rho+p\underline{c}-pK}\,\right),
  &\rho-pK>-p\underline{c};
  \vspace{2mm} \\
  +\infty,
  &\rho-pK\leq-p\underline{c};
  \end{array}
  \right.
  \\[2mm]\label{Asymptoticproperty4}
   &&\lim\limits_{\overline{T}\rightarrow\infty}q^3(0;A,0,\overline{T})\geq(p-1)\ln\underline{c}+\ln\lambda
  \Leftrightarrow \rho-pK\leq (1-p)\underline{c}.
 \ee

\noindent{\it Proof of Theorem \ref{theorem_consumption}.}

\noindent{\underline{Case (1)}} $0\leq
\underline{c}<\overline{c}<\lambda^{1/(1-p)}$.\smallskip

In this case, $(p-1)\ln\overline{c}+\ln\lambda>0$. Since $q_P(T)=0$,
then, when $t$ is close to $T$,
$q_P(t)<(p-1)\ln\overline{c}+\ln\lambda$ and
$c^*(t)=\widetilde{x}_{c,P}^*(q_P(t))=\overline{c}.$ Thus $q_P(t)$
satisfies ODE (\ref{ODE1}) with $A=1/\lambda$ and $\overline{T}=T$
in the interval $[\,\underline{T},T\,]$, until $\underline{T}=0$ or
$q_P(\underline{T})=(p-1)\ln\overline{c}+\ln\lambda$. \smallskip

\noindent{\bf (1.1)} If $\rho-pK<(1-p)\underline{c}\neq0$, solving
ODE (\ref{ODE1}), we obtain $q_P(t)=q^1(t;1/\lambda,\underline{T},T)$
taking the form of (\ref{equ20}) in the interval
$[\,\underline{T},T\,]$. According to (\ref{Asymptoticproperty1}),
$q^1(0;1/\lambda,0,T)>(p-1)\ln\overline{c}+\ln\lambda$ provided $T$
is large enough. Thus, there exists a positive constant $T_{12}$
such that
$q^1(T_{12};1/\lambda,T_{12},T)=(p-1)\ln\overline{c}+\ln\lambda$,
and $T_{12}$ is given in (\ref{equ23}). Hence, we derive
$q_P(t)=q^1(t;1/\lambda,T_{12},T)$, and $c^*(t)=\overline{c}$ in the
interval $[\,T_{12},T\,]$.

Since
$$
  q^\prime_P(T_{12})=\rho-pf_P(q_P(T_{12}),c^*(T_{12}))-pg(x^*_\pi;x^*_\mu,{x_\sigma^*})=\rho-(1-p)\overline{c}-pK<0,
$$
then, when $t<T_{12}$ and $t$ is close to $T_{12}$,
$(p-1)\ln\overline{c}+\ln\lambda<q_P(t)<(p-1)\ln\underline{c}+\ln\lambda$,
and $c^*(t)=\widehat{c}(t)$ taking the form of (\ref{widehat_c}).
Thus $q_P(t)$ satisfies ODE (\ref{ODE2}) with $A=\overline{c}^{p-1}$
and $\overline{T}=T_{12}$ in the interval
$[\,\underline{T},T_{12}\,]$, until $\underline{T}=0$ or
$q_P(\underline{T})=(p-1)\ln\underline{c}+\ln\lambda$ (where we have
used the fact that the sign of $q_P^\prime(t)$ does not change (cf.
Proposition \ref{proposition}), and
$q_P(t)>(p-1)\ln\overline{c}+\ln\lambda$ for any $t<T_{12}$).

Solving ODE (\ref{ODE2}), we obtain
$q_P(t)=q^2(t;\overline{c}^{p-1},\underline{T},T_{12})$ taking the
form of (\ref{equ21}) in the interval $[\,\underline{T},T_{12}\,]$.
According to (\ref{Asymptoticproperty3}),
$q^2(0;\overline{c}^{p-1},0,T_{12})>(p-1)\ln\underline{c}+\ln\lambda$
provided $T$ is large enough. Thus, there exists a positive constant
$T_{123}$ such that
$q^2(T_{123};\overline{c}^{p-1},T_{123},T_{12})=(p-1)\ln\underline{c}+\ln\lambda$,
and $T_{123}$ is given in (\ref{equ24}). Hence,  we derive
$q_P(t)=q^2(t;\overline{c}^{p-1},T_{123},T_{12})$, and
$c^*(t)=\widehat{c}(t)$ in the interval $[\,T_{123},T_{12}\,]$.

Recalling the fact that the sign of $q_P^\prime(t)$ does not change
(cf. Proposition \ref{proposition}), we deduce that $q_P(t)\geq
(p-1)\ln\underline{c}+\ln\lambda,\,c^*(t)=\underline{c}$, and
$q_P(t)$ satisfies ODE (\ref{ODE3}) with $A=\underline{c}^{p-1}$ and
$\overline{T}=T_{123}$ in the interval $[\,0,T_{123}\,]$. Solving
ODE (\ref{ODE3}), we have
$q_P(t)=q^3(t;\underline{c}^{p-1},0,T_{123})$ as in
(\ref{equ22}).\smallskip

\noindent {\bf (1.2)} If $\rho-pK<(1-p)\underline{c}=0$ or
$(1-p)\underline{c}\leq \rho-pK<(1-p)\overline{c}$, repeating the
same argument as in Case (1.1), we have
$q_P(t)=q^1(t;1/\lambda,T_{12},T)$ as in (\ref{equ20}), and
$c^*=\overline{c}$ in the interval $[\,T_{12},T\,]$, and
$q_P(t)=q^2(t;\overline{c}^{p-1},\underline{T},T_{12})$ as in
(\ref{equ21}) until $\underline{T}=0$ or
$q_P(\underline{T})=(p-1)\ln\underline{c}+\ln\lambda$.

In the case of $\rho-pK\leq (1-p)\underline{c}=0$,
$(p-1)\ln\underline{c}+\ln\lambda=+\infty>q^2(t;\overline{c}^{p-1},0,T_{12})$,
and $\underline{T}=0$. In the other case, since $\rho-pK>0$ and
$\rho-pK-(1-p)\overline{c}<0$, then (\ref{equ21}) implies that
$$
  q^2(t;\overline{c}^{p-1},0,T_{12})<\ln\lambda+(1-p)\ln{1-p\over \rho-pK}
  \leq \ln\lambda+(1-p)\ln{1\over
  \underline{c}}=(p-1)\ln\underline{c}+\ln\lambda,\;\;\forall\;t\in [\,0,T_{12}\,].
$$
Thus, we deduce that $\underline{T}=0$. Therefore,
$q_P(t)=q^2(t;\overline{c}^{p-1},0,T_{12})$  and
$c^*(t)=\widehat{c}(t)$ in the interval $[\,0,T_{12}\,]$.\smallskip

\noindent{\bf (1.3)} If $(1-p)\overline{c}\leq
\rho-pK<\lambda\overline{c}^p-p\overline{c}$, solving ODE
(\ref{ODE1}), we have $q_P(t)=q^1(t;1/\lambda,\underline{T},T)$ as in
(\ref{equ20}) until  $\underline{T}=0$ or
$q_P(\underline{T})=(p-1)\ln\overline{c}+\ln\lambda$. Since
$$
  q^1(t;1/\lambda,0,T)<\ln\lambda+\ln{\overline{c}^p\over \rho+p\overline{c}-pK}
  \leq \ln\lambda+\ln\overline{c}^{p-1}=(p-1)\ln\overline{c}+\ln\lambda,\;\;\forall\;t\in
  [\,0,T\,],
$$
then $q_P(t)=q^1(t;1/\lambda,0,T)$  and $c^*(t)=\overline{c}$ in the
interval $[\,0,T\,]$.\smallskip

\noindent{\bf (1.4)} If $\rho-pK\geq
\lambda\overline{c}^p-p\overline{c}$, solving ODE (\ref{ODE1}), we
derive that  $q_P(t)=q^1(t;1/\lambda,\underline{T},T)$ until
$\underline{T}=0$ or
$q_P(\underline{T})=(p-1)\ln\overline{c}+\ln\lambda$.

Since $\rho+p\overline{c}-pK\geq0$ and
$\rho+p\overline{c}-pK-\lambda\overline{c}^p\geq0$, then
$q^1(t;1/\lambda,0,T)$ is nondecreasing with respect to $t$, thus
for $t\in[\,0,T\,]$, $q^1(t;1/\lambda,0,T)\leq
q^1(T;1/\lambda,0,T)=0<(p-1)\ln\overline{c}+\ln\lambda$. Hence,
$q_P(t)=q^1(t;1/\lambda,0,T)$  and $c^*(t)=\overline{c}$ in the
interval $[\,0,T\,]$. \medskip

\noindent{\underline{Case (2)}} $0\leq
\underline{c}<\overline{c}=\lambda^{1/(1-p)}$. In this case, note
that $(p-1)\ln\overline{c}+\ln\lambda=0$.\smallskip

\noindent{\bf (2.1)} If $\rho-pK<(1-p)\underline{c}\neq0$, since
$$
 q_P^\prime(T-0)=\rho-pf_P(q_P(T),c^*(T))-pg(x^*_\pi;x^*_\mu,{x_\sigma^*})=\rho-(1-p)\overline{c}-pK\leq \rho-pK-(1-p)\underline{c}<0,
$$
then, when $t$ is close to $T$,
$q_P(t)>0=(p-1)\ln\overline{c}+\ln\lambda,\,q_P(t)<(p-1)\ln\underline{c}+\ln\lambda$,
and $c^*(t)=\widehat{c}(t)$. Thus $q_P(t)$ satisfies ODE
(\ref{ODE2}) with $A=1/\lambda$ and $\overline{T}=T$ in the interval
$[\,\underline{T},T\,]$, until $\underline{T}=0$ or
$q_P(\underline{T})=(p-1)\ln\underline{c}+\ln\lambda$.

Solving ODE (\ref{ODE2}), we obtain
$q_P(t)=q^2(t;1/\lambda,\underline{T},T)$ as in (\ref{equ21}) in the
interval $[\,\underline{T},T\,]$. According to
(\ref{Asymptoticproperty3}),
$q^2(0;1/\lambda,0,T)>(p-1)\ln\underline{c}+\ln\lambda$ provided
that $T$ is large enough. Thus, there exists a positive constant
$T_{23}$ such that
$q^2(T_{23};1/\lambda,T_{23},T)=(p-1)\ln\underline{c}+\ln\lambda$,
and $T_{23}$ is given in (\ref{equ25}). Hence,  we derive that
$q_P(t)=q^2(t;1/\lambda,T_{23},T)$, and $c^*(t)=\widehat{c}(t)$ in the
interval $[\,T_{23},T\,]$.

Recalling the fact that the sign of $q_P^\prime(t)$ does not change,
we deduce that in the interval $[\,0,T_{23}\,]$, $q_P(t)\geq
q_P(T_{23})= (p-1)\ln\underline{c}+\ln\lambda,\,c^*(t)=\underline{c}$,
and $q_P(t)$ satisfies ODE (\ref{ODE3}) with $A=\underline{c}^{p-1}$
and $\overline{T}=T_{23}$. Solving ODE (\ref{ODE3}), we have
$q_P(t)=q^3(t;\underline{c}^{p-1},0,T_{23})$ as in (\ref{equ22}) in
the interval $[\,0,T_{23}\,]$.\smallskip

\noindent{\bf (2.2)} If $\rho-pK<(1-p)\underline{c}=0$ or
$(1-p)\underline{c}\leq \rho-pK<(1-p)\overline{c}$, since
$$
 q_P^\prime(T-0)=\rho-pf_P(q_P(T),c^*(T))-pg(x^*_\pi;x^*_\mu,{x_\sigma^*})=\rho-(1-p)\overline{c}-pK<0
$$
still holds, then repeating the similar argument as in Case (2.1),
we deduce that $c^*(t)=\widehat{c}(t)$ and
$q_P(t)=q^2(t;1/\lambda,\underline{T},T)$ in the interval
$[\,\underline{T},T\,]$, until $\underline{T}=0$ or
$q_P(\underline{T})=(p-1)\ln\underline{c}+\ln\lambda$.

In the case of $\rho-pK\leq(1-p)\underline{c}=0$,
$(p-1)\ln\underline{c}+\ln\lambda=+\infty>q^2(t;1/\lambda,0,T)$, and
$\underline{T}=0$. In the other case, since $\rho-pK>0$ and
$(\rho-pK)-(1-p)\lambda^{1/(1-p)}=(\rho-pK)-(1-p)\overline{c}<0$,
then for any $t\in [\,0,T\,]$, we still have \be\label{case2.2}
  q^2(t;1/\lambda,0,T)<\ln\lambda+(1-p)\ln{1-p\over \rho-pK}
  \leq \ln\lambda+(1-p)\ln{1\over
  \underline{c}}=(p-1)\ln\underline{c}+\ln\lambda.
\ee Therefore, $q_P(t)=q^2(t;1/\lambda,0,T)$  and
$c^*(t)=\widehat{c}(t)$ in the interval $[\,0,T\,]$.\smallskip

\noindent{\bf (2.3)} If $\rho-pK\geq(1-p)\overline{c}$. We first
discuss the case when $\rho-pK>(1-p)\overline{c}$. Combining the
following calculation
$$
 q_P^\prime(T-0)=\rho-pf_P(q_P(T),c^*(T))-pg(x^*_\pi;x^*_\mu,{x_\sigma^*})=\rho-(1-p)\overline{c}-pK>0,
$$
and the fact that the sign of $q_P^\prime(t)$ does not change, we
drive that $q_P(t)<q_P(T)=0=(p-1)\ln\overline{c}+\ln\lambda,
c^*(t)=\overline{c}$, and $q_P(t)$ satisfies ODE (\ref{ODE1}) with
$A=1/\lambda$ and $\overline{T}=T$ in the interval $[\,0,T\,]$.
Solving ODE (\ref{ODE1}), we have $q_P(t)=q^1(t;1/\lambda,0,T)$ and
$c^*(t)=\overline{c}$ in the interval $[\,0,T\,]$.

On the other hand, if $\rho-pK=(1-p)\overline{c}$, then
$\rho-pK=\lambda\overline{c}^p-p\overline{c}$, and for
$t\in[\,0,T\,]$, we have $q_P(t)=0$, thus still have
$q_P(t)=q^1(t;1/\lambda,0,T)$.\medskip

\noindent{\underline{Case (3)}}
$0\leq\underline{c}<\lambda^{1/(1-p)}<\overline{c}$.\smallskip

In this case, note that
$(p-1)\ln\overline{c}+\ln\lambda<0<(p-1)\ln\underline{c}+\ln\lambda$.
Since $q_P(T)=0$, then, when $t$ is close to $T$,
$(p-1)\ln\overline{c}+\ln\lambda<q_P(t)<(p-1)\ln\underline{c}+\ln\lambda$,
$c^*(t)=\widehat{c}(t)$ and  $q_P(t)$ satisfies ODE (\ref{ODE2})
with $A=1/\lambda$ and $\overline{T}=T$ in the interval
$[\,\underline{T},T\,]$, until $\underline{T}=0$ or
$q_P(\underline{T})=(p-1)\ln\overline{c}+\ln\lambda$ or
$q_P(\underline{T})=(p-1)\ln\underline{c}+\ln\lambda$.\smallskip

\noindent{\bf (3.1)} If $\rho-pK<(1-p)\underline{c}\neq0$, solving
ODE (\ref{ODE2}), we have $q_P(t)=q^2(t;1/\lambda,\underline{T},T)$
and $c^*(t)=\widehat{c}(t)$ in the interval $[\,\underline{T},T\,]$.

Since
$$
  q_P^\prime(T-0)=\rho-pf_P(q_P(T),c^*(T))-pg(x^*_\pi;x^*_\mu,{x_\sigma^*})
  =\rho-(1-p)\lambda^{1/(1-p)}-pK<\rho-(1-p)\underline{c}-pK<0,
$$
then we deduce $q_P(t)$ is nonincreasing with respect to $t$ from the
fact that the sign of $q_P^\prime(t)$ does not change. Hence,  we have
$q_P(t)>(p-1)\ln\overline{c}+\ln\lambda$ for any $t\in[\,0,T\,]$.
Moreover, (\ref{Asymptoticproperty3}) implies that
$q^2(0;1/\lambda,0,T)>(p-1)\ln\underline{c}+\ln\lambda$ provided
that $T$ is large enough. Thus, there exists a positive constant
$T_{23}$ such that
$q^2(T_{23};1/\lambda,T_{23},T)=(p-1)\ln\underline{c}+\ln\lambda$,
and $T_{23}$ is given in (\ref{equ25}). Hence,  we derive that
$q_P(t)=q^2(t;1/\lambda,T_{23},T)$, and $c^*(t)=\widehat{c}(t)$ in the
interval $[\,T_{23},T\,]$.

Since
$$
  q_P^\prime(T_{23})=\rho-pf_P(q_P(T_{23}),c^*(T_{23}))-pg(x^*_\pi;x^*_\mu,{x_\sigma^*})
  =\rho-(1-p)\underline{c}-pK<0,
$$
then for any $t\in[\,0,T_{23})$, we have
$q_P(t)>(p-1)\ln\underline{c}+\ln\lambda$ , and $q_P(t)$ satisfies ODE
(\ref{ODE3}) with $A=\underline{c}^{p-1}$ and $\overline{T}=T_{23}$
in the interval $[\,0,T_{23}\,]$. Solving ODE (\ref{ODE3}), we
obtain $q_P(t)=q^3(t;\underline{c}^{p-1},0,T_{23})$ and
$c^*(t)=\underline{c}$ in the interval $[\,0,T_{23}\,]$.\smallskip

\noindent{\bf (3.2)} If $\rho-pK<(1-p)\underline{c}=0$ or
$(1-p)\underline{c}\leq\rho-pK<(1-p)\lambda^{1/(1-p)}$, repeating
the similar argument as in case (3.1), we deduce that
$q_P(t)=q^2(t;1/\lambda,\underline{T},T),\,c^*(t)=\widehat{c}(t)$ in
the interval $[\,\underline{T},T\,]$, until  $\underline{T}=0$ or
$q_P(\underline{T})=(p-1)\ln\underline{c}+\ln\lambda$.

For the case of $\rho-pK\leq (1-p)\underline{c}=0$,
$(p-1)\ln\underline{c}+\ln\lambda=+\infty>q^2(t;1/\lambda,0,T)$, and
$\underline{T}=0$. For the other case, since $\rho-pK>0$ and
$\rho-pK-(1-p)\lambda^{1/(1-p)}<0$, then (\ref{case2.2}) still
holds. Therefore, $q_P(t)=q^2(t;1/\lambda,0,T)$ and
$c^*(t)=\widehat{c}(t)$ in the interval $[\,0,T\,]$.\smallskip

\noindent{\bf (3.3)} If $(1-p)\lambda^{1-p}\leq\rho-pK\leq
(1-p)\overline{c}$, solving ODE (\ref{ODE2}), we have
$q_P(t)=q^2(t;1/\lambda,\underline{T},T)$ and $c^*(t)=\widehat{c}(t)$
in the interval $[\,\underline{T},T\,]$. Since $\rho-pK>0$ and
$\rho-pK-(1-p)\lambda^{1-p}\geq 0$, then $q^2(t;1/\lambda,0,T)$ is
nondecreasing and for $t\in[\,0,T\,]$, we have \bee
  (p-1)\ln\underline{c}+\ln\lambda&>&0=q^2(T;1/\lambda,0,T)\geq q^2(t;1/\lambda,0,T)
  \geq(1-p)\ln\left(\,{1-p\over\rho-pK}\,\right)+\ln\lambda
  \\[2mm]
  &\geq&
  (1-p)\ln{1\over\overline{c}}+\ln\lambda=(p-1)\ln\overline{c}+\ln\lambda.
\eee Therefore, $q_P(t)=q^2(t;1/\lambda,0,T)$ and
$c^*(t)=\widehat{c}(t)$ in the interval  $[\,0,T\,]$.\smallskip

\noindent{\bf (3.4)} If $\rho-pK>(1-p)\overline{c}$,  solving ODE
(\ref{ODE2}), we have $q_P(t)=q^2(t;1/\lambda,\underline{T},T)$ and
$c^*(t)=\widehat{c}(t)$ in the interval $[\,\underline{T},T\,]$.

Since
$$
  q_P^\prime(T-0)=\rho-pf_P(q_P(T),c^*(T))-pg(x^*_\pi;x^*_\mu,{x_\sigma^*})
  =\rho-(1-p)\lambda^{1/(1-p)}-pK>\rho-(1-p)\overline{c}-pK>0,
$$
then we deduce $q_P(t)$ is nondecreasing with respect to $t$ from the
fact that the sign of $q_P^\prime(t)$ does not change. Hence,  we have
$q_P(t)<(p-1)\ln\underline{c}+\ln\lambda$ for any $t\in[\,0,T\,]$.
Moreover, (\ref{Asymptoticproperty2}) implies that
$q^2(0;1/\lambda,0,T)<(p-1)\ln\overline{c}+\ln\lambda$ provided that
$T$ is large enough. Thus, there exists a positive constant $T_{21}$
such that
$q^2(T_{21};1/\lambda,T_{21},T)=(p-1)\ln\overline{c}+\ln\lambda$,
and $T_{21}$ is given in (\ref{equ26}). Hence,  we derive that
$q_P(t)=q^2(t;1/\lambda,T_{21},T)$, and $c^*(t)=\widehat{c}(t)$ in the
interval $[\,T_{21},T\,]$.

Since
$$
  q_P^\prime(T_{21})=\rho-pf_P(q_P(T_{21}),c^*(T_{21}))-pg(x^*_\pi;x^*_\mu,{x_\sigma^*})
  =\rho-(1-p)\overline{c}-pK>0,
$$
then for any $t\in[\,0,T_{12})$, we have
$q_P(t)<(p-1)\ln\overline{c}+\ln\lambda$ , and $q_P(t)$ satisfies ODE
(\ref{ODE1}) with $A=\overline{c}^{p-1}$ and $\overline{T}=T_{12}$
in the interval $[\,0,T_{12}\,]$. Solving ODE (\ref{ODE1}), we
obtain $q_P(t)=q^1(t;\overline{c}^{p-1},0,T_{12})$ and
$c^*(t)=\overline{c}$ in the interval $[\,0,T_{12}\,]$.\medskip

\noindent{\underline{Case (4)}}
$\lambda^{1/(1-p)}=\underline{c}<\overline{c}$. In this case, note
that
$(p-1)\ln\overline{c}+\ln\lambda<0=(p-1)\ln\underline{c}+\ln\lambda$.\smallskip

\noindent{\bf (4.1)} If $\rho-pK\leq (1-p)\underline{c}$, we first
consider the case where $\rho-pK<(1-p)\underline{c}$. Combining the
following calculation
$$
 q_P^\prime(T-0)=\rho-pf_P(q_P(T),c^*(T))-pg(x^*_\pi;x^*_\mu,{x_\sigma^*})
 =\rho-(1-p)\underline{c}-pK<0,
$$ and the fact that  the sign of $q_P^\prime(t)$ does not change, we deduce
that $q_P(t)>0=(p-1)\ln\underline{c}+\ln\lambda,
c^*(t)=\underline{c}$, and $q_P(t)$ satisfies ODE (\ref{ODE3}) with
$A=1/\lambda$ and $\overline{T}=T$ in the interval $[\,0,T\,]$.
Solving ODE (\ref{ODE3}), we have $q_P(t)=q^3(t;1/\lambda,0,T)$ and
$c^*(t)=\underline{c}$ in the interval $[\,0,T\,]$.\smallskip

When $\rho-pK=(1-p)\underline{c}$, it is easy to see that for
$t\in[\,0,T\,]$, we have $q_P(t)=0$, and we still have $q_P(t)$ equal to
$q^3(t;1,0,T)$ and $c^*(t)=\underline{c}$ in the interval
$[\,0,T\,]$.\smallskip

\noindent{\bf (4.2)} If $(1-p)\underline{c}<\rho-pK\leq
(1-p)\overline{c}$, since $q(T)=0=(p-1)\ln\underline{c}+\ln\lambda$,
and
$$
 q_P^\prime(T-0)=\rho-pf_P(q_P(T),c^*(T))-pg(x^*_\pi;x^*_\mu,{x_\sigma^*})
 =\rho-(1-p)\underline{c}-pK>0,
$$ then, when $t$ is close to $T$, we have
$(p-1)\ln\overline{c}+\ln\lambda<q_P(t)<(p-1)\ln\underline{c}+\ln\lambda$.
Thus, $q_P(t)$ satisfies ODE (\ref{ODE2}) with $A=1/\lambda$ and
$\overline{T}=T$ in the interval $[\,\underline{T},T\,]$, until
$\underline{T}=0$ or
$q_P(\underline{T})=(p-1)\ln\overline{c}+\ln\lambda$ or
$q_P(\underline{T})=(p-1)\ln\underline{c}+\ln\lambda$. Recalling the
fact that the sign of $q_P^\prime(t)$ does not change, we deduce that
$q_P(t)$ is nondecreasing with respect to $t$. Thus, it is impossible
that $q_P(\underline{T})=(p-1)\ln\underline{c}+\ln\lambda$ for some
$\underline{T}\in[\,0,T)$.

Solving ODE (\ref{ODE2}), we have
$q_P(t)=q^2(t;1/\lambda,\underline{T},T)$ and $c^*(t)=\widehat{c}(t)$
in the interval $[\,\underline{T},T\,]$. Since
$(\rho-pK)-(1-p)\lambda^{1/(1-p)}=(\rho-pK)-(1-p)\underline{c}>0$,
then $q^2(t;1/\lambda,0,T)$ is increasing with respect to $t$, thus
for $t\in[\,0,T)$, we have \bee
 (p-1)\ln\underline{c}+\ln\lambda&=&0=q^2(T;1/\lambda,0,T)>q^2(t;1/\lambda,0,T)
 >(1-p)\ln{1-p\over\rho-pK}+\ln\lambda
 \\[2mm]
 &\geq&(1-p)\ln{1\over\overline{c}}+\ln\lambda=(p-1)\ln\overline{c}+\ln\lambda.
\eee Therefore, $q_P(t)=q^2(t;1/\lambda,0,T)$ and
$c^*(t)=\widehat{c}(t)$ in the interval $[\,0,T\,]$.\smallskip

\noindent{\bf (4.3)} If $\rho-pK>(1-p)\overline{c}$, repeating the
similar argument as in case (4.2), we deduce that
$q_P(t)=q^2(t;1/\lambda,\underline{T},T)$ and $c^*(t)=\widehat{c}(t)$
in the interval $[\,\underline{T},T\,]$, until $\underline{T}=0$ or
$q_P(\underline{T})=(p-1)\ln\overline{c}+\ln\lambda$.

According to (\ref{Asymptoticproperty2}),
$q^2(0;1/\lambda,0,T)<(p-1)\ln\overline{c}+\ln\lambda$ provided that
$T$ is large enough. Thus, there exists a positive constant $T_{21}$
such that
$q^2(T_{21};1/\lambda,T_{21},T)=(p-1)\ln\overline{c}+\ln\lambda$,
and $T_{21}$ is given in (\ref{equ26}). Hence,  we derive that
$q_P(t)=q^2(t;1/\lambda,T_{21},T)$, and $c^*(t)=\widehat{c}(t)$ in the
interval $[\,T_{21},T\,]$.

Combining
$$
 q_P^\prime(T_{21})=\rho-pf_P(q_P(T_{21}),c^*(T_{21}))-pg(x^*_\pi;x^*_\mu,{x_\sigma^*})
 =\rho-pK-(1-p)\overline{c}>0,
$$
and the fact that the sign of $q_P^\prime(t)$ does not change, we
deduce that  in the interval $[\,0,T_{21})$,
$q_P(t)<(p-1)\ln\overline{c}+\ln\lambda,\,c^*(t)=\overline{c}$, and
$q_P(t)$ satisfies ODE (\ref{ODE1}) with $A=\overline{c}^{p-1}$ and
$\overline{T}=T_{21}$. Solving ODE (\ref{ODE1}), we obtain
$q_P(t)=q^1(t;\overline{c}^{p-1},0,T_{21})$ in the interval
$[\,0,T_{21}\,]$.\medskip

\noindent{\underline{Case (5)} }
$\lambda^{1/(1-p)}<\underline{c}<\overline{c}$.\smallskip

Since $q_P(T)=0>(p-1)\ln\underline{c}+\ln\lambda$, then $q_P(t)$
satisfies ODE (\ref{ODE3}) with $A=1/\lambda$ and $\overline{T}=T$
in the interval $[\,\underline{T},T\,]$, until $\underline{T}=0$ or
$q_P(\underline{T})=(p-1)\ln\underline{c}+\ln\lambda$. Solving ODE
(\ref{ODE3}), we obtain $q_P(t)=q^3(t;1/\lambda,\underline{T},T)$ and
$c^*(t)=\underline{c}$ in the interval
$[\,\underline{T},T\,]$.\smallskip

\noindent{\bf (5.1)} If
$\rho-pK\leq\lambda\underline{c}^p-p\underline{c}$, then
$\rho+p\underline{c}-pK-\lambda\underline{c}^p\leq 0$, and
$q^3(t;1/\lambda,0,T)$ is nonincreasing with respect to $t$, thus
for $t\in[\,0,T\,]$, we have $q^3(t;1/\lambda,0,T)\geq
q^3(T;1/\lambda,0,T)=0>(p-1)\ln\underline{c}+\ln\lambda$. Therefore,
$q_P(t)=q^3(t;1/\lambda,0,T)$  and $c^*(t)=\underline{c}$ in the
interval $[\,0,T\,]$.\smallskip

\noindent{\bf (5.2)} If
$\lambda\underline{c}^p-p\underline{c}<\rho-pK\leq(1-p)\underline{c}$,
then $\rho+p\underline{c}-pK-\lambda\underline{c}^p>0$, and
$$
  q^3(t;1/\lambda,0,T)\geq \ln\lambda+\ln{\underline{c}^p\over \rho+p\underline{c}-pK}
  \geq
  \ln{\underline{c}^p\over\underline{c}}+\ln\lambda=(p-1)\ln\underline{c}+\ln\lambda,\;\;\forall\;
  t\in[\,0,T\,].
$$
Therefore we still have $q_P(t)=q^3(t;1/\lambda,0,T)$ and
$c^*(t)=\underline{c}$ in the interval $[0,T]$.\smallskip

\noindent{\bf (5.3)} If
$(1-p)\underline{c}<\rho-pK\leq(1-p)\overline{c}$, then
(\ref{Asymptoticproperty4}) implies that
$q^3(0;1/\lambda,0,T)<(p-1)\ln\underline{c}+\ln\lambda$ provided
that $T$ is large enough. Thus, there exists a positive constant
$T_{32}$ such that
$q^3(T_{32};1/\lambda,T_{32},T)=(p-1)\ln\underline{c}+\ln\lambda$,
and $T_{32}$ is given in (\ref{equ27}). Hence,  we derive
$q_P(t)=q^3(t;1/\lambda,T_{32},T)$, and $c^*(t)=\underline{c}$ in the
interval $[\,T_{32},T\,]$.

Since
$$
  q_P^\prime(T_{32})=\rho-pf_P(q_P(T_{32}),c^*(T_{32}))-pg(x^*_\pi;x^*_\mu,{x_\sigma^*})=\rho-(1-p)\underline{c}-pK>0,
$$
then, when $t<T_{32}$ and $t$ is close to $T_{32}$, we have
$q_P(t)<(p-1)\ln\underline{c}+\ln\lambda$ and
$q_P(t)>(p-1)\ln\overline{c}+\ln\lambda$, and $q_P(t)$ is
nondecreasing with respect to $t$, and $q_P(t)$ satisfies ODE
(\ref{ODE2}) with $A=\underline{c}^{p-1}$ and $\overline{T}=T_{32}$
in the interval $[\,\underline{T},\,T_{32}\,]$, until
$\underline{T}=0$ or
$q_P(\underline{T})=(p-1)\ln\overline{c}+\ln\lambda$. Solving ODE
(\ref{ODE2}), we obtain
$q_P(t)=q^2(t;\underline{c}^{p-1},\underline{T},T_{32})$ and
$c^*(t)=\widehat{c}(t)$ in the interval
$[\,\underline{T},\,T_{32}\,]$.

Since $\rho-pK-(1-p)\lambda^{1/(1-p)}>\rho-pK-(1-p)\underline{c}>0$,
then in this case $q^2(t;\underline{c}^{p-1},0,T_{32})$ is
increasing with respect to $t$, thus for $t\in[\,0,T_{32})$, we have
\bee
  (p-1)\ln\underline{c}+\ln\lambda&=&q^2(T_{32};\underline{c}^{p-1},0,T_{32})
  > q^2(t;\underline{c}^{p-1},0,T_{32})>\ln\lambda+(1-p)\ln{1-p\over\rho-pK}
  \\[2mm]
  &\geq&
  (1-p)\ln{1\over\overline{c}}+\ln\lambda=(p-1)\ln\overline{c}+\ln\lambda.
\eee Therefore, $q_P(t)=q^2(t;\underline{c}^{p-1},0,T_{32})$ and
$c^*(t)=\widehat{c}(t)$ in the interval
$[\,0,\,T_{32}\,]$.\smallskip

\noindent{\bf (5.4)} If $\rho-pK>(1-p)\overline{c}$, repeating the
similar argument as in case (5.3), we deduce that
$q_P(t)=q^3(T_{32};1/\lambda,T_{32},T)$, and $c^*(t)=\underline{c}$ in
the interval $[\,T_{32},T\,]$, and
$q_P(t)=q^2(t;\underline{c}^{p-1},\underline{T},T_{32})$ and
$c^*(t)=\widehat{c}(t)$ in the interval
$[\,\underline{T},\,T_{32}\,]$, until  $\underline{T}=0$ or
$q_P(\underline{T})=(p-1)\ln\overline{c}+\ln\lambda$.

According to (\ref{Asymptoticproperty2}),
$q^2(0;\underline{c}^{p-1},0,T_{32})<(p-1)\ln\overline{c}+\ln\lambda$
provided  that $T$ is large enough. Thus, there exists a positive
constant $T_{321}$ such that
$q^2(T_{321};\underline{c}^{p-1},T_{321},T_{32})=(p-1)\ln\overline{c}+\ln\lambda$,
and $T_{321}$ is given in (\ref{equ28}). Hence,  we derive that
$q_P(t)=q^2(t;\underline{c}^{p-1},T_{321},T_{32})$, and
$c^*(t)=\widehat{c}(t)$ in the interval $[\,T_{321},T_{32}\,]$.

Combining
$$
  q_P^\prime(T_{321})=\rho-pf_P(q_P(T_{321}),c^*(T_{321}))-pg(x^*_\pi;x^*_\mu,{x_\sigma^*})=\rho-(1-p)\overline{c}-pK>0,
$$
and the fact that the sign of $q_P^\prime(t)$ does not change sign, we
deduce  that $q_P(t)$ is nondecreasing with respect to $t$, and
$q_P(t)<(p-1)\ln\overline{c}+\ln\lambda$ for any $t\in[\,0,T_{321})$.
Thus, $c^*(t)=\overline{c}$ and $q_P(t)$ satisfies ODE (\ref{ODE1}) in
the interval $t\in[\,0,T_{321})$. Solving ODE (\ref{ODE1}), we
obtain $q_P(t)=q^1(t;\overline{c}^{p-1},0,T_{321})$ in the interval
$t\in[\,0,T_{321}\,]$. $\hfill{} \Box$

%%%%%%%%%%%%%%%%%%%%%%%%%%%%%%%%%%%%%%%%%%%%%%%%%%%%%%%%%%%55

\end{document}